\newtheorem{theorem}{Theorem}
\newtheorem{corollary}{Corollary}
\newtheorem{lemma}{Lemma}
\newtheorem{proposition}{Proposition}
\newtheorem{definition}{Definition}
\newtheorem{remark}{Remark}
\newtheorem{example}{Example}
\newtheorem{claim}{Claim}
\newcommand{\comment}[1]{}
\begin{document}
%
\title{Completing Low-Rank Matrices with Corrupted Samples from Few Coefficients in General Basis}
%
%
%

\author{Hongyang Zhang$^{\dag\ddag}$,
        \and Zhouchen Lin$^{\ddag*}$,
        \and Chao Zhang$^{\ddag*}$
        \\$^\dag$Machine Learning Department, School of Computer Science, Carnegie Mellon University
        \\$^\ddag$Key Laboratory of Machine Perception (MOE), School of EECS, Peking University
        \\$^*$Cooperative Medianet Innovation Center, Shanghai Jiao Tong University
        \\hongyanz@cs.cmu.edu, zlin@pku.edu.cn, chzhang@cis.pku.edu.cn
}

\maketitle

\begin{abstract}
Subspace recovery from corrupted and missing data is crucial for various applications in signal processing and information theory. To complete missing values and detect column corruptions, existing robust Matrix Completion (MC) methods mostly concentrate on recovering a low-rank matrix from few corrupted coefficients w.r.t. standard basis, which, however, does not apply to more general basis, e.g., Fourier basis. In this paper, we prove that the range space of an $m\times n$ matrix with rank $r$ can be exactly recovered from few coefficients w.r.t. \emph{general basis}, though $r$ and the number of corrupted samples are both as high as $O(\min\{m,n\}/\log^3 (m+n))$. Our model covers previous ones as special cases, and robust MC can recover the intrinsic matrix with a higher rank. Moreover, we suggest a universal choice of the regularization parameter, which is $\lambda=1/\sqrt{\log n}$. By our $\ell_{2,1}$ filtering algorithm, which has theoretical guarantees, we can further reduce the computational cost of our model. As an application, we also find that the solutions to extended robust Low-Rank Representation and to our extended robust MC are mutually expressible, so both our theory and algorithm can be applied to the subspace clustering problem with missing values under certain conditions. Experiments verify our theories.
\end{abstract}

\begin{IEEEkeywords}
Robust Matrix Completion, General Basis, Subspace Recovery, Outlier Detection, $\ell_{2,1}$ Filtering Algorithm
\end{IEEEkeywords}

%
\IEEEpeerreviewmaketitle

\section{Introduction}
\label{section: introduction}
We are now in an era of big and high-dimensional data. Unfortunately, due to the storage difficulty and the computational obstacle, we can measure only a few entries from the data matrix. So restoring all of the information that the data carry through the partial measurements is of great interest in data analysis. This challenging problem is also known as the Matrix Completion (MC) problem, which is highly related to the so-called recommendation system, where one tries to predict unrevealed users' preference according to the incomplete rating feedback. Admittedly, this inverse problem is ill-posed as there should be infinite number of feasible solutions. Fortunately, most of the data are structured, e.g., face~\cite{LiuG1}, texture~\cite{Ma2}, and motion~\cite{Gear,Yan,Rao}. They typically lie around low-dimensional subspaces. Because the rank of data matrix corresponds to the dimensionality of subspace, recent work~\cite{Candes2009exact,Candes2010power,Gross2011recovering,recht2011simpler} in convex optimization demonstrates a remarkable fact: it is possible to exactly complete an $m\times n$ matrix of rank $r$, if the number of randomly selected matrix elements is no less than $O((m+n)r\log^2 (m+n))$. 

Yet it is well known that the traditional MC model suffers from the robustness issue. It is even sensitive to minor corruptions, which commonly occur due to sensor failures and uncontrolled environments. In the recommendation system, for instance, malicious manipulation of even a single rater might drive the output of MC algorithm far from the ground truth. To resolve the issue, several efforts have been devoted to robustifying the MC model, among which robust MC~\cite{ICML2011Chen_469} is the one with solid theoretical analysis. Chen et al.~\cite{ICML2011Chen_469} proved that robust MC is able to exactly recover the ground truth subspace and detect the column corruptions (i.e., some entire columns are corrupted by noises), if the dimensionality of subspace is not too high and the corrupted columns are sparse compared with the input size. Most importantly, the observed expansion coefficients should be sufficient w.r.t. the standard matrix basis $\{e_ie_j^T\}_{ij}$ (please refer to Table \ref{table: notations} for explanation of
notations).

However, recent advances in theoretical physics measure quantum-state entries by tomography w.r.t. the Pauli basis, which is rather different from the standard matrix one~\cite{Gross2011recovering}. So it is not very straightforward to apply the existing theory on robust MC to such a special case. This paper tries to resolve the problem. More generally, we demonstrate the exact recoverability of an extended robust MC model in the presence of only a few coefficients w.r.t. a set of general basis, although some columns of the intrinsic matrix might be arbitrarily corrupted. By applying our $\ell_{2,1}$ filtering algorithm which has theoretical guarantees, we are able to speed up solving the model numerically. There are various applications of our results.

\subsection{Practical Applications}
\label{subsection: application}
In numerical analysis, instead of the standard polynomial basis $\{x^k\}_{k=0}^n$, the Legendre polynomials are widely used to represent smooth functions due to their orthogonality. Such expansions, however, are typically sensitive to perturbation: a small perturbation of the function might arbitrarily drive the fitting result far from its original. Moreover, to reduce the storage and the computational costs, sometimes we can record only a few expansion coefficients. To complete the missing values and get the outliers removed, this paper justifies the possibility of doing so.

In digital signal processing, one usually samples the signals, e.g., voices and feature vectors, at random in the Fourier basis. However, due to sensor failure, a group of signals that we capture may be rather unreliable. To recover the intrinsic information that the signals carry and remove the outliers simultaneously, our theoretical analysis guarantees the success of robust MC w.r.t. the Fourier basis.

In quantum information theory, to obtain a maximum likelihood estimation of a quantum state of 8 ions, one typically requires hundred of thousands of measurements w.r.t. the Pauli basis, which are unaffordable because of high experimental costs. To overcome the difficulty, Gross~\cite{Gross2011recovering} compressed the number of observations w.r.t. any basis by an MC model. However, their model is fragile to severe corruptions, which commonly occurs because of measurement errors. To robustify the model, this paper justifies the exact recoverability of robust MC w.r.t. general basis, even if the datasets are wildly corrupted.

In subspace clustering, one tries to segment the data points according to the subspaces they lie in, which can be widely applied to motion
segmentation~\cite{Gear,Yan,Rao,Costeira,Vidal2004CVPR}, face classification~\cite{LiuG1,Ho2003,Vidal2005PAMI,LiuG3}, system identification~\cite{Vidal2003CDC,ZhangC,Paoletti2007}, and image segmentation~\cite{Yang2008CVIU,Cheng}. Recently, it is of great interest to cluster the subspaces while the observations w.r.t. some coordinates are missing. To resolve the issue, as an application in this paper, our theorem relates robust MC to a certain subspace clustering model -- the so-called extended robust Low-Rank Representation (LRR). Thus one could hope to correctly recover the structure of multiple subspaces, if robust MC is able to complete the unavailable values and remove the outlier samples at an overwhelming probability. This is guaranteed by our paper.

\subsection{Related Work}
Suppose that $L_0$ is an $m\times n$ data matrix of rank $r$ whose columns are sample points, and entries are partially observed among the set $\mathcal{K}_{obs}$. The MC problem aims at exactly recovering $L_0$, or the range space of $L_0$, from the measured elements. Probably the most well-known MC model was proposed by Cand\`{e}s et al.~\cite{Candes2009exact}. To choose the lowest-rank matrix so as to fit the observed entries, the original model is formulated as
\begin{equation}
\label{equ: original MC}
\min_L \mbox{rank}(L),\quad\mbox{s.t.}\quad \langle L,e_ie_j^T\rangle=\langle L_0,e_ie_j^T\rangle,\quad(i,j)\in\mathcal{K}_{obs}.
\end{equation}
This model, however, is untractable because problem \eqref{equ: original MC} is NP-hard. Inspired by recent work in compressive sensing, Cand\`{e}s et al. replaced the rank in the objective function with the nuclear norm, which is the sum of singular values and is the convex envelope of rank on the unit ball of matrix operator norm. Namely,
\begin{equation}
\label{equ: relaxed MC}
\min_L \|L\|_*,\quad\mbox{s.t.}\quad \langle L,e_ie_j^T\rangle=\langle L_0,e_ie_j^T\rangle,\quad(i,j)\in\mathcal{K}_{obs}.
\end{equation}
It is worth noting that model \eqref{equ: relaxed MC} is only w.r.t. the standard matrix basis $\{e_ie_j^T\}_{ij}$. To extend the model to any basis $\{\omega_{ij}\}_{ij}$, Gross~\cite{Gross2011recovering} proposed a more general MC model:
\begin{equation}
\label{equ: general basis MC}
\min_L \|L\|_*,\quad\mbox{s.t.}\quad \langle L,\omega_{ij}\rangle=\langle L_0,\omega_{ij}\rangle,\quad(i,j)\in\mathcal{K}_{obs}.
\end{equation}
Models \eqref{equ: relaxed MC} and \eqref{equ: general basis MC} both have solid theoretical guarantees: recent work~\cite{Candes2009exact,Candes2010power,Gross2011recovering,recht2011simpler} showed that the models are able to exactly recover the ground truth $L_0$ by an overwhelming probability, if $\mathcal{K}_{obs}$ is uniformly distributed among all sets of cardinality $O((m+n)r\log^2 (m+n))$.
Unfortunately, these traditional MC models suffer from the robustness issue: they are even sensitive to minor corruptions, which commonly occurs due to sensor failures, uncontrolled environments, etc.

A parallel study to the MC problem is the so-called matrix recovery, namely, recovering underlying data matrix $L_0$, or the range space of $L_0$, from the corrupted data matrix $M=L_0+S_0$, where $S_0$ is the noise. Probably the most widely used one is Principal Component Analysis (PCA). However, PCA is fragile to outliers. Even a single but severe corruption may wildly degrade the performance of PCA. To resolve the issue, much work has been devoted to robustifying PCA~\cite{gnanadesikan1972robust,huber2011robust,fischler1981random,Torre,ke2005robust,mccoy2011two,zhang2014novel,lerman2014robust,hardt2012algorithms}, among which a simple yet successful model to remove column corruptions is robust PCA via Outlier Pursuit:
\begin{equation}
\label{equ: original outlier pursuit}
\min_{L,S} \mbox{rank}(L)+\lambda\|S\|_{2,1},\quad\mbox{s.t.}\quad M=L+S,
\end{equation}
and its convex relaxation
\begin{equation}
\label{equ: relaxed outlier pursuit}
\min_{L,S} \|L\|_*+\lambda\|S\|_{2,1},\quad\mbox{s.t.}\quad M=L+S.
\end{equation}
Outlier Pursuit has theoretical guarantees: Xu et al.~\cite{Xu:CSPCP} and our previous work~\cite{Zhang2015AAAI} proved that when the dimensionality of ground truth subspace is not too high and the column-wise corruptions are sparse compared with the sample size, Outlier Pursuit is able to recover the range space of $L_0$ and detect the non-zero columns of $S_0$ at an overwhelming probability. Nowadays, Outlier Pursuit has been widely applied to subspace clustering~\cite{Zhang:RobustLatLRR}, image alignment~\cite{Peng}, texture
representation~\cite{ZhangZ}, etc. Unfortunately, the model cannot handle the case of missing values, which significantly limits its working range in practice.

It is worth noting that the pros and cons of above-mentioned MC and Outlier Pursuit are mutually complementary. To remedy both of their limitations, recent work~\cite{ICML2011Chen_469} suggested combining the two models together, resulting in robust MC -- a model that could complete the missing values and detect the column corruptions simultaneously. Specifically, it is formulated as
\begin{equation}
\label{equ: original robust MC}
\begin{split}
&\min_{L,S} \mbox{rank}(L)+\lambda\|S\|_{2,1},\\&\ \mbox{s.t.}\quad \langle M,e_ie_j^*\rangle=\langle L+S,e_ie_j^*\rangle,\quad(i,j)\in\mathcal{K}_{obs}.
\end{split}
\end{equation}
Correspondingly, the relaxed form is
\begin{equation}
\label{equ: relaxed robust MC}
\begin{split}
&\min_{L,S} \|L\|_*+\lambda\|S\|_{2,1},\\&\ \mbox{s.t.}\quad \langle M,e_ie_j^*\rangle=\langle L+S,e_ie_j^*\rangle,\quad(i,j)\in\mathcal{K}_{obs}.
\end{split}
\end{equation}
Chen at al.~\cite{ICML2011Chen_469} demonstrated the recoverability of model \eqref{equ: relaxed robust MC}, namely, if the range space of $L_0$ is low-dimensional, the observed entries are sufficient, and the column corruptions are sparse compared with the input size, one can hope to exactly recover the range space of $L_0$ and detect the corrupted samples by robust MC at an overwhelming probability. It is well reported that robust MC has been widely applied to recommendation system and medical research~\cite{ICML2011Chen_469}. However, the specific basis $\{e_ie_j^T\}_{ij}$ in problem \eqref{equ: relaxed robust MC} limits its extensible applications to more challenging tasks, such as those discussed in Section \ref{subsection: application}.

\subsection{Our Contributions}
In this paper, we extend robust MC to more general cases, namely, the expansion coefficients are observed w.r.t. a set of general basis. We are particularly interested in the exact recoverability of this extended model. Our contributions are as follows:
\begin{itemize}
\item
We demonstrate that the extended robust MC model succeeds at an overwhelming probability. This result broadens the working range of traditional robust MC in three aspects: 1. the choice of basis in our model is not limited to the standard one anymore; 2. with slightly stronger yet reasonable incoherence (ambiguity) conditions, our result allows $\mbox{rank}(L_0)$ to be as high as $O(n/\log^3 n)$ even when the number of corruptions and observations are both constant fraction of the total input size. In comparison with the existing result which requires that $\mbox{rank}(L_0)=O(1)$, our analysis significantly extends the succeeding range of robust MC model; 3. we suggest that the regularization parameter be chosen as $\lambda=1/\sqrt{\log n}$, which is universal.
\item
We propose a so-called $\ell_{2,1}$ filtering algorithm to reduce the computational complexity of our model. Furthermore, we establish theoretical guarantees for our algorithm, which are elegantly relevant to the incoherence of the low-rank component.
\item
As an application, we relate the extended robust MC model to a certain subspace clustering model -- extended robust LRR. So both our theory and our algorithm on the extended robust MC can be applied to the subspace clustering problem if the extended robust MC can exactly recover the data structure.
\end{itemize}

\subsubsection{Novelty of Our Analysis Technique}
In the analysis of the exact recoverability of the model, we novelly divide the proof of Theorem \ref{theorem: exact recovery under Bernoulli sampling} into two parts: The exact recoverability of column support and the exact recoverability of column space. We are able to attack the two problems separately thanks to the idea of expanding the objective function at the well-designed points, i.e., $(\widetilde{L},\widetilde{S})$ for the recovery of column support and $(\widehat{L},\widehat{S})$ for the recovery of column space, respectively (see Sections \ref{subsubsection: dual conditions 1} and \ref{subsubsection: dual conditions 2} for details). This technique enables us to decouple the randomization of $\mathcal{I}_0$ and $\Omega_{obs}$, and so construct the dual variables easily by standard tools like the least squares and golfing scheme. We notice that our framework is general. It not only can be applied to the proof for easier model like Outlier Pursuit~\cite{Zhang2015AAAI} (though we will sacrifice a small polylog factor for the probability of outliers), but can also hopefully simplify the proof for model with more complicated formulation. That is roughly the high-level intuition why we can handle the general basis in this paper.

In the analysis for our $\ell_{2,1}$ filtering algorithm, we take advantage of the low-rank property, namely, we recover a small-sized seed matrix first and then use the linear representation to obtain the whole desired matrix. Our analysis employs tools in recent matrix concentration literature~\cite{Tropp} to bound the size of the seed matrix, which elegantly relates to the incoherence of the underlying matrix. This is definitely consistent with the fact that, for matrix with high incoherence, we typically need to sample more columns in order to fully observe the maximal linearly independent group (see Algorithm \ref{algl_21 filtering} for the procedure).

The remainder of this paper is organized as follows. Section \ref{section: problem setup} describes the problem setup. Section \ref{section: exact recoverability of the model} shows our theoretical results, i.e., the exact recoverability of our model. In Section \ref{section: proof}, we present the detailed proofs of our main results. Section \ref{section: algorithm} proposes a novel $\ell_{2,1}$ filtering algorithm for the extended robust MC model, and establishes theoretical guarantees for the algorithm. We show an application of our analysis to subspace clustering problem, and demonstrate the validity of our theory by experiments in Section \ref{section: applications and experiments}. Finally, Section \ref{section: conclusion} concludes the paper.

\section{Problem Setup}
\label{section: problem setup}
Suppose that $L_0$ is an $m\times n$ data matrix of rank $r$, whose columns are sample points. $S_0\in\mathbb{R}^{m\times n}$ is a noise matrix, whose column support is sparse compared with the input size $n$. Let $M=L_0+S_0$. Its expansion coefficients w.r.t. a set of general basis $\{\omega_{ij}\}_{ij}$, $(i,j)\in\mathcal{K}_{obs}$, are partially observed. This paper considers the exact recovery problem as defined below.
\begin{definition}[Exact Recovery Problem]
\label{def: Exact Recovery Problem} The exact recovery
problem investigates whether the range space of $L_0$ and the
column support of $S_0$ can be exactly recovered from randomly selected coefficients of $M$ w.r.t. general basis, provided that some columns of $M$ are arbitrarily corrupted.
\end{definition}
A similar problem was proposed in \cite{Zhang:Counterexample,Candes}, which recovered the whole matrix $L_0$ and $S_0$ themselves if $S_0$ has element-wise support. However, it is worth noting that one can only hope to recover the range space of $L_0$ and the column support of $S_0$ in Definition \ref{def: Exact Recovery Problem}, because a
corrupted column can be addition of any one vector in the range space
of $L_0$ and another appropriate vector~\cite{Xu:CSPCP,ICML2011Chen_469,Zhang2015AAAI}. Moreover, as existing work mostly concentrates on recovering a low-rank matrix from a sampling of \emph{matrix elements}, our exact recovery problem covers this situation as a special case.



\subsection{Model Formulations}
As our exact recovery problem defines, we study an extended robust MC model w.r.t. a set of general basis. To choose the solution $L$ with the lowest rank, the original model is formulated as
\begin{equation}
\label{equ: original robust MC wrt any basis 1}
\begin{split}
&\min_{L,S} \mbox{rank}(L)+\lambda\|S\|_{2,1},\\&\ \mbox{s.t.}\ \langle M, \omega_{ij}\rangle=\langle L+S, \omega_{ij}\rangle,\ \ (i,j)\in\mathcal{K}_{obs},
\end{split}
\end{equation}
where $\mathcal{K}_{obs}$ is the observation index and $\{\omega_{ij}\}_{i,j=1}^{m,n}$ is a set of ortho-normal bases such that
\begin{equation}
\label{equ: commutative operators}
\mbox{Span}\{\omega_{ij}, i=1,...,m\}=\mbox{Span}\{e_ie_j^*, i=1,...,m\},\ \ \forall j.
\end{equation}
Unfortunately, problem \eqref{equ: original robust MC wrt any basis 1} is NP-hard because the rank function is discrete. So we replace the rank in the objective function with the nuclear norm, resulting in the relaxed formulation:
\begin{equation}
\label{equ: robust MC wrt any basis 1}
\begin{split}
&\min_{L,S} \|L\|_*+\lambda\|S\|_{2,1},\\&\ \mbox{s.t.}\ \langle M, \omega_{ij}\rangle=\langle L+S, \omega_{ij}\rangle,\ \ (i,j)\in\mathcal{K}_{obs}.
\end{split}
\end{equation}
For brevity, we also rewrite it as
\begin{equation}
\label{equ: robust MC wrt any basis}
\min_{L,S} \|L\|_*+\lambda\|S\|_{2,1},\ \ \mbox{s.t.}\ \ \mathcal{R}(L+S)=\mathcal{R}(M),
\end{equation}
where $\mathcal{R}(\cdot)=\sum_{ij\in\mathcal{K}_{obs}}\langle\cdot,\omega_{ij}\rangle\omega_{ij}$ is an operator which projects a matrix onto the space $\Omega_{obs}=\mbox{Span}\{\omega_{ij},\ i,j\in\mathcal{K}_{obs}\}$, i.e., $\mathcal{R}=\mathcal{P}_{\Omega_{obs}}$.

In this paper, we show that problem \eqref{equ: robust MC wrt any basis 1}, or equivalently problem \eqref{equ: robust MC wrt any basis}, exactly recovers the range space of $L_0$ and the column support of $S_0$, if the rank of $L_0$ is no higher than $O(n/\log^3 n)$, and the number of corruptions and observations are (nearly) constant fractions of the total input size. In other words, the original problem \eqref{equ: original robust MC wrt any basis 1} can be well approximated by the relaxed problem \eqref{equ: robust MC wrt any basis 1}.

\subsection{Assumptions}
At first sight, it seems not always possible to successfully separate $M$ as the low-rank term plus the column-sparse one, because there seems to not be sufficient information to avoid the identifiability issues. The identifiability issues are reflected in two aspects: the true low-rank term might be sparse and the true sparse component might be low-rank, thus we cannot hopefully identify the ground truth correctly. So we require several assumptions in order to avoid such unidentifiable cases.

\subsubsection{Incoherence Conditions on the Low-Rank Term}
As an extreme example, suppose that the low-rank term has only one non-zero entry, e.g., $e_1e_1^*$. This matrix has a one in the top left corner and zeros elsewhere, thus being both low-rank and sparse. So it is impossible to identify this matrix as the low-rank term correctly. Moreover, we cannot expect to recover the range space of this matrix from a sampling of its entries, unless we pretty much observe all of the elements.

To resolve the issue, Gross~\cite{Gross2011recovering} introduced $\mu$-incoherence condition to the low-rank term $L\in\mathbb{R}^{m\times n}$ in problem \eqref{equ: general basis MC} w.r.t. the general basis $\{\omega_{ij}\}_{ij}$:
\begin{subequations}
\begin{align}
&\max_{ij} \|\mathcal{P}_{\mathcal{V}}\omega_{ij}\|_F^2\le\frac{\mu r}{n},\ \mbox{(avoid column sparsity)}\label{equ: incoherence 1}\\
&\max_{ij} \|\mathcal{P}_{\mathcal{U}}\omega_{ij}\|_F^2\le\frac{\mu r}{m},\ \mbox{(avoid row sparsity)}\label{equ: another incoherence}\\
&\max_{ij}\langle UV^*, \omega_{ij}\rangle^2\le\frac{\mu r}{mn},\label{equ: incoherence 2}
\end{align}
\end{subequations}
where $U\Sigma V^*\in\mathbb{R}^{m\times n}$ is the skinny SVD of $L$. Intuitively, as discussed in~\cite{Gross2011recovering,Candes}, conditions \eqref{equ: incoherence 1}, \eqref{equ: another incoherence}, and \eqref{equ: incoherence 2} assert that the singular vectors reasonably spread out for small $\mu$. Because problem \eqref{equ: general basis MC}, which is a noiseless version of problem \eqref{equ: robust MC wrt any basis 1}, requires conditions \eqref{equ: incoherence 1}, \eqref{equ: another incoherence}, and \eqref{equ: incoherence 2} in its theoretical guarantees~\cite{Gross2011recovering}, we will set the same incoherence conditions to analyze our model \eqref{equ: robust MC wrt any basis 1} as well.
We argue that beyond \eqref{equ: incoherence 1}, conditions \eqref{equ: another incoherence} and \eqref{equ: incoherence 2} are indispensible for the exact recovery of the target matrix in our setting. As an example, let few entries in the first row of a matrix be non-zeros while all other elements are zeros. This matrix satisfies condition \eqref{equ: incoherence 1} but does not satisfy \eqref{equ: another incoherence} and \eqref{equ: incoherence 2}. In this scenario the probability of recovering its column space is not very high, as we cannot guarantee to take a sample from those uncorrupted non-zero entries, when there are a large amount of noises.

So we assume that the low-rank part $\widetilde{L}$ satisfies conditions \eqref{equ: incoherence 1}, \eqref{equ: another incoherence}, and \eqref{equ: incoherence 2}, and the low-rank component $\widehat{L}$ satisfies condition \eqref{equ: incoherence 1}, as work \cite{Zhang2015AAAI} did (please refer to Table \ref{table: notations} for explanation of notations). Though it is more natural to assume the incoherence on $L_0$, the following example shows that the incoherence of $L_0$ does not suffice to guarantee the success of model \eqref{equ: robust MC wrt any basis 1} when the rank is relatively high:

\begin{example}
\label{example: failure on L0}
Compute $L_0=XY^T$ as a product of $n\times r$
i.i.d. $\mathcal{N}(0,1)$ matrices. The column support of
$S_0$ is sampled by Bernoulli distribution with parameter $a$. Let the first entry of each non-zero column of $S_0$ be $n$ and all other entries be zeros. Also set the observation matrix as $\mathcal{P}_{\Omega_{obs}}(L_0+S_0)$, where $\Omega_{obs}$ is the set of observed index selected by i.i.d. $\mbox{Ber}(p_0)$. We adopt $n=10,000$, $r=0.1n$, $p_0=1$, and $a=10/n$, so there are around constant number of corrupted samples in this example. Note that, here, $L_0$ is incoherent fulfilling conditions \eqref{equ: incoherence 1}, \eqref{equ: another incoherence}, and \eqref{equ: incoherence 2}, while $\widetilde{L}$ and $\widehat{L}$ are not. However, the output of algorithm falsely identifies all of the corrupted samples as the clean data. So the incoherence of $L_0$ cannot guarantee the exact recoverability of our model.
\end{example}

Imposing incoherence conditions on $\widetilde{L}=L_0+\mathcal{P}_{\mathcal{I}_0}H_L$ and $\widehat{L}=L_0+\mathcal{P}_{\mathcal{U}_0}H_L$ is not so surprising: there might be multiple solutions for the optimization model, and the low-rankness/sparseness decompositions of $M$ are non-unique (depending on which solution we are considering). Since $\widetilde{L}+\widetilde{S}$ and $\widehat{L}+\widehat{S}$ are two eligible decompositions of $M$ related to a fixed optimal solution pair, it is natural to consider imposing incoherence on them. Specifically, we first assume incoherence conditions \eqref{equ: incoherence 1}, \eqref{equ: another incoherence}, and \eqref{equ: incoherence 2} on $\widetilde{L}=L_0+\mathcal{P}_{\mathcal{I}_0}H_L$. Note that these conditions guarantee that matrix $\widetilde{L}$ cannot be sparse, so we can resolve the identifiability issue for the decomposition $M=\widetilde{L}+\widetilde{S}$ and hopefully recover the index $\mathcal{I}_0$. After that, the ambiguity between the low rankness and the row sparseness is not an issue any more, i.e., even for row-sparse underlying matrix we can still expect to recover its column space. Here is an example to illustrate this: suppose the low rank matrix is $e_1\mathbf{1}^*$ which has ones in the first rows and zeros elsewhere, and we have known some of the columns are corrupted by noise. Remove the outlier columns. Even we cannot fully observe the remaining entries, we can still expect to recover the column space $\mbox{Range}(e_1)$ since the information for the range space is sufficient to us. Therefore, we only need to impose condition \eqref{equ: incoherence 1} on $\widehat{L}=L_0+\mathcal{P}_{\mathcal{U}_0}H_L$, which asserts that $\widehat{L}$ cannot be column-sparse.

\subsubsection{Ambiguity Conditions on Column-Sparse Term}
Analogously, the column-sparse term $\widehat{S}$ has the identification issue as well. Suppose that $\widehat{S}$ is a rank-1 matrix such that a constant fraction of the columns are zeros. This matrix is both low-rank and column-sparse, which cannot be correctly identified. To avoid this case, one needs the isotropic assumption~\cite{vershynin2010introduction}, or the following ambiguity condition, on the column-sparse term $\widehat{S}$, which is introduced by \cite{Zhang2015AAAI}:
\begin{equation}
\label{equ: incoherence on S_0}
\|\mathcal{B}(\widehat{S})\|\leq \mu',
\end{equation}
where $\mu'$ can be any numerical constant. Here the isotropic assumption asserts that the covariance of the noise matrix is the identity. In fact, many noise models satisfy this assumption, e.g., i.i.d. Gaussian noise. So the normalized noise vector would uniformly distribute on the surface of a unit sphere centered at the origin, thus they cannot be in a low-dimensional subspace --- in other words, not low-rank. Similarly, the ambiguity condition was proposed for the same purpose~\cite{Zhang2015AAAI}. Geometrically, the spectral norm stands for the length of the first principal direction (we use operator $\mathcal{B}$ to remove the scaling factor). So condition \eqref{equ: incoherence on S_0} asserts that the energy for each principal direction does not differ too much, namely, the data distribute around a ball (see Figure \ref{figure: ambiguity conditions}), and \eqref{equ: incoherence on S_0} holds once the directions of non-zero columns of $\widehat{S}$ scatter sufficiently randomly. Note that the isotropic assumption implies our ambiguity condition: if the columns of $\widehat{S}$ are isotropic, $\|\mathcal{B}(\widehat{S})\|$ would be a constant even though the number of column support of $\widehat{S}$ is comparable to $n$. Thus our ambiguity condition \eqref{equ: incoherence on S_0} is feasible. No matter what number of non-zero columns of $\widehat{S}$ is, the assumption guarantees matrix $\widehat{S}$ not to be low-rank.
\begin{figure}
\centering
\includegraphics[width=0.4\textwidth]{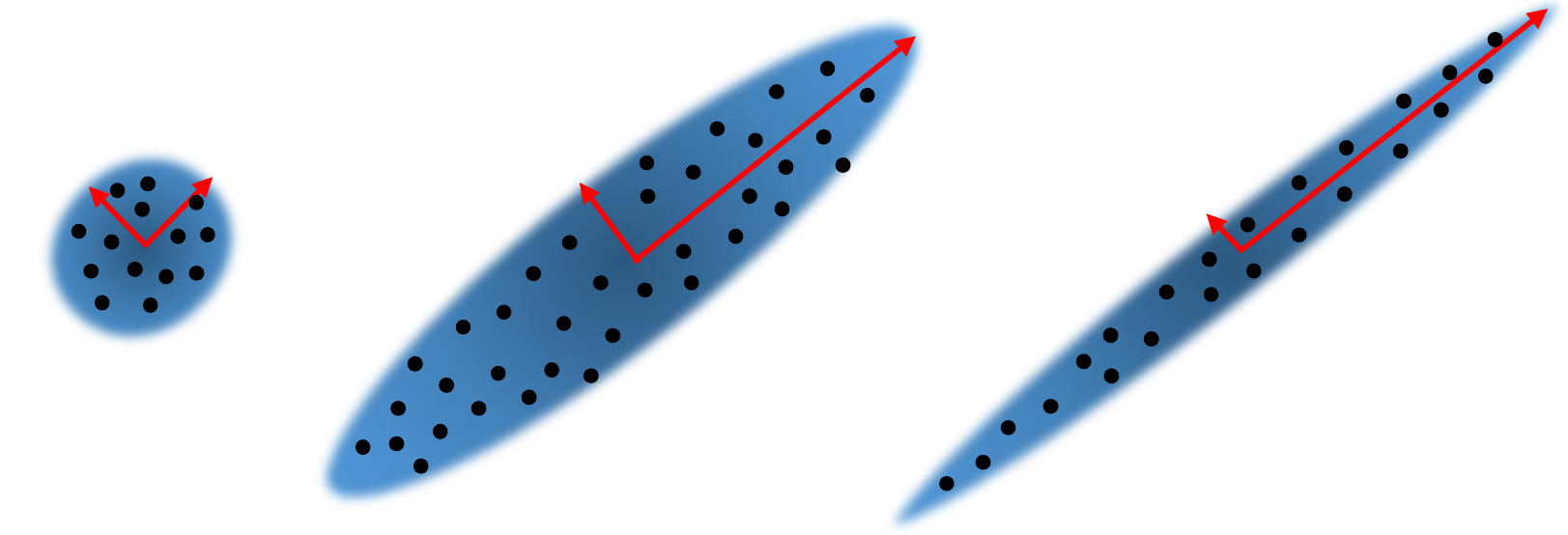}
\caption{Illustration of the ambiguity condition. From the left to the right, the $\mu'$ increases and the data tend to lie in a low-dimensional subspace.}
\label{figure: ambiguity conditions}
\end{figure}

\subsubsection{Probability Model}
Our main results assume that the column
support of $S_0$ and the entry support of measured set $\mathcal{K}_{obs}$ obey i.i.d. Bernoulli distribution with parameter $a$ and parameter $p_0$, respectively. Such assumptions are mild because we have no further information on the positions of outlier and measurement. More specifically, we assume that $[S_0]_{:j}=[\delta_0]_j[Z_0]_{:j}$ throughout our proof, where $[\delta_0]_j\sim \mbox{Ber}(p)$ determines the outlier positions and $[Z_0]_{:j}$ determines the outlier values. If an event holds with a probability at least
$1-\Theta(n^{-10})$, we say that the event happens with an overwhelming probability.

\subsubsection{Other Assumptions}
Obviously, to guarantee the exact recovery of $\mbox{Range}(L_0)$, the noiseless samples $\mathcal{P}_{\mathcal{I}_0^\perp}L_0$ should span the same space as that of $\mbox{Range}(L_0)$, i.e., $\mbox{Range}(L_0)=\mbox{Range}(\mathcal{P}_{\mathcal{I}_0^\perp}L_0)$. Otherwise, only a subspace of $\mbox{Range}(L_0)$ can be recovered, because the noises may be arbitrarily severe. So without loss of generality, we assume $L_0=\mathcal{P}_{\mathcal{I}_0^\perp}L_0$, as work \cite{Xu:CSPCP,ICML2011Chen_469} did. Moreover, the noises should be identifiable, namely, they cannot lie in the ground truth $\mbox{Range}(L_0)$.

\subsection{Summary of Main Notations}
In this paper, matrice are denoted by capital symbols. For matrix $M$, we represent $M_{:j}$ or $M^{(j)}$ as the $j$th column of $M$. We denote by $M_{ij}$ the entry at the $i$th row, $j$th column of the matrix. For matrix operators, $M^*$ and $M^\dag$ represent the conjugate transpose and the Moore-Penrose pseudo-inverse of $M$, respectively, and $|M|$ stands for the matrix whose $(i,j)$-th entry is $|M_{ij}|$.

Several norms appear in this paper, both for vector and for matrix. The only vector norm we use is $\|\cdot\|_2$, which stands for the Euclidean norm or the vector $\ell_2$ norm. For matrix norm, we denote by $\|\cdot\|_*$ the nuclear norm, which stands for the sum of singular values. The matrix norm analogous to the vector $\ell_2$ norm is the Frobenious norm, represented by $\|\cdot\|_F$. The pseudo-norms, $\|\cdot\|_0$ and $\|\cdot\|_{2,0}$, denote the number of non-zero entries and non-zero columns of a matrix, respectively; They are not real norms because the absolute homogeneity does not hold. The convex surrogates of $\|\cdot\|_0$ and $\|\cdot\|_{2,0}$ are matrix $\ell_1$ and $\ell_{2,1}$ norms, with definitions $\|M\|_1=\sum_{ij}|M_{ij}|$ and $\|M\|_{2,1}=\sqrt{\sum_j\|M_{:j}\|_2}$, respectively. The dual norms of matrix $\ell_1$ and $\ell_{2,1}$ norms are $\ell_\infty$ and $\ell_{2,\infty}$ norms, represented by $\|M\|_\infty=\max_{ij}|M_{ij}|$ and $\|M\|_{2,\infty}=\max_j\|M_{:j}\|_2$. We also denote the operator norm of operator $\mathcal{P}$ as $\|\mathcal{P}\|=\sup_{\|M\|_F=1}\|\mathcal{P}M\|_F$.

Our analysis involves linear spaces as well. For example, $\mathcal{I}$ and $\mbox{Supp}(L)$ (similarly define $\mathcal{I}_0$ for $L_0$, we will not restate that for the following notations) denotes the column support of matrix $L$. Without confusion, it forms a linear subspace. We use $\Omega$ to represent the element support of a matrix, as well as the corresponding linear subspace. The column space of a matrix is written as script $\mathcal{U}$, while the row space is written as script $\mathcal{V}$ or $\mbox{Row}(L)$. For any space $\mathcal{X}$, $\mathcal{X}^\perp$ stands for the orthogonal complement of space $\mathcal{X}$.

We also discuss some special matrices and spaces in our analysis. For example, $(L_0,S_0)$ denotes the ground truth. We represent $(L^*,S^*)=(L_0+H_L,S_0-H_S)$ as the optimal solutions of our model, where $H_L$ and $H_S$ guarantee the feasibility of the solution. We are especially interested in expanding the objective function at some particular points: For the exact recovery of the column support, we focus on $(\widetilde{L},\widetilde{S})=(L_0+\mathcal{P}_{\mathcal{I}_0}H_L,S_0-\mathcal{P}_{\mathcal{I}_0}H_S)$; for the the exact recovery of the column support, we focus on $(\widehat{L},\widehat{S})=(L_0+\mathcal{P}_{\mathcal{I}_0}\mathcal{P}_{\mathcal{U}_0}H_L,S_0-\mathcal{P}_{\Omega_{obs}}\mathcal{P}_{\mathcal{I}_0}\mathcal{P}_{\mathcal{U}_0}H_L)$. Another matrix we are interested in is $\mathcal{B}(S)$, which consists of normalized non-zero columns of $S$ and belongs to the subdifferential of $\ell_{2,1}$ norm. Similarly, the space $\mathcal{T}=\{UX^*+YV^*,\forall X,Y\in\mathbb{R}^{n\times r}\}$ is highly related to the subgradient of the nuclear norm. Namely, the subgradient of nuclear norm can be written in closed form as a term in $\mathcal{T}$ plus a term in $\mathcal{T}^\perp$. The projection operator to space $\mathcal{T}^\perp$ is denoted by $\mathcal{P}_{\mathcal{T}^\perp}$, which equals $\mathcal{P}_{\mathcal{U}^\perp}\mathcal{P}_{\mathcal{V}^\perp}$.

Table \ref{table: notations} summarizes the main notations used in this paper.
\begin{table*}
\caption{Summary of main notations used in the paper.}
\label{table: notations}
\begin{center}
\begin{tabular}{c|l||c|l}
\hline
Notations & Meanings & Notations & Meanings\\
\hline
$\mathcal{I}$ & Column Support. & $\Omega$ & Element Support.\\
$\mathcal{I}_0$ & $\mathcal{I}_0\sim\mbox{Ber}(a)$. & $\Omega_{obs}$ & $\Omega_{obs}\sim\mbox{Ber}(p_0)$.\\
$\Gamma$ & $\Gamma=\mathcal{I}_0^\perp\cap\Omega_{obs}$ & $\Pi$ & $\Pi=\mathcal{I}_0\cap\Omega_{obs}$.\\
$m$, $n$ & Size of the data matrix $M$. & $n_{(1)}$, $n_{(2)}$ & $n_{(1)}=\max\{m,n\}$, $n_{(2)}=\min\{m,n\}$.\\
$\Theta(n)$ & Grows in the same order of $n$. & $O(n)$ & Grows equal to or less than the order of $n$. \\
$\otimes$ & Tensor product. & $e_i$ & Vector whose $i$th entry is $1$ and others are $0$s.\\
Capital & A matrix. & $I$, $\boldsymbol{0}$, $\boldsymbol{1}$ & The identity matrix, all-zero matrix, and all-one vector.\\
$M_{:j}$ or $M^{(j)}$ & The $j$th column of matrix $M$. & $M_{ij}$ & The entry at the $i$th row and $j$th column of $M$.\\
$M^*$ & Conjugate transpose of matrix $M$. & $M^\dag$ & Moore-Penrose pseudo-inverse of matrix $M$.\\
$|M|$ & Matrix whose $(i,j)$-th entry is $|M_{ij}|$. & $\|\cdot\|_2$ & $\ell_2$ norm for vector, $\|v\|_2=\sqrt{\sum_{i}v_{i}^2}$.\\
$\|\cdot\|_*$ & Nuclear norm, the sum of singular values. & $\|\cdot\|_0$ & $\ell_0$ norm, number of non-zero entries.\\
$\|\cdot\|_{2,0}$ & $\ell_{2,0}$ norm, number of nonzero columns. & $\|\cdot\|_1$ & $\ell_1$ norm, $\|M\|_1=\sum_{i,j}|M_{ij}|$. \\
$\|\cdot\|_{2,1}$ & $\ell_{2,1}$ norm, $\|M\|_{2,1}=\sqrt{\sum_{j}\|M_{:j}\|_2}$. & $\|\cdot\|_{2,\infty}$ & $\ell_{2,\infty}$ norm, $\|M\|_{2,\infty}=\max_j\|M_{:j}\|_2$.\\
$\|\cdot\|_F$ & Frobenious norm, $\|M\|_F=\sqrt{\sum_{i,j}M_{ij}^2}$. & $\|\cdot\|_\infty$ & Infinity norm, $\|M\|_\infty=\max_{ij}|M_{ij}|$.\\
$\|\mathcal{P}\|$ & (Matrix) operator norm. & $L_0, S_0$ & Ground truth.\\
$L^*, S^*$ & Optimal solutions, $L^*=L_0+H_L, S^*=S_0-H_S$. & $\widetilde{L}, \widetilde{S}$ & $\widetilde{L}=L_0+\mathcal{P}_{\mathcal{I}_0}H_L$, $\widetilde{S}=S_0-\mathcal{P}_{\mathcal{I}_0}H_S$.\\
$\widehat{L}, \widehat{S}$ & $\widehat{L}=L_0+\mathcal{P}_{\mathcal{U}_0}H_L$, $\widehat{S}=S_0-\mathcal{P}_{\Omega_{obs}}\mathcal{P}_{\mathcal{U}_0}H_L$ & $\widehat{U}$, $\widehat{V}$ & Left and right singular vectors of $\widehat{L}$.\\
$\mathcal{U}_0$, $\mathcal{\widehat{U}}$, $\mathcal{U}^*$ & Column space of $L_0$, $\widehat{L}$, $L^*$. & $\mathcal{V}_0$, $\mathcal{\widehat{V}}$, $\mathcal{V}^*$ & Row space of $L_0$, $\widehat{L}$, $L^*$.\\
$\mathcal{\widehat{T}}$ & Space $\mathcal{\widehat{T}}=\{\widehat{U}X^*+Y\widehat{V}^*,\forall X,Y\in\mathbb{R}^{n\times r}\}$. & $\mathcal{X}^\perp$ & Orthogonal complement of the space $\mathcal{X}$.\\
$\mathcal{P}_\mathcal{\widehat{U}}$, $\mathcal{P}_\mathcal{\widehat{V}}$ & $\mathcal{P}_\mathcal{\widehat{U}}M=\widehat{U}\widehat{U}^*M$, $\mathcal{P}_\mathcal{\widehat{V}}M=M\widehat{V}\widehat{V}^*$. & $\mathcal{P}_\mathcal{\widehat{T}^\perp}$ & $\mathcal{P}_\mathcal{\widehat{T}^\perp}M=\mathcal{P}_\mathcal{\widehat{U}^\perp}\mathcal{P}_\mathcal{\widehat{V}^\perp}M$.\\
$\mathcal{I}_0$, $\widehat{\mathcal{I}}$, $\mathcal{I}^*$ & Index of outliers of $S_0$, $\widehat{S}$, $S^*$. & $|\mathcal{I}_0|$ & Outliers number of $S_0$.\\
$X\in\mathcal{I}$ & The column support of $X$ is a subset of $\mathcal{I}$. & $\mathcal{B}(\widehat{S})$ & $\mathcal{B}(\widehat{S})=\{H:
\mathcal{P}_{\mathcal{\widehat{I}}^\perp}(H)=0;
H_{:j}=\frac{\widehat{S}_{:j}}{\|\widehat{S}_{:j}\|_2},
j\in\mathcal{\widehat{I}}$\}.\\
$\sim\mbox{Ber}(p)$ & Obeys Bernoulli distribution with parameter
$p$. & $\mathcal{N}(a,b^2)$ & Gaussian distribution (mean $a$
and variance $b^2$).\\
Row$(M)$ & Row space of matrix $M$. & Supp$(M)$ & Column support of matrix $M$.\\
$\sigma_i(M)$ & The $i$th singular value of matrix $M$. & $\lambda_i(M)$ & The $i$th eigenvalue of matrix $M$.\\
$\omega_{ij}$ & General basis. & $\mathcal{R}$ & $\mathcal{R}(\cdot)=\sum_{ij\in\mathcal{K}_{obs}}\langle\cdot,\omega_{ij}\rangle\omega_{ij}$.\\
\hline
\end{tabular}
\end{center}
\end{table*}

\section{Exact Recoverability of the Model}
\label{section: exact recoverability of the model}

Our main results in this paper show that, surprisingly, model \eqref{equ: robust MC wrt any basis} is able to exactly recover the range space of $L_0$ and identify the column support of $S_0$ with a closed-form regularization parameter, even when only a small number of expansion coefficients are measured w.r.t. general basis and a constant fraction of columns are arbitrarily corrupted. Our theorem is as follows:
\begin{theorem}[Exact Recoverability Under Bernoulli Sampling]
\label{theorem: exact recovery under Bernoulli sampling} Any solution $(L^*,S^*)$ to the extended robust MC \eqref{equ: robust MC wrt any basis} with $\lambda=1/\sqrt{\log n}$ exactly recovers the column space of $L_0$ and the column support of $S_0$ with a probability at least $1-cn^{-10}$, if the column support
$\mathcal{I}_0$ of $S_0$ subjects to i.i.d. $\mbox{Ber}(a)$, the support $\mathcal{K}_{obs}$ subjects to i.i.d. $\mbox{Ber}(p_0)$, and
\begin{equation}
\label{equ: upper bounds under Bernoulli sampling}
\mbox{rank}(L_0)\le\rho_r\frac{n_{(2)}}{\mu(\log n_{(1)})^3},\ \ a\le \rho_a\frac{n_{(2)}}{\mu n(\log n_{(1)})^3},\ \ p_0\ge\rho_p,
\end{equation}
where $c$, $\rho_r<1$, $\rho_a<1$, and $\rho_p<1$ are all constants independent of each other, and $\mu$ is the incoherence parameter in \eqref{equ: incoherence 1}, \eqref{equ: another incoherence}, and \eqref{equ: incoherence 2}.
\end{theorem}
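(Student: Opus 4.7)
The plan is to follow the now-standard dual-certificate paradigm for nuclear-norm plus $\ell_{2,1}$-norm decompositions, but use the splitting trick the authors advertise: prove column-support correctness and column-space correctness separately, each by expanding the objective at a tailored feasible point, so that the randomness of $\mathcal{I}_0$ and that of $\Omega_{obs}$ are disentangled. Concretely, write any optimum as $L^*=L_0+H_L$, $S^*=S_0-H_S$; feasibility forces $H_L-H_S\in\Omega_{obs}^\perp$, and optimality gives $\|L_0+H_L\|_*+\lambda\|S_0-H_S\|_{2,1}\le\|L_0\|_*+\lambda\|S_0\|_{2,1}$. Applying the subgradient inequality for $\|\cdot\|_*$ at $\widetilde L$ and for $\|\cdot\|_{2,1}$ at $\widetilde S$ yields a sufficient condition for $\mathrm{Supp}(S^*)\subseteq\mathcal{I}_0$; applying the analogous inequalities at $(\widehat L,\widehat S)$ yields a sufficient condition for $\mathrm{Range}(L^*)=\mathrm{Range}(L_0)$. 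In each case the sufficient condition takes the familiar form of the existence of dual variables $(W_L,F_S)$ with $\mathcal{P}_{\widetilde{\mathcal T}}W_L=0$ (resp.\ $\mathcal{P}_{\widehat{\mathcal T}}W_L=0$), $\|W_L\|<1$, $\mathcal{P}_{\mathcal I_0}F_S=0$, $\|F_S\|_{2,\infty}<\lambda$, plus an identity $\widetilde U\widetilde V^*+W_L=\lambda\mathcal{B}(\widetilde S)+F_S$ (or its $\widehat{\phantom{X}}$ counterpart) up to terms supported in $\Omega_{obs}$.

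The next step is to construct these certificates. For the nuclear-norm side I would run the golfing scheme with respect to an i.i.d.\ Bernoulli partition of $\Omega_{obs}=\bigsqcup_{k} \Omega_k$, generating $W_L$ iteratively through the random operator $\mathcal{R}_k=\sum_{ij\in\Omega_k}\langle\cdot,\omega_{ij}\rangle\omega_{ij}$; condition \eqref{equ: commutative operators} is used so that the golfing residuals still live in $\Omega_{obs}$. For the $\ell_{2,1}$ side I would build $F_S$ by a least-squares construction restricted to $\mathcal{I}_0^\perp$, inverting $\mathcal I_0^\perp\to\mathcal I_0^\perp$ the operator $\mathcal{P}_{\mathcal I_0^\perp}(\mathcal{I}-\mathcal{P}_{\widehat{\mathcal T}})\mathcal{P}_{\mathcal I_0^\perp}$, whose near-isometry on $\mathcal I_0^\perp$ follows from the ambiguity condition \eqref{equ: incoherence on S_0} together with small-$a$. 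Operator bounds $\|W_L\|<1$ and $\|F_S\|_{2,\infty}<\lambda=1/\sqrt{\log n}$ are obtained from matrix Bernstein/non-commutative Khintchine in the spirit of \cite{Tropp,Gross2011recovering}, with the basis $\{\omega_{ij}\}$ replacing $\{e_ie_j^*\}$ throughout; this is precisely where incoherence conditions \eqref{equ: incoherence 1}, \eqref{equ: another incoherence}, \eqref{equ: incoherence 2} enter and explain the polylog factors and the $n_{(2)}/\mu$ scaling in \eqref{equ: upper bounds under Bernoulli sampling}. A separate near-isometry statement $\|p_0^{-1}\mathcal{P}_{\widehat{\mathcal T}}\mathcal{R}\mathcal{P}_{\widehat{\mathcal T}}-\mathcal{P}_{\widehat{\mathcal T}}\|<1/2$ guarantees invertibility of the system on the tangent space, preventing null directions.

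The main obstacle, to my mind, is the third $\ell_{2,\infty}$ estimate for $F_S$ and for the stray terms in the golfing output, under the general basis. With the standard basis the column structure of $e_ie_j^*$ is rigid: an $e_ie_j^*$ touches only one column, so column-wise bounds and entry-wise bounds coincide. For an arbitrary $\omega_{ij}$ the same operator can redistribute mass across a whole column, and controlling $\|\cdot\|_{2,\infty}$ now requires simultaneously using column-incoherence \eqref{equ: incoherence 1} and the new row-incoherence \eqref{equ: another incoherence}, combined with \eqref{equ: commutative operators} so that column identity survives the projection $\mathcal{R}$. Once these mixed-norm concentration bounds are in hand, plugging them into the dual certificate yields $\mathrm{Supp}(S^*)\subseteq\mathcal I_0$ and $\mathrm{Range}(L^*)=\mathrm{Range}(L_0)$, and the union bound over the two stages of the split produces the claimed $1-cn^{-10}$ success probability with $\lambda=1/\sqrt{\log n}$ universal.
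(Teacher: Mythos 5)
Your overall architecture coincides with the paper's: the same two-stage split (column support via expansion at $(\widetilde{L},\widetilde{S})$, column space via expansion at $(\widehat{L},\widehat{S})$), golfing over a Bernoulli partition of $\Omega_{obs}$ for the nuclear-norm certificate, a least-squares/Neumann-series construction for the $\ell_{2,1}$ certificate, and matrix-Bernstein concentration with the general basis $\{\omega_{ij}\}$ entering through \eqref{equ: incoherence 1}--\eqref{equ: incoherence 2} and \eqref{equ: commutative operators}; your identification of the $\ell_{2,\infty}$ control as the genuinely new difficulty under a general basis is also on target (the paper handles it in Lemma \ref{lemma: correction for column space}(b), exploiting \eqref{equ: commutative operators} via per-column unitaries $G_j$ so that column identity survives the projections).

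However, one step would fail as specified. You propose to build the $\ell_{2,1}$-side certificate by inverting, on $\mathcal{I}_0^\perp$, the operator $\mathcal{P}_{\mathcal{I}_0^\perp}(\mathcal{I}-\mathcal{P}_{\widehat{\mathcal{T}}})\mathcal{P}_{\mathcal{I}_0^\perp}$. That operator is singular: $\widehat{\mathcal{T}}\cap\mathcal{I}_0^\perp\neq\{0\}$, since any matrix of the form $\widehat{U}X^*$ whose columns indexed by $\mathcal{I}_0$ vanish lies in both spaces, so $\|\mathcal{P}_{\widehat{\mathcal{T}}}\mathcal{P}_{\mathcal{I}_0^\perp}\|=1$ and no Neumann series converges there. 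The paper instead performs the least squares on the \emph{small} random space $\Pi=\mathcal{I}_0\cap\Omega_{obs}$, setting $\widehat{W}=\lambda\mathcal{P}_{\widehat{\mathcal{V}}^\perp\cap\Omega_{obs}}\sum_{k\ge0}(\mathcal{P}_\Pi\mathcal{P}_{\widehat{\mathcal{V}}+\Omega_{obs}^\perp}\mathcal{P}_\Pi)^k\mathcal{B}(\widehat{S})$, and uses only the row space $\widehat{\mathcal{V}}$ (not the full tangent $\widehat{\mathcal{T}}$) in this stage; convergence follows because $\|\mathcal{P}_\Pi\mathcal{P}_{\widehat{\mathcal{V}}}\|$ is small, which is a consequence of small $a$ together with the incoherence of $\widehat{\mathcal{V}}$ (Corollary \ref{corollary: gamma v bound}), \emph{not} of the ambiguity condition as you claim. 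The ambiguity condition \eqref{equ: incoherence on S_0} plays a different and essential role: it bounds $\|\mathcal{B}(\widehat{S})\|\le\mu'$ so that $\|\widehat{W}\|\le\lambda\mu'(1+\|\mathcal{H}\|)\le1/2$ with $\lambda=1/\sqrt{\log n}$ even when $|\mathcal{I}_0|$ is a (near-)constant fraction of $n$; this is precisely the mechanism that lifts the admissible rank to $O(n/\log^3 n)$. With the inversion moved to $\Pi$ and the ambiguity condition redirected to the spectral bound on $\widehat{W}$, your plan matches the paper's proof.
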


\begin{remark}
According to \cite{Candes}, a recovery result under the Bernoulli model with parameter $p$ automatically
implies a corresponding result for the uniform model with parameter $\Theta{(np)}$ at an overwhelming probability. So conditions \eqref{equ: upper bounds under Bernoulli sampling} are equivalent to
\begin{equation}
\label{equ: upper bounds under uniform sampling}
\mbox{rank}(L_0)\le\frac{\rho_rn_{(2)}}{\mu(\log n_{(1)})^3},\ \ s\le\frac{\rho_s'n_{(2)}}{\mu(\log n_{(1)})^3},\ \ k\ge\rho_p'n_{(1)}n_{(2)},
\end{equation}
where the column support $\mathcal{I}_0$ of $S_0$ is uniformly distributed among all sets of cardinality $s$, the support $\mathcal{K}_{obs}$ is uniformly distributed among all sets of cardinality $k$, and $\rho_r$, $\rho_s'$, and $\rho_p'$ are numerical constants.
\end{remark}

%
%
%
%

\subsection{Comparison to Previous Results}
In the traditional low-rank MC problem, one seeks to complete a low-rank matrix from only a few measurements \emph{without corruptions}. Recently, it has been shown that a constant fraction of the entries are allowed to be missing, even if the rank of intrinsic matrix is as high as $O(n/\log^2 n)$. So compared with the result, our bound in Theorem \ref{theorem: exact recovery under Bernoulli sampling} is tight up to a polylog factor. Note that the polylog gap comes from the consideration of arbitrary corruptions in our analysis. When $a=0$, our theorem partially recovers the results of \cite{Gross2011recovering}.

In the traditional low-rank matrix recovery problem, one tries to recover a low-rank matrix, or the range space of matrix, from \emph{fully observed} corrupted data.
To this end, our previous work~\cite{Zhang2015AAAI} demonstrated that a constant fraction of the columns can be corrupted, even if the rank of intrinsic matrix is as high as $O(n/\log n)$. Compared with the result, our bound in Theorem \ref{theorem: exact recovery under Bernoulli sampling} is tight up to a polylog factor as well, where the polylog gap comes from the consideration of missing values in our analysis. When $p_0=1$, our theorem partially recovers the results of \cite{Zhang2015AAAI}.

Probably the only low-rank model that can simultaneously complete the missing values, recover the ground truth subspace, and detect the corrupted samples is robust MC~\cite{ICML2011Chen_469}. As a corollary, Chen et al.~\cite{ICML2011Chen_469} showed that a constant fraction of columns and entries can be corrupted and missing, respectively, if the rank of $L_0$ is of order $O(1)$. Compared with this, though with stronger incoherence (ambiguity) conditions, our work extends the working range of robust MC model to the rank of order $O(n/\log^3 n)$. Moreover, our results consider a set of more general basis, i.e., when $\omega_{ij}=e_ie_j^T$, our theorem partially recovers the results of \cite{ICML2011Chen_469}.

Wright et al.~\cite{wright2013compressive} produced a certificate of optimality for $(L_0,S_0)$ for the Compressive Principal Component Pursuit, given that $(L_0,S_0)$ is the optimal solution for Principal Component Pursuit. There are significant differences between their work and ours: 1. Their analysis assumed that certain entries are corrupted by noise, while our paper assumes that some whole columns are noisy. In some sense, theoretical analysis on column noise is more difficult than that on Principal Component Pursuit~\cite{Zhang2015AAAI}. The most distinct difference
is that we cannot expect our model to exactly recover $L_0$
and $S_0$. Rather, only the \emph{column space} of $L_0$ and the
\emph{column support} of $S_0$ can be exactly
recovered~\cite{ICML2011Chen_469,Xu:CSPCP}. 2. Wright et al.'s analysis is based on the assumption that $(L_0,S_0)$ can be recovered by Principal Component Pursuit, while our analysis is independent of this requirement.

\section{Complete Proofs of Theorem \ref{theorem: exact recovery under Bernoulli sampling}}
\label{section: proof}

Theorem \ref{theorem: exact recovery under Bernoulli sampling} shows the exact recoverability of our extended robust MC model w.r.t. general basis. This section is devoted to proving this result.

\subsection{Proof Sketch}
We argue that it is not very straightforward to apply the existing proofs on Robust PCA/Matrix Completion to the case of general basis, since these proofs essentially require the observed entries and the outliers to be represented under the same basis~\cite{Candes}. To resolve the issue, generally speaking, we novelly divide the proof of Theorem \ref{theorem: exact recovery under Bernoulli sampling} into two parts: The exact recoverability of column support and the exact recoverability of column space. We are able to attack the two problems separately thanks to the idea of expanding the objective function at the well-designed points, i.e., $(\widetilde{L},\widetilde{S})$ for the recovery of column support and $(\widehat{L},\widehat{S})$ for the recovery of column space, respectively (see Sections \ref{subsubsection: dual conditions 1} and \ref{subsubsection: dual conditions 2} for details). This technique enables us to decouple the randomization of $\mathcal{I}_0$ and $\Omega_{obs}$, and so construct the dual variables easily by standard tools like the least squares and golfing scheme. We notice that our framework is general. It not only can be applied to the proof for easier model like Outlier Pursuit~\cite{Zhang2015AAAI} (though we will sacrifice a small polylog factor for the probability $a$ of outliers), but can also hopefully simplify the proof for model with more complicated formulation, e.g., decomposing the data matrix $M$ into more than two structural components~\cite{wright2013compressive}. That is roughly the high-level intuition why we can handle the general basis and improve over the previous work in this paper.

Specifically, for the exact recoverability of column support, we expand the objective function at $(\widetilde{L},\widetilde{S})$ to establish our first class of dual conditions. Though it is standard to construct dual variables by golfing scheme, many lemmas need to be generalized in the standard setting because of the existence of both $\mathcal{I}_0$ and $\Omega_{obs}$. All the preliminary work is done in Appendix \ref{subsection: preliminary}. When $p_0=1$ or $a=0$, we claim that our lemmas return to the ones in~\cite{Candes,ICML2011Chen_469}, thus being more general. The idea behind the proofs is to fix $\mathcal{I}_0$ first and use the randomized argument for $\Omega_{obs}$ to have a one-step result, and then allow $\mathcal{I}_0$ to be randomized to get our desired lemmas.

For the exact recoverability of column support, similarly, we expand the objective function at $(\widehat{L},\widehat{S})$ to establish our second class of dual conditions. We construct the dual variables by the least squares, and prove the correctness of our construction by using generalized lemmas as well. To this end, we also utilize the ambiguity condition, which guarantees that the outlier matrix cannot be low-rank. This enables us to improve the upper bound for the rankness of the ground truth matrix from $O(1)$ to our $O(n/\log^3 n)$.

In summary, our proof proceeds in two parallel lines. The steps are as follows.
\begin{itemize}
\item
We prove the exact recoverability of column support:
\begin{itemize}
\item
Section \ref{subsubsection: dual conditions 1} proves the correctness of dual condition, as shown in Lemma \ref{lemma: dual conditions for exact column support}. In particular, in the proof we focus on the subgradient of objective function at $(\widetilde{L},\widetilde{S})$.
\item
Section \ref{subsubsection: certification by golfing scheme} shows the construction of dual variables \eqref{equ: W^L}, and Section \ref{subsubsection: proofs of dual conditions 1} proves its correction in Lemma \ref{lemma: correction for column support}.
\end{itemize}
\item
We then prove the exact recoverability of column space:
\begin{itemize}
\item
Section \ref{subsubsection: dual conditions 2} proves the correctness of dual condition, as shown in Lemma \ref{lemma: dual conditions for exact column space}. In particular, in the proof we focus on the subgradient of objective function at $(\widehat{L},\widehat{S})$.
\item
Section \ref{subsubsection: certification by least squares} shows the construction of dual variables \eqref{equ: W}, and Section \ref{subsubsection: proofs of dual conditions 2} proves its correction in Lemma \ref{lemma: correction for column space}.
\end{itemize}
\end{itemize}

\subsection{Exact Recovery of Column Support}
\label{section: exact recovery of column support}
\subsubsection{Dual Conditions}
\label{subsubsection: dual conditions 1}
We first establish dual conditions for the exact recovery of the column support. The following lemma shows that once we can construct dual variables satisfying certain conditions (a.k.a. dual conditions), the column support of the outliers can be exactly recovered with a high probability by solving our robust MC model \eqref{equ: robust MC wrt any basis}. Basically, the proof is to find conditions which implies that $0$ belongs to the subdifferential of the objective function at the desired low-rank and column-sparse solution.
\begin{lemma}
\label{lemma: dual conditions for exact column support}
Let $(L^*,S^*)=(L_0+H_L,S_0'-H_S)$ be any solution to the extended robust MC \eqref{equ: robust MC wrt any basis}, $\widetilde{L}=L_0+\mathcal{P}_{\mathcal{I}_0}H_L$, and $\widetilde{S}=S_0-\mathcal{P}_{\mathcal{I}_0}H_S$. Assume that $\|\mathcal{P}_{\Omega_{obs}^\perp}\mathcal{P}_\mathcal{\widetilde{T}}\|\le1/2$, and
\begin{equation*}
\widetilde{U}\widetilde{V}^*+\widetilde{W}=\lambda(\widetilde{F}+\mathcal{P}_{\Omega_{obs}^\perp}\widetilde{D}),
\end{equation*}
where $\mathcal{P}_{\mathcal{\widetilde{T}}}\widetilde{W}=0$, $\|\widetilde{W}\|\le1/2$, $\mathcal{P}_{\Omega_{obs}^\perp}\widetilde{F}=0$, $\|\widetilde{F}\|_{2,\infty}\le1/2$, and $\|\mathcal{P}_{\Omega_{obs}^\perp}\widetilde{D}\|_F\le1/4$. Then $S^*$ exactly recovers the column support of $S_0$, i.e., $H_L\in\mathcal{I}_0$.
\end{lemma}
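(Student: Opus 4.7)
The plan is a standard dual‑certificate argument: given the certificate conditions, I will show that passing from the ``reference'' point $(\widetilde{L},\widetilde{S})$ to any feasible competitor $(L^*,S^*)$ strictly increases the objective, unless $H_L$ is column‑supported on $\mathcal{I}_0$. Two preparatory observations drive the whole proof. First, $(\widetilde{L},\widetilde{S})$ is itself feasible: by definition $\widetilde{L}+\widetilde{S}=L_0+S_0+\mathcal{P}_{\mathcal{I}_0}(H_L-H_S)$, and the columnwise commutativity $\mathcal{P}_{\Omega_{obs}}\mathcal{P}_{\mathcal{I}_0}=\mathcal{P}_{\mathcal{I}_0}\mathcal{P}_{\Omega_{obs}}$ guaranteed by \eqref{equ: commutative operators}, combined with the feasibility identity $\mathcal{P}_{\Omega_{obs}}(H_L-H_S)=0$, gives $\mathcal{P}_{\Omega_{obs}}(\widetilde{L}+\widetilde{S})=\mathcal{P}_{\Omega_{obs}}M$. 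Second, $\widetilde{L}$ has SVD $\widetilde{U}\widetilde{\Sigma}\widetilde{V}^*$ and $\widetilde{S}$ is column‑supported in $\mathcal{I}_0$, so the subdifferentials have the clean parametrization $\partial\|\widetilde{L}\|_*=\{\widetilde{U}\widetilde{V}^*+W_0:\mathcal{P}_{\mathcal{\widetilde{T}}}W_0=0,\,\|W_0\|\le1\}$ and $\partial\|\widetilde{S}\|_{2,1}=\{\mathcal{B}(\widetilde{S})+F_0:F_0\in\mathcal{I}_0^\perp,\,\|F_0\|_{2,\infty}\le1\}$.

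The first technical step is to apply convexity at $(\widetilde{L},\widetilde{S})$ using $L^*-\widetilde{L}=\mathcal{P}_{\mathcal{I}_0^\perp}H_L$ and $S^*-\widetilde{S}=-\mathcal{P}_{\mathcal{I}_0^\perp}H_S$. Choosing $W_0$ extremal dual to $\mathcal{P}_{\mathcal{\widetilde{T}}^\perp}\mathcal{P}_{\mathcal{I}_0^\perp}H_L$ and $F_0$ extremal dual to $\mathcal{P}_{\mathcal{I}_0^\perp}H_S$, and noting $\langle\mathcal{B}(\widetilde{S}),\mathcal{P}_{\mathcal{I}_0^\perp}H_S\rangle=0$ since $\mathcal{B}(\widetilde{S})$ is column‑supported on $\mathcal{I}_0$, yields
\begin{equation*}
\|L^*\|_*+\lambda\|S^*\|_{2,1}-\|\widetilde{L}\|_*-\lambda\|\widetilde{S}\|_{2,1}\ge\|\mathcal{P}_{\mathcal{\widetilde{T}}^\perp}\mathcal{P}_{\mathcal{I}_0^\perp}H_L\|_*+\lambda\|\mathcal{P}_{\mathcal{I}_0^\perp}H_S\|_{2,1}+\langle\widetilde{U}\widetilde{V}^*,\mathcal{P}_{\mathcal{I}_0^\perp}H_L\rangle.
\end{equation*}
Substituting the certificate identity $\widetilde{U}\widetilde{V}^*=-\widetilde{W}+\lambda\widetilde{F}+\lambda\mathcal{P}_{\Omega_{obs}^\perp}\widetilde{D}$ decomposes the last term into three contributions, each bounded by a factor of the right‑hand side: the $\widetilde{W}$ term is controlled by $\tfrac{1}{2}\|\mathcal{P}_{\mathcal{\widetilde{T}}^\perp}\mathcal{P}_{\mathcal{I}_0^\perp}H_L\|_*$ since $\widetilde{W}\in\mathcal{\widetilde{T}}^\perp$; the $\widetilde{F}$ term, using $\mathcal{P}_{\Omega_{obs}^\perp}\widetilde{F}=0$, commutativity, and the feasibility identity to rewrite it as an inner product against $\mathcal{P}_{\mathcal{I}_0^\perp}\mathcal{P}_{\Omega_{obs}}H_S$, is bounded by $\tfrac{\lambda}{2}\|\mathcal{P}_{\mathcal{I}_0^\perp}H_S\|_{2,1}$; the $\widetilde{D}$ term is bounded by $\tfrac{\lambda}{4}\|\mathcal{P}_{\Omega_{obs}^\perp}\mathcal{P}_{\mathcal{I}_0^\perp}H_L\|_F$.

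The main obstacle is absorbing that last $\widetilde{D}$ contribution, which involves the unobserved portion of $\mathcal{P}_{\mathcal{I}_0^\perp}H_L$. Here the transversality $\|\mathcal{P}_{\Omega_{obs}^\perp}\mathcal{P}_{\mathcal{\widetilde{T}}}\|\le1/2$ is essential. Decomposing $\mathcal{P}_{\mathcal{I}_0^\perp}H_L$ into its $\mathcal{\widetilde{T}}$ and $\mathcal{\widetilde{T}}^\perp$ parts, and using once more the feasibility identity plus commutativity to control the observed part $\mathcal{P}_{\Omega_{obs}}\mathcal{P}_{\mathcal{I}_0^\perp}H_L=\mathcal{P}_{\mathcal{I}_0^\perp}\mathcal{P}_{\Omega_{obs}}H_S$ in Frobenius norm by $\|\mathcal{P}_{\mathcal{I}_0^\perp}H_S\|_{2,1}$, I obtain the self‑consistent bound $\|\mathcal{P}_{\Omega_{obs}^\perp}\mathcal{P}_{\mathcal{I}_0^\perp}H_L\|_F\le\|\mathcal{P}_{\mathcal{I}_0^\perp}H_S\|_{2,1}+2\|\mathcal{P}_{\mathcal{\widetilde{T}}^\perp}\mathcal{P}_{\mathcal{I}_0^\perp}H_L\|_*$. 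Collecting the bounds gives, with positive constants $c_1,c_2$ (using $\lambda=1/\sqrt{\log n}<1$),
\begin{equation*}
\|L^*\|_*+\lambda\|S^*\|_{2,1}-\|\widetilde{L}\|_*-\lambda\|\widetilde{S}\|_{2,1}\ge c_1\|\mathcal{P}_{\mathcal{\widetilde{T}}^\perp}\mathcal{P}_{\mathcal{I}_0^\perp}H_L\|_*+c_2\lambda\|\mathcal{P}_{\mathcal{I}_0^\perp}H_S\|_{2,1}.
\end{equation*}

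Finally, since $(L^*,S^*)$ is optimal and $(\widetilde{L},\widetilde{S})$ is feasible, the left‑hand side is non‑positive, so $\mathcal{P}_{\mathcal{\widetilde{T}}^\perp}\mathcal{P}_{\mathcal{I}_0^\perp}H_L=0$ and $\mathcal{P}_{\mathcal{I}_0^\perp}H_S=0$. Plugging $\mathcal{P}_{\mathcal{I}_0^\perp}H_S=0$ back into the feasibility identity together with commutativity yields $\mathcal{P}_{\Omega_{obs}}\mathcal{P}_{\mathcal{I}_0^\perp}H_L=0$. Combined with $\mathcal{P}_{\mathcal{I}_0^\perp}H_L\in\mathcal{\widetilde{T}}$ and $\|\mathcal{P}_{\Omega_{obs}^\perp}\mathcal{P}_{\mathcal{\widetilde{T}}}\|\le1/2$, this forces $\|\mathcal{P}_{\mathcal{I}_0^\perp}H_L\|_F\le\tfrac{1}{2}\|\mathcal{P}_{\mathcal{I}_0^\perp}H_L\|_F$, hence $\mathcal{P}_{\mathcal{I}_0^\perp}H_L=0$, i.e., $H_L\in\mathcal{I}_0$, as claimed.
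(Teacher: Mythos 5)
Your proof is correct and follows essentially the same route as the paper's: a subgradient expansion at $(\widetilde{L},\widetilde{S})$ with extremal dual choices, substitution of the certificate identity for $\widetilde{U}\widetilde{V}^*$, absorption of the $\widetilde{D}$ term via $\|\mathcal{P}_{\Omega_{obs}^\perp}\mathcal{P}_{\mathcal{\widetilde{T}}}\|\le1/2$, and the optimality argument forcing $\mathcal{P}_{\mathcal{I}_0^\perp}H_L\in\mathcal{\widetilde{T}}\cap\Omega_{obs}^\perp=\{0\}$. The only differences are cosmetic: you phrase the $S$-perturbation as $-\mathcal{P}_{\mathcal{I}_0^\perp}H_S$ and translate to $H_L$ via the feasibility identity $\mathcal{P}_{\mathcal{I}_0^\perp}H_S=\mathcal{P}_{\Omega_{obs}}\mathcal{P}_{\mathcal{I}_0^\perp}H_L$ (the paper works with $-\mathcal{P}_{\Omega_{obs}}\mathcal{P}_{\mathcal{I}_0^\perp}H_L$ directly), and you make explicit the columnwise commutativity and the argument that $\mathcal{\widetilde{T}}\cap\Omega_{obs}^\perp=\{0\}$, which the paper leaves implicit.
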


\begin{proof}
We first recall that the subgradients of nuclear norm
and $\ell_{2,1}$ norm are as follows:
\begin{equation*}
\partial_{\widetilde{L}}\|\widetilde{L}\|_*=\{\widetilde{U}\widetilde{V}^*+\widetilde{Q}: \widetilde{Q}\in\mathcal{\widetilde{T}^\perp},\|\widetilde{Q}\|\le1\},
\end{equation*}
\begin{equation*}
\partial_{\widetilde{S}}\|\widetilde{S}\|_{2,1}=\{\mathcal{B}(\widetilde{S})+\widetilde{E}: \widetilde{E}\in\mathcal{\widetilde{I}^\perp},\|\widetilde{E}\|_{2,\infty}\le1\}.
\end{equation*}
According to Lemma \ref{lemma: inside Omega} and the feasibility of $(L^*,S^*)$, $\mathcal{P}_{\Omega_{obs}}H_L=\mathcal{P}_{\Omega_{obs}}H_S=H_S$. Let $\widetilde{S}=S_0'-H_S+\mathcal{P}_{\Omega_{obs}}\mathcal{P}_{\mathcal{I}_0^\perp}H_L=S_0'-\mathcal{P}_{\Omega_{obs}}\mathcal{P}_{\mathcal{I}_0}H_L\in\mathcal{I}_0$. Thus the pair $(\widetilde{L},\widetilde{S})$ is feasible to problem \eqref{equ: robust MC wrt any basis}. Then we have
\begin{equation*}
\begin{split}
&\ \ \ \ \ \|L_0+H_L\|_*+\lambda\|S_0'-H_S\|_{2,1}\\
&\ge\|\widetilde{L}\|_*+\lambda\|\widetilde{S}\|_{2,1}+\langle\widetilde{U}\widetilde{V}^*+\widetilde{Q},\mathcal{P}_{\mathcal{I}_0^\perp}H_L\rangle-\lambda\langle\mathcal{B}(\widetilde{S})+\widetilde{E},\mathcal{P}_{\Omega_{obs}}\mathcal{P}_{\mathcal{I}_0^\perp}H_L\rangle.
\end{split}
\end{equation*}
Now adopt $\widetilde{Q}$ such that $\langle\widetilde{Q},\mathcal{P}_{\mathcal{I}_0^\perp}H_L\rangle=\|\mathcal{P}_{\mathcal{\widetilde{T}}^\perp}\mathcal{P}_{\mathcal{I}_0^\perp}H_L\|_*$ and $\langle\widetilde{E},\mathcal{P}_{\Omega_{obs}}\mathcal{P}_{\mathcal{I}_0^\perp}H_L\rangle=\|\mathcal{P}_{\Omega_{obs}}\mathcal{P}_{\mathcal{I}_0^\perp}H_L\|_{2,1}$\footnote{By the duality between the nuclear norm and the operator norm, there exists a $Q$ such that $\langle Q, \mathcal{P}_{\widetilde{\mathcal{T}}^\perp}\mathcal{P}_{\mathcal{I}_0^\perp}H\rangle=\|\mathcal{P}_{\widetilde{\mathcal{T}}^\perp}\mathcal{P}_{\mathcal{I}_0^\perp}H\|_*$ and $\|Q\|\le1$. Thus we take $\widetilde{Q}=\mathcal{P}_{\widetilde{\mathcal{T}}^\perp}Q\in\mathcal{\widetilde{T}^\perp}$. It holds similarly for $\widetilde{E}$.}, and note that $\langle\mathcal{B}(\widetilde{S}),\mathcal{P}_{\Omega_{obs}}\mathcal{P}_{\mathcal{I}_0^\perp}H_L\rangle=0$. So we have
\begin{equation*}
\begin{split}
&\ \ \ \ \ \|L_0+H_L\|_*+\lambda\|S_0'-H_S\|_{2,1}\\
&\ge\|\widetilde{L}\|_*+\lambda\|\widetilde{S}\|_{2,1}+\|\mathcal{P}_{\mathcal{\widetilde{T}}^\perp}\mathcal{P}_{\mathcal{I}_0^\perp}H_L\|_*+\lambda\|\mathcal{P}_{\Omega_{obs}}\mathcal{P}_{\mathcal{I}_0^\perp}H_L\|_{2,1}+\langle\widetilde{U}\widetilde{V}^*,\mathcal{P}_{\mathcal{I}_0^\perp}H_L\rangle.
\end{split}
\end{equation*}
Notice that
\begin{equation*}
\begin{split}
&\ \ \ \ \ |\langle\widetilde{U}\widetilde{V}^*,\mathcal{P}_{\mathcal{I}_0^\perp}H_L\rangle|\\
&=|\langle\widetilde{W}-\lambda\widetilde{F}-\lambda\mathcal{P}_{\Omega_{obs}^\perp}\widetilde{D},\mathcal{P}_{\mathcal{I}_0^\perp}H_L\rangle|\\
&\le\frac{1}{2}\|\mathcal{P}_{\mathcal{\widetilde{T}^\perp}}\mathcal{P}_{\mathcal{I}_0^\perp}H_L\|_*+\frac{\lambda}{2}\|\mathcal{P}_{\Omega_{obs}}\mathcal{P}_{\mathcal{I}_0^\perp}H_L\|_{2,1}+\frac{\lambda}{4}\|\mathcal{P}_{\Omega_{obs}^\perp}\mathcal{P}_{\mathcal{I}_0^\perp}H_L\|_F.
\end{split}
\end{equation*}
So we have
\begin{equation*}
\begin{split}
&\ \ \ \ \ \|L_0+H_L\|_*+\lambda\|S_0'-H_S\|_{2,1}\\
&\ge\|\widetilde{L}\|_*\hspace{-0.05cm}+\hspace{-0.05cm}\lambda\|\widetilde{S}\|_{2,1}\hspace{-0.05cm}+\hspace{-0.05cm}\frac{1}{2}\|\mathcal{P}_{\mathcal{\widetilde{T}}^\perp}\mathcal{P}_{\mathcal{I}_0^\perp}H_L\|_*\hspace{-0.05cm}+\hspace{-0.05cm}\frac{\lambda}{2}\|\mathcal{P}_{\Omega_{obs}}\mathcal{P}_{\mathcal{I}_0^\perp}H_L\|_{2,1}-\frac{\lambda}{4}\|\mathcal{P}_{\Omega_{obs}^\perp}\mathcal{P}_{\mathcal{I}_0^\perp}H_L\|_F.
\end{split}
\end{equation*}
Also, note that
\begin{equation*}
\begin{split}
&\ \ \ \|\mathcal{P}_{\Omega_{obs}^\perp}\mathcal{P}_{\mathcal{I}_0^\perp}H_L\|_F\\&\le\|\mathcal{P}_{\Omega_{obs}^\perp}\mathcal{P}_\mathcal{\widetilde{T}^\perp}\mathcal{P}_{\mathcal{I}_0^\perp}H_L\|_F+\|\mathcal{P}_{\Omega_{obs}^\perp}\mathcal{P}_\mathcal{\widetilde{T}}\mathcal{P}_{\mathcal{I}_0^\perp}H_L\|_F\\
&\le\|\mathcal{P}_\mathcal{\widetilde{T}^\perp}\mathcal{P}_{\mathcal{I}_0^\perp}H_L\|_F+\frac{1}{2}\|\mathcal{P}_{\mathcal{I}_0^\perp}H_L\|_F\\
&\le\|\mathcal{P}_\mathcal{\widetilde{T}^\perp}\mathcal{P}_{\mathcal{I}_0^\perp}H_L\|_F\hspace{-0.1cm}+\hspace{-0.1cm}\frac{1}{2}\|\mathcal{P}_{\Omega_{obs}}\mathcal{P}_{\mathcal{I}_0^\perp}H_L\|_F\hspace{-0.1cm}+\hspace{-0.1cm}\frac{1}{2}\|\mathcal{P}_{\Omega_{obs}^\perp}\mathcal{P}_{\mathcal{I}_0^\perp}H_L\|_F.
\end{split}
\end{equation*}
That is
\begin{equation*}
\|\mathcal{P}_{\Omega_{obs}^\perp}\mathcal{P}_{\mathcal{I}_0^\perp}H_L\|_F\le2\|\mathcal{P}_\mathcal{\widetilde{T}^\perp}\mathcal{P}_{\mathcal{I}_0^\perp}H_L\|_F+\|\mathcal{P}_{\Omega_{obs}}\mathcal{P}_{\mathcal{I}_0^\perp}H_L\|_F.
\end{equation*}
Therefore, we have
\begin{equation*}
\begin{split}
&\ \ \ \ \ \|L_0+H_L\|_*+\lambda\|S_0'-H_S\|_{2,1}\\
&\ge\|\widetilde{L}\|_*+\lambda\|\widetilde{S}\|_{2,1}+\frac{1-\lambda}{2}\|\mathcal{P}_{\mathcal{\widetilde{T}}^\perp}\mathcal{P}_{\mathcal{I}_0^\perp}H_L\|_*+\frac{\lambda}{4}\|\mathcal{P}_{\Omega_{obs}}\mathcal{P}_{\mathcal{I}_0^\perp}H_L\|_{2,1}.
\end{split}
\end{equation*}
Since the pair $(L_0+H_L,S_0'-H_S)$ is optimal to problem \eqref{equ: robust MC wrt any basis}, we have
\begin{equation*}
\mathcal{P}_{\mathcal{\widetilde{T}}^\perp}\mathcal{P}_{\mathcal{I}_0^\perp}H_L=0\ \ \mbox{and}\ \ \mathcal{P}_{\Omega_{obs}}\mathcal{P}_{\mathcal{I}_0^\perp}H_L=0,
\end{equation*}
i.e., $\mathcal{P}_{\mathcal{I}_0^\perp}H_L\in\mathcal{\widetilde{T}}\cap\Omega_{obs}^\perp=\{0\}$. So $H_L\in\mathcal{I}_0$.
\end{proof}

By Lemma \ref{lemma: dual conditions for exact column support}, to prove the exact recovery of column support, it suffices to show a dual certificate $\widetilde{W}$ such that
\begin{equation}
\label{equ: dual conditions for exact support}
\begin{cases}
\mbox{(a)}\ \ \widetilde{W}\in \mathcal{\widetilde{T}^\perp},\\
\mbox{(b)}\ \ \|\widetilde{W}\|\le1/2,\\
\mbox{(c)}\ \ \|\mathcal{P}_{\Omega_{obs}^\perp}(\widetilde{U}\widetilde{V}^*+\widetilde{W})\|_F\le\lambda/4,\\
\mbox{(d)}\ \ \|\mathcal{P}_{\Omega_{obs}}(\widetilde{U}\widetilde{V}^*+\widetilde{W})\|_{2,\infty}\le\lambda/2.
\end{cases}
\end{equation}

\subsubsection{Certification by Golfing Scheme}
\label{subsubsection: certification by golfing scheme}
The remainder of the proofs is to construct $\widetilde{W}$ which satisfies
dual conditions \eqref{equ: dual conditions for exact support}.
Before introducing our construction, we assume that $\mathcal{K}_{obs}\sim\mbox{Ber}(p_0)$
(For brevity, we also write it as $\Omega_{obs}\sim\mbox{Ber}(p_0)$), or equivalently
$\Omega_{obs}^\perp\sim\mbox{Ber}(1-p_0)$. Note that
$\Omega_{obs}$ has the same distribution as that of
$\Omega_1\cup\Omega_2\cup...\cup\Omega_{j_0}$,
where each $\Omega_j$ is drawn from $\mbox{Ber}(q)$, $j_0=\lceil\log
n_{(1)}\rceil$,
and $q$ fulfills $1-p_0=(1-q)^{j_0}$ (Note that $q=\Theta(1/\log n_{(1)})$ implies $p_0=\Theta(1)$). We construct $\widetilde{W}$ based on such a
distribution.

To construct $\widetilde{W}$, we use the golfing scheme introduced by
\cite{Gross2011recovering} and \cite{Candes}. Let
$
Z_{j-1}=\mathcal{P}_\mathcal{\widetilde{T}}(\widetilde{U}\widetilde{V}^*-Y_{j-1}).
$
We construct $\widetilde{W}$ by an inductive procedure:
\begin{equation}
\label{equ: W^L}
\begin{split}
Y_j=Y_{j-1}+&q^{-1}\mathcal{P}_{\Omega_j}Z_{j-1}=q^{-1}\sum_{k=1}^j\mathcal{P}_{\Omega_k}Z_{k-1},\\
&\widetilde{W}=\mathcal{P}_{\mathcal{\widetilde{T}}^\perp}Y_{j_0}.
\end{split}
\end{equation}
Also, we have the inductive equation:
\begin{equation}
\label{equ: inductive equation on Z}
Z_j=Z_{j-1}-q^{-1}\mathcal{P}_{\mathcal{\widetilde{T}}}\mathcal{P}_{\Omega_j}Z_{j-1}.
\end{equation}

\subsubsection{Proofs of Dual Conditions}
\label{subsubsection: proofs of dual conditions 1}
We now prove that the dual variables satisfy our dual conditions. The proof basically uses the recursiveness of the dual variables that we construct.
\begin{lemma}
\label{lemma: correction for column support}
Assume that $\Omega_{obs}\sim\mbox{Ber}(p_0)$ and
$j_0=\lceil\log n\rceil$. Then under the other assumptions of
Theorem \ref{theorem: exact recovery under Bernoulli sampling}, $W^L$ given by \eqref{equ:
W^L} obeys the dual conditions \eqref{equ: dual conditions for exact support}.
\end{lemma}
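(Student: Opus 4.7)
The plan is to verify dual conditions \eqref{equ: dual conditions for exact support}(a)--(d) in turn by exploiting the recursive structure of the golfing construction in \eqref{equ: W^L}. Condition (a) is immediate because $\widetilde{W}=\mathcal{P}_{\widetilde{\mathcal{T}}^\perp}Y_{j_0}$ lies in $\widetilde{\mathcal{T}}^\perp$ by construction. The algebraic workhorse I would establish first is the identity $\widetilde{U}\widetilde{V}^*+\widetilde{W}=Y_{j_0}+Z_{j_0}$, which follows from $\widetilde{U}\widetilde{V}^*\in\widetilde{\mathcal{T}}$ together with the definition $Z_j=\mathcal{P}_{\widetilde{\mathcal{T}}}(\widetilde{U}\widetilde{V}^*-Y_j)$. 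Since each $Y_j$ is supported on $\Omega_{obs}$ (it is built from projections onto $\Omega_k\subseteq\Omega_{obs}$), this rewrites (c) and (d) respectively as $\|\mathcal{P}_{\Omega_{obs}^\perp}Z_{j_0}\|_F\le\lambda/4$ and $\|Y_{j_0}+\mathcal{P}_{\Omega_{obs}}Z_{j_0}\|_{2,\infty}\le\lambda/2$.

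Next I would invoke the generalized concentration machinery built in the appendix, which lifts the standard-basis results of \cite{Candes,Gross2011recovering} to the basis $\{\omega_{ij}\}$. The key estimate is $\|\mathcal{P}_{\widetilde{\mathcal{T}}}-q^{-1}\mathcal{P}_{\widetilde{\mathcal{T}}}\mathcal{P}_{\Omega_k}\mathcal{P}_{\widetilde{\mathcal{T}}}\|\le 1/2$ with overwhelming probability, which applies because $q=\Theta(1/\log n_{(1)})$ and $\mathrm{rank}(\widetilde{L})\le\rho_r n_{(2)}/[\mu(\log n_{(1)})^3]$. Feeding this into \eqref{equ: inductive equation on Z} yields $\|Z_k\|_F\le 2^{-k}\|Z_0\|_F\le 2^{-k}\sqrt{r}$ together with a parallel geometric decay of the generalized coefficient norm $\|X\|_{\omega,\infty}:=\max_{ij}|\langle X,\omega_{ij}\rangle|$, which is the natural avatar of the entrywise infinity norm when one moves to the $\omega$-basis.

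Conditions (b) and (c) then fall out quickly. For (c), $\|\mathcal{P}_{\Omega_{obs}^\perp}Z_{j_0}\|_F\le\|Z_{j_0}\|_F\le 2^{-j_0}\sqrt{r}$, and with $j_0=\lceil\log n\rceil$ together with the rank bound this sits comfortably below $\lambda/4=1/(4\sqrt{\log n})$. For (b), I would use $\mathcal{P}_{\widetilde{\mathcal{T}}^\perp}Z_{k-1}=0$ to rewrite $\widetilde{W}=\sum_{k=1}^{j_0}\mathcal{P}_{\widetilde{\mathcal{T}}^\perp}\bigl(q^{-1}\mathcal{P}_{\Omega_k}Z_{k-1}-Z_{k-1}\bigr)$, pull the operator norm inside the sum, and apply a matrix Bernstein bound to each summand producing a factor like $\sqrt{n_{(1)}\log n_{(1)}/q}\,\|Z_{k-1}\|_{\omega,\infty}$; the geometric decay of $\|Z_{k-1}\|_{\omega,\infty}$ combined with the choice $\lambda=1/\sqrt{\log n}$ then gives $\|\widetilde{W}\|\le 1/2$.

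The hard part will be (d), because the target is a column $\ell_{2,\infty}$ bound rather than the entrywise $\ell_\infty$ bound for which the classical golfing analysis was tailored. After the reduction $\mathcal{P}_{\Omega_{obs}}(\widetilde{U}\widetilde{V}^*+\widetilde{W})=Y_{j_0}+\mathcal{P}_{\Omega_{obs}}Z_{j_0}$, I would control $\|\mathcal{P}_{\Omega_{obs}}Z_{j_0}\|_{2,\infty}\le\|Z_{j_0}\|_F$ via the Frobenius contraction, and bound the column norms of $Y_{j_0}$ by expanding $Y_{j_0}=q^{-1}\sum_{k}\mathcal{P}_{\Omega_k}Z_{k-1}$ and applying a column-wise vector Bernstein inequality together with the incoherence \eqref{equ: incoherence 1} of $\widetilde{L}$. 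What makes this delicate is that $\{\omega_{ij}\}$ is not an element-wise basis, so the column-wise decomposition of $\mathcal{P}_{\Omega_k}$ hinges crucially on the commutativity property \eqref{equ: commutative operators}; without it the column structure of $\ell_{2,\infty}$ would be destroyed. Invoking this property correctly is precisely why our argument survives under a general basis, and it is the step that I expect to consume the bulk of the technical effort.
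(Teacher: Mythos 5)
Your proposal is correct and follows essentially the same route as the paper's proof: condition (a) by construction, the identity $\widetilde{U}\widetilde{V}^*+\widetilde{W}=Y_{j_0}+Z_{j_0}$, geometric decay of both $\|Z_k\|_F$ and $\max_{ab}|\langle Z_k,\omega_{ab}\rangle|$ via the generalized concentration lemmas (Lemmas \ref{lemma: Omega 2 norm} and \ref{lemma: Omega infty infty norm}), the operator-norm deviation bound of Lemma \ref{lemma: Omega 2 infty norm} with the factor $\sqrt{n_{(1)}\log n_{(1)}/q}\max_{ab}|\langle Z_{k-1},\omega_{ab}\rangle|$ for condition (b), and the reduction of (c) and (d) to bounds on $Z_{j_0}$ and $Y_{j_0}$. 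The one small deviation is at (d): the paper needs no column-wise vector Bernstein inequality there, because the column structure \eqref{equ: commutative operators} (each $\omega_{ij}$ supported in column $j$) yields the deterministic bound $\|\mathcal{P}_{\Omega_k}Z_{k-1}\|_{2,\infty}\le\sqrt{m}\max_{ab}|\langle Z_{k-1},\omega_{ab}\rangle|$, which combined with incoherence \eqref{equ: incoherence 2} and the geometric decay already gives $\|Y_{j_0}\|_{2,\infty}\le\lambda/8$, so the step you flagged as consuming the bulk of the effort is in fact handled deterministically.
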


\begin{proof}
By Lemma \ref{lemma: Omega 2 norm}, Lemma \ref{lemma: Omega infty infty norm} and the inductive equation \eqref{equ: inductive equation on Z}, when $q\ge c'\mu r\log
n_{(1)}/\varepsilon^2n_{(2)}$ for some $c'$, the following inequalities hold with an
overwhelming probability:
\begin{equation*}
\|Z_j\|_F<\varepsilon^j\|Z_0\|_F=\varepsilon^j\|\widetilde{U}\widetilde{V}^*\|_F,
\end{equation*}
\begin{equation*}
\max_{ab}|\langle Z_j,\omega_{ab}\rangle|<\varepsilon^j\max_{ab}|\langle Z_0,\omega_{ab}\rangle|=\varepsilon^j\max_{ab}|\langle \widetilde{U}\widetilde{V}^*,\omega_{ab}\rangle|.
\end{equation*}
Now we check the three conditions in \eqref{equ: dual conditions for exact support}.

(a) The construction \eqref{equ: W^L} implies the condition (a) holds.

(b) It holds that
\begin{equation*}
\begin{split} \|\widetilde{W}\|&=\|\mathcal{P}_\mathcal{\widetilde{T}^\perp}Y_{j_0}\|\\&\le\sum_{k=1}^{j_0}\|q^{-1}\mathcal{P}_\mathcal{\widetilde{T}^\perp}\mathcal{P}_{\Omega_k}Z_{k-1}\|\\
&=\sum_{k=1}^{j_0}\|\mathcal{P}_\mathcal{\widetilde{T}^\perp}(q^{-1}\mathcal{P}_{\Omega_k}Z_{k-1}-Z_{k-1})\|\\
&\le\sum_{k=1}^{j_0}\|q^{-1}\mathcal{P}_{\Omega_k}Z_{k-1}-Z_{k-1}\|\\
&\le C_0\sqrt{\frac{n_{(1)}\log n_{(1)}}{q}}\sum_{k=1}^{j_0}\max_{ab}|\langle Z_{k-1},\omega_{ab}\rangle|\\
&\le C_0\frac{1}{1-\varepsilon}\sqrt{\frac{n_{(1)}\log n_{(1)}}{q}}\sqrt{\frac{\mu r}{mn}}\\
&=C_0\frac{1}{1-\varepsilon}\sqrt{\frac{\rho_r}{q(\log n_{(1)})^2}}\\
&\le\frac{1}{4},
\end{split}
\end{equation*}
where the third inequality holds due to Lemma \ref{lemma: Omega 2 infty norm} and the last inequality holds once $q\ge\Theta(1/\log n_{(1)})$.

(c) Notice that $Y_{j_0}\in\Omega_{obs}$, i.e.,
$\mathcal{P}_{\Omega_{obs}^\perp}Y_{j_0}=\mathbf{0}$. Then the
following inequalities follow
\begin{equation}
\label{equ: proof of W^L 1}
\begin{split}
\|\mathcal{P}_{\Omega_{obs}^\perp}(\widetilde{U}\widetilde{V}^*+\widetilde{W})\|_F&=\|\mathcal{P}_{\Omega_{obs}^\perp}(\widetilde{U}\widetilde{V}^*+\mathcal{P}_\mathcal{\widetilde{T}^\perp}Y_{j_0})\|_F\\
&=\|\mathcal{P}_{\Omega_{obs}^\perp}(\widetilde{U}\widetilde{V}^*+Y_{j_0}-\mathcal{P}_\mathcal{\widetilde{T}}Y_{j_0})\|_F\\
&=\|\mathcal{P}_{\Omega_{obs}^\perp}(\widetilde{U}\widetilde{V}^*-\mathcal{P}_\mathcal{\widetilde{T}}Y_{j_0})\|_F\\
&=\|\mathcal{P}_{\Omega_{obs}^\perp}Z_{j_0}\|_F\\
&\le\varepsilon^{j_0}\sqrt{r}\ \ \ \ (j_0=\lceil\log n_{(1)}\rceil\ge\log n_{(1)})\\
&\le\frac{\sqrt{r}}{n_{(1)}}\ \ \ \ (\varepsilon<e^{-1})\\
&\le\frac{\sqrt{\rho_r\mu}}{\sqrt{n_{(1)}}(\log n_{(1)})^{3/2}}\\&\le\frac{\lambda}{4}.
\end{split}
\end{equation}

(d) We first note that $\widetilde{U}\widetilde{V}^*+\widetilde{W}=Z_{j_0}+Y_{j_0}$. It
follows from \eqref{equ: proof of W^L 1} that
\vspace{+0.3cm}
\begin{equation*}
\begin{split}
\|\mathcal{P}_{\Omega_{obs}}Z_{j_0}\|_{2,\infty}&\le\|\mathcal{P}_{\Omega_{obs}}Z_{j_0}\|_F\le\varepsilon^{j_0}\sqrt{r}\le\frac{\lambda}{8}.
\end{split}
\end{equation*}
Moreover, we have
\begin{equation*}
\begin{split}
\|\mathcal{P}_{\Omega_{obs}}Y_{j_0}\|_{2,\infty}&=\|Y_{j_0}\|_{2,\infty}\\&\le\sum_{k=1}^{j_0}q^{-1}\|\mathcal{P}_{\Omega_k}Z_{k-1}\|_{2,\infty}\\
&\le q^{-1}\sqrt{m}\sum_{k=1}^{j_0}\max_{ab}|\langle Z_{k-1},\omega_{ab}\rangle|\\&\le\frac{1}{q}\sqrt{\frac{\mu r}{n}}\sum_{k=1}^{j_0}\varepsilon^{k-1}\\
&\le\frac{1}{q}\sqrt{\frac{\mu r}{n_{(2)}}}\sum_{k=1}^{j_0}\varepsilon^{k-1}\\
&\le c\log n_{(1)}\frac{1}{(\log n_{(1)})^{3/2}}\\
&\le\frac{\lambda}{8},
\end{split}
\end{equation*}
where the fifth inequality holds once $q\ge\Theta(1/\log n_{(1)})$.
Thus $\|\mathcal{P}_{\Omega_{obs}}(\widetilde{U}\widetilde{V}^*+\widetilde{W})\|_{2,\infty}\le\lambda/4$.
\end{proof}

\subsection{Exact Recovery of Column Space}
\label{section: exact recovery of column space}
\subsubsection{Dual Conditions}
\label{subsubsection: dual conditions 2}
We then establish dual conditions for the exact recovery of the column space. The following lemma shows that if we can construct dual variables satisfying certain conditions, the column space of the underlying matrix can be exactly recovered with a high probability by solving model \eqref{equ: robust MC wrt any basis}.
\begin{lemma}[Dual Conditions for Exact Column Space]
\label{lemma: dual conditions for exact column space}
Let $(L^*,S^*)=(L_0+H_L,S_0'-H_S)$ be any solution to the extended robust MC \eqref{equ: robust MC wrt any basis}, $\widehat{L}=L_0+\mathcal{P}_{\mathcal{U}_0}H_L$, and $\widehat{S}=S_0'-\mathcal{P}_{\Omega_{obs}}\mathcal{P}_{\mathcal{U}_0}H_L$. Suppose that $\mathcal{\widehat{V}}\cap\Gamma^\perp=\{0\}$ and
\begin{equation*}
\widehat{W}=\lambda(\mathcal{B}(\widehat{S})+\widehat{F}),
\end{equation*}
where $\widehat{W}\in\mathcal{\widehat{V}^\perp}\cap\Omega_{obs}$, $\|\widehat{W}\|\le1/2$, $\mathcal{P}_{\Gamma^\perp}\widehat{F}=0$, and $\|\widehat{F}\|_{2,\infty}\le1/2$. Then $L^*$ exactly recovers the column support of $L_0$, i.e., $H_L\in\mathcal{U}_0$.
\end{lemma}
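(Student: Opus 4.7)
The plan is to mirror the proof of Lemma \ref{lemma: dual conditions for exact column support}: expand the objective $\|\cdot\|_*+\lambda\|\cdot\|_{2,1}$ at the auxiliary point $(\widehat L,\widehat S)$, which is chosen precisely so that the perturbations $L^*-\widehat L$ and $S^*-\widehat S$ isolate the ``column-space'' component $\mathcal{P}_{\mathcal{U}_0^\perp}H_L$ of $H_L$. A direct computation using $\mathcal{R}=\mathcal{P}_{\Omega_{obs}}$ and the idempotence identity $\mathcal{P}_{\Omega_{obs}}(I-\mathcal{P}_{\Omega_{obs}})=0$ shows that $(\widehat L,\widehat S)$ is itself feasible for \eqref{equ: robust MC wrt any basis}; the relation $\mathcal{P}_{\Omega_{obs}}H_L=H_S$ (which follows, exactly as in Lemma \ref{lemma: dual conditions for exact column support}, from Lemma \ref{lemma: inside Omega} and the feasibility of $(L^*,S^*)$) then yields the convenient identities $L^*-\widehat L=\mathcal{P}_{\mathcal{U}_0^\perp}H_L$ and $S^*-\widehat S=-\mathcal{P}_{\Omega_{obs}}\mathcal{P}_{\mathcal{U}_0^\perp}H_L$.

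Because $\widehat{L}=L_0+\mathcal{P}_{\mathcal{U}_0}H_L$ has column space contained in $\mathcal{U}_0$, I obtain $\widehat{U}\widehat{V}^*\in\mathcal{U}_0$ (hence $\langle\widehat{U}\widehat{V}^*,\mathcal{P}_{\mathcal{U}_0^\perp}H_L\rangle=0$) and $\mathcal{P}_{\widehat{\mathcal{T}}^\perp}\mathcal{P}_{\mathcal{U}_0^\perp}=\mathcal{P}_{\widehat{\mathcal{V}}^\perp}\mathcal{P}_{\mathcal{U}_0^\perp}$. Choosing the subgradients $\widehat{Q}\in\widehat{\mathcal{T}}^\perp$ of $\|\widehat L\|_*$ and $\widehat{E}\in\widehat{\mathcal{I}}^\perp$ of $\|\widehat S\|_{2,1}$ so as to attain the dualities $\|\mathcal{P}_{\widehat{\mathcal{V}}^\perp}\mathcal{P}_{\mathcal{U}_0^\perp}H_L\|_*$ and $\|\mathcal{P}_{\widehat{\mathcal{I}}^\perp}\mathcal{P}_{\Omega_{obs}}\mathcal{P}_{\mathcal{U}_0^\perp}H_L\|_{2,1}$, the subgradient inequalities for the nuclear and $\ell_{2,1}$ norms give
\begin{equation*}
\|L^*\|_*+\lambda\|S^*\|_{2,1}\ge\|\widehat{L}\|_*+\lambda\|\widehat{S}\|_{2,1}+\|\mathcal{P}_{\widehat{\mathcal{V}}^\perp}\mathcal{P}_{\mathcal{U}_0^\perp}H_L\|_*+\lambda\|\mathcal{P}_{\widehat{\mathcal{I}}^\perp}\mathcal{P}_{\Omega_{obs}}\mathcal{P}_{\mathcal{U}_0^\perp}H_L\|_{2,1}-\lambda\langle\mathcal{B}(\widehat{S}),\mathcal{P}_{\Omega_{obs}}\mathcal{P}_{\mathcal{U}_0^\perp}H_L\rangle.
\end{equation*}

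The third step eliminates the cross term by rewriting $\lambda\mathcal{B}(\widehat{S})=\widehat{W}-\lambda\widehat{F}$ via the dual-certificate identity. Since $\widehat{W}\in\widehat{\mathcal{V}}^\perp\cap\Omega_{obs}$ and $\widehat{F}\in\Gamma$, the cross term splits as $\langle\widehat{W},\mathcal{P}_{\widehat{\mathcal{V}}^\perp}\mathcal{P}_{\mathcal{U}_0^\perp}H_L\rangle-\lambda\langle\widehat{F},\mathcal{P}_\Gamma\mathcal{P}_{\mathcal{U}_0^\perp}H_L\rangle$, which by the nuclear/operator and $\ell_{2,1}/\ell_{2,\infty}$ dualities together with $\|\widehat W\|\le 1/2$ and $\|\widehat F\|_{2,\infty}\le 1/2$ is bounded in absolute value by $\tfrac12\|\mathcal{P}_{\widehat{\mathcal{V}}^\perp}\mathcal{P}_{\mathcal{U}_0^\perp}H_L\|_*+\tfrac{\lambda}{2}\|\mathcal{P}_\Gamma\mathcal{P}_{\mathcal{U}_0^\perp}H_L\|_{2,1}$. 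Invoking the column-support identity $H_L\in\mathcal{I}_0$ already established in Lemma \ref{lemma: dual conditions for exact column support} kills the $\ell_{2,1}$ summand, since $\mathcal{P}_{\mathcal{I}_0^\perp}H_L=0$ and the within-column commutativity guaranteed by \eqref{equ: commutative operators} forces $\mathcal{P}_\Gamma\mathcal{P}_{\mathcal{U}_0^\perp}H_L=0$.

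Assembling these estimates and comparing the optimality of $(L^*,S^*)$ with the feasibility of $(\widehat L,\widehat S)$, the surviving non-negative residual $\tfrac12\|\mathcal{P}_{\widehat{\mathcal{V}}^\perp}\mathcal{P}_{\mathcal{U}_0^\perp}H_L\|_*+\lambda\|\mathcal{P}_{\widehat{\mathcal{I}}^\perp}\mathcal{P}_{\Omega_{obs}}\mathcal{P}_{\mathcal{U}_0^\perp}H_L\|_{2,1}$ must vanish, so $\mathcal{P}_{\mathcal{U}_0^\perp}H_L\in\widehat{\mathcal{V}}$. A second appeal to $H_L\in\mathcal{I}_0$ yields $\mathcal{P}_{\mathcal{U}_0^\perp}H_L\in\mathcal{I}_0\subseteq\mathcal{I}_0+\Omega_{obs}^\perp=\Gamma^\perp$, so the hypothesis $\widehat{\mathcal{V}}\cap\Gamma^\perp=\{0\}$ forces $\mathcal{P}_{\mathcal{U}_0^\perp}H_L=0$ and therefore $H_L\in\mathcal{U}_0$. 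The main technical hurdle lies in the third step: the splitting $\widehat W=\lambda(\mathcal{B}(\widehat S)+\widehat F)$ with the restricted support $\widehat F\in\Gamma$ and the operator-norm control on $\widehat W$ is exactly what is needed to absorb the cross term $\langle\mathcal{B}(\widehat S),\cdot\rangle$ into the objective margin; any weaker dual splitting would leak a term that could not be controlled purely by the incoherence and ambiguity assumptions on $\widehat L$ and $\widehat S$.
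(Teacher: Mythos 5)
Your proposal follows the paper's proof of this lemma almost verbatim in its skeleton: feasibility of $(\widehat L,\widehat S)$ via Lemma \ref{lemma: inside Omega} and $\mathcal{P}_{\Omega_{obs}}H_L=H_S$, the perturbation identities $L^*-\widehat L=\mathcal{P}_{\mathcal{U}_0^\perp}H_L$ and $S^*-\widehat S=-\mathcal{P}_{\Omega_{obs}}\mathcal{P}_{\mathcal{U}_0^\perp}H_L$, the subgradient expansion at $(\widehat L,\widehat S)$ with $\widehat Q,\widehat E$ chosen to attain $\|\mathcal{P}_{\widehat{\mathcal{V}}^\perp}\mathcal{P}_{\mathcal{U}_0^\perp}H_L\|_*$ and $\|\mathcal{P}_\Gamma\mathcal{P}_{\mathcal{U}_0^\perp}H_L\|_{2,1}$, the splitting $\lambda\mathcal{B}(\widehat S)=\widehat W-\lambda\widehat F$ bounded by $\tfrac{1}{2}\|\mathcal{P}_{\widehat{\mathcal{V}}^\perp}\mathcal{P}_{\mathcal{U}_0^\perp}H_L\|_*+\tfrac{\lambda}{2}\|\mathcal{P}_\Gamma\mathcal{P}_{\mathcal{U}_0^\perp}H_L\|_{2,1}$, and the endgame through $\widehat{\mathcal{V}}\cap\Gamma^\perp=\{0\}$. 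You even make explicit two facts the paper passes over silently, namely $\langle\widehat U\widehat V^*,\mathcal{P}_{\mathcal{U}_0^\perp}H_L\rangle=0$ and $\mathcal{P}_{\widehat{\mathcal{T}}^\perp}\mathcal{P}_{\mathcal{U}_0^\perp}=\mathcal{P}_{\widehat{\mathcal{V}}^\perp}\mathcal{P}_{\mathcal{U}_0^\perp}$, both of which are exactly why the nuclear-norm residual involves only $\widehat{\mathcal{V}}^\perp$.

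The one genuine defect is your two appeals to ``$H_L\in\mathcal{I}_0$ already established in Lemma \ref{lemma: dual conditions for exact column support}.'' That conclusion holds only under the hypotheses of Lemma \ref{lemma: dual conditions for exact column support} (the bound $\|\mathcal{P}_{\Omega_{obs}^\perp}\mathcal{P}_{\widetilde{\mathcal{T}}}\|\le 1/2$ and the certificate $\widetilde W$), none of which appear among the hypotheses of the present lemma; the paper's whole design in expanding at two different anchor points is that the column-support and column-space recoveries are certified \emph{independently}, and its proof here never uses $\mathcal{I}_0$-membership of $H_L$. As written, your argument therefore proves a weaker statement, valid only under the conjunction of both certificates. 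Fortunately the appeal is unnecessary and your own chain closes without it: you already carry the margin term $\lambda\|\mathcal{P}_{\widehat{\mathcal{I}}^\perp}\mathcal{P}_{\Omega_{obs}}\mathcal{P}_{\mathcal{U}_0^\perp}H_L\|_{2,1}$, which by \eqref{equ: commutative operators} equals $\lambda\|\mathcal{P}_\Gamma\mathcal{P}_{\mathcal{U}_0^\perp}H_L\|_{2,1}$; subtracting the cross term's $\tfrac{\lambda}{2}\|\mathcal{P}_\Gamma\mathcal{P}_{\mathcal{U}_0^\perp}H_L\|_{2,1}$ leaves a net $+\tfrac{\lambda}{2}$ share, so optimality of $(L^*,S^*)$ against the feasible $(\widehat L,\widehat S)$ forces both $\|\mathcal{P}_{\widehat{\mathcal{V}}^\perp}\mathcal{P}_{\mathcal{U}_0^\perp}H_L\|_*=0$ and $\|\mathcal{P}_\Gamma\mathcal{P}_{\mathcal{U}_0^\perp}H_L\|_{2,1}=0$, and the second equality yields $\mathcal{P}_{\mathcal{U}_0^\perp}H_L\in\Gamma^\perp$ directly, with no detour through $\mathcal{I}_0$. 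With that one-line repair your proof coincides with the paper's.
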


\begin{proof}
We first recall that the subgradients of nuclear norm
and $\ell_{2,1}$ norm are as follows:
\begin{equation*}
\partial_{\widehat{L}}\|\widehat{L}\|_*=\{\widehat{U}\widehat{V}^*+\widehat{Q}: \widehat{Q}\in\mathcal{\widehat{T}^\perp},\|\widehat{Q}\|\le1\},
\end{equation*}
\begin{equation*}
\partial_{\widehat{S}}\|\widehat{S}\|_{2,1}=\{\mathcal{B}(\widehat{S})+\widehat{E}: \widehat{E}\in\mathcal{\widehat{I}^\perp},\|\widehat{E}\|_{2,\infty}\le1\}.
\end{equation*}

By the definition of subgradient, the inequality follows
\begin{equation*}
\begin{split}
&\ \ \ \ \ \|L_0+H_L\|_*+\lambda\|S_0'-H_S\|_{2,1}\\
&\ge\|\widehat{L}\|_*+\lambda\|\widehat{S}\|_{2,1}+\langle\widehat{U}\widehat{V}^*+\widehat{Q}, \mathcal{P}_{\mathcal{U}_0^\perp}H_L\rangle-\lambda\langle\mathcal{B}(\widehat{S})+\widehat{E},\mathcal{P}_{\Omega_{obs}}\mathcal{P}_{\mathcal{U}_0^\perp}H_L\rangle\\
&\ge\|\widehat{L}\|_*+\lambda\|\widehat{S}\|_{2,1}+\langle\widehat{U}\widehat{V}^*, \mathcal{P}_{\mathcal{U}_0^\perp}H_L\rangle+\langle\widehat{Q}, \mathcal{P}_{\mathcal{U}_0^\perp}H_L\rangle-\lambda\langle\mathcal{B}(\widehat{S}), \mathcal{P}_{\mathcal{U}_0^\perp}H_L\rangle-\lambda\langle\widehat{E}, \mathcal{P}_{\Omega_{obs}}\mathcal{P}_{\mathcal{U}_0^\perp}H_L\rangle.\\
\end{split}
\end{equation*}
Now adopt $\widehat{Q}$ such that $\langle \widehat{Q},\mathcal{P}_{\mathcal{U}_0^\perp}H_L\rangle=\|\mathcal{P}_{\widehat{\mathcal{V}}^\perp}\mathcal{P}_{\mathcal{U}_0^\perp}H_L\|_*$ and $\langle\widehat{E}, \mathcal{P}_{\Omega_{obs}}\mathcal{P}_{\mathcal{U}_0^\perp}H_L\rangle=-\|\mathcal{P}_\Gamma\mathcal{P}_{\mathcal{U}_0^\perp}H_L\|_{2,1}$\footnote{By the duality between the nuclear norm and the operator norm, there exists a $Q$ such that $\langle Q, \mathcal{P}_{\widehat{\mathcal{V}}^\perp}\mathcal{P}_{\mathcal{U}_0^\perp}H\rangle=\|\mathcal{P}_{\widehat{\mathcal{V}}^\perp}\mathcal{P}_{\mathcal{U}_0^\perp}H\|_*$ and $\|Q\|\le1$. Thus we take $\widehat{Q}=\mathcal{P}_{\mathcal{U}_0^\perp}\mathcal{P}_{\widehat{\mathcal{V}}^\perp}Q\in\mathcal{\widehat{T}^\perp}$. It holds similarly for $\widehat{E}$.}. We have
\begin{equation*}
\begin{split}
&\ \ \ \ \ \|L_0+H_L\|_*+\lambda\|S_0'-H_S\|_{2,1}\\
&\ge\|\widehat{L}\|_*+\lambda\|\widehat{S}\|_{2,1}+\|\mathcal{P}_{\widehat{\mathcal{V}}^\perp}\mathcal{P}_{\mathcal{U}_0^\perp}H_L\|_*+\lambda\|\mathcal{P}_\Gamma\mathcal{P}_{\mathcal{U}_0^\perp}H_L\|_{2,1}\\&\ \ \ -\lambda\langle\mathcal{B}(\widehat{S}), \mathcal{P}_{\mathcal{U}_0^\perp}H_L\rangle.
\end{split}
\end{equation*}
Notice that
\begin{equation*}
\begin{split}
|\langle-\lambda\mathcal{B}(\widehat{S}),& \mathcal{P}_{\mathcal{U}_0^\perp}H_L\rangle|=|\langle\lambda \widehat{F}-\widehat{W}, \mathcal{P}_{\mathcal{U}_0^\perp}H_L\rangle|\\
&\le|\langle \widehat{W}, \mathcal{P}_{\mathcal{U}_0^\perp}H_L\rangle|+\lambda|\langle \widehat{F}, \mathcal{P}_{\mathcal{U}_0^\perp}H_L\rangle|\\
&\le\frac{1}{2}\|\mathcal{P}_{\mathcal{\widehat{V}}^\perp}\mathcal{P}_{\mathcal{U}_0^\perp}H_L\|_*+\frac{\lambda}{2}\|\mathcal{P}_\Gamma\mathcal{P}_{\mathcal{U}_0^\perp}H_L\|_{2,1}.
\end{split}
\end{equation*}
Hence
\begin{equation*}
\begin{split}
&\ \ \ \ \|\widehat{L}\|_*+\lambda\|\widehat{S}\|_{2,1}\\
&\ge\|L_0+H_L\|_*+\lambda\|S_0'-H_S\|_{2,1}\\
&\ge\|\widehat{L}\|_*+\lambda\|\widehat{S}\|_{2,1}+\frac{1}{2}\|\mathcal{P}_{\widehat{\mathcal{V}}^\perp}\mathcal{P}_{\mathcal{U}_0^\perp}H_L\|_*+ \frac{\lambda}{2}\|\mathcal{P}_\Gamma\mathcal{P}_{\mathcal{U}_0^\perp}H_L\|_{2,1}.
\end{split}
\end{equation*}
So $\mathcal{P}_{\mathcal{U}_0^\perp}H_L\in\widehat{\mathcal{V}}\cap\Gamma^\perp=\{0\}$, i.e., $H_L\in\mathcal{U}_0$.

\end{proof}

The following lemma shows that one of the conditions in Lemma \ref{lemma: dual conditions for exact column space} holds true.
\begin{lemma}
Under the assumption of Theorem \ref{theorem: exact recovery under Bernoulli sampling}, $\widehat{\mathcal{V}}\cap\Gamma^\perp=\{0\}$.
\end{lemma}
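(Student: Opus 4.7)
The plan is to prove this lemma by showing that $\mathcal{P}_\Gamma$ is bounded below (hence injective) on $\widehat{\mathcal{V}}$, which immediately forces the intersection to be trivial: any matrix $M$ satisfying $M = \mathcal{P}_{\widehat{\mathcal{V}}} M$ and $\mathcal{P}_\Gamma M = 0$ would have to vanish.

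First, I would unpack the sampling model for $\Gamma$. Because $\mathcal{I}_0 \sim \mathrm{Ber}(a)$ is column-wise, $\Omega_{obs} \sim \mathrm{Ber}(p_0)$ is entry-wise w.r.t.\ the basis $\{\omega_{ij}\}$, and the two are independent, the indicator of $(i,j) \in \Gamma = \mathcal{I}_0^\perp \cap \Omega_{obs}$ is i.i.d.\ Bernoulli with parameter $p := (1-a)p_0$. Under the hypotheses of Theorem \ref{theorem: exact recovery under Bernoulli sampling}, $a = O(n_{(2)}/(\mu n (\log n_{(1)})^3)) = o(1)$ and $p_0 \geq \rho_p$, so $p \geq \rho_p/2$ for $n$ large enough—a positive constant.

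Second, I would invoke the incoherence assumption on $\widehat{L}$, namely condition \eqref{equ: incoherence 1}, which reads $\max_{ij}\|\mathcal{P}_{\widehat{\mathcal{V}}}\omega_{ij}\|_F^2 \leq \mu r/n$. Together with the Bernoulli-sampling concentration argument that the paper develops in its preliminaries (the appropriate general-basis analog of Lemma \ref{lemma: Omega 2 norm}), this yields with overwhelming probability
\begin{equation*}
\bigl\|p^{-1}\mathcal{P}_{\widehat{\mathcal{V}}}\mathcal{P}_\Gamma\mathcal{P}_{\widehat{\mathcal{V}}} - \mathcal{P}_{\widehat{\mathcal{V}}}\bigr\| \leq \tfrac{1}{2}.
\end{equation*}
The required sampling-density condition $p \geq c\mu r \log n_{(1)}/n$ for some absolute constant $c$ is easily satisfied because the rank bound $r \leq \rho_r n_{(2)}/(\mu (\log n_{(1)})^3)$ yields $\mu r \log n_{(1)}/n \leq \rho_r/(\log n_{(1)})^2 \ll p$.

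Third, I would translate this spectral estimate into the desired injectivity. The operator inequality $\mathcal{P}_{\widehat{\mathcal{V}}}\mathcal{P}_\Gamma\mathcal{P}_{\widehat{\mathcal{V}}} \succeq (p/2)\mathcal{P}_{\widehat{\mathcal{V}}}$ gives, for any $M \in \widehat{\mathcal{V}}$,
\begin{equation*}
\|\mathcal{P}_\Gamma M\|_F^2 = \langle M,\mathcal{P}_\Gamma M\rangle = \langle M, \mathcal{P}_{\widehat{\mathcal{V}}}\mathcal{P}_\Gamma\mathcal{P}_{\widehat{\mathcal{V}}} M\rangle \geq \tfrac{p}{2}\|M\|_F^2.
\end{equation*}
Thus $\mathcal{P}_\Gamma$ is bounded below on $\widehat{\mathcal{V}}$. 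Any $M \in \widehat{\mathcal{V}} \cap \Gamma^\perp$ must satisfy $\mathcal{P}_\Gamma M = 0$, and the inequality then forces $\|M\|_F = 0$, i.e., $M = 0$.

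The main obstacle I anticipate is that $\widehat{\mathcal{V}}$ is not statistically independent of $\Gamma$, since $\widehat{L} = L_0 + \mathcal{P}_{\mathcal{U}_0} H_L$ depends on the optimizer $(L^*,S^*)$, which in turn depends on $\Omega_{obs}$. The standard and clean fix—implicitly licensed by the paper's setup—is to treat the incoherence of $\widehat{L}$ as a \emph{deterministic} hypothesis and apply the Bernoulli concentration argument to any fixed $r$-dimensional subspace meeting the incoherence bound; the probability statement then pertains purely to the randomness in $\mathcal{I}_0$ and $\Omega_{obs}$, and the conclusion $\widehat{\mathcal{V}} \cap \Gamma^\perp = \{0\}$ follows on the overwhelmingly probable event underlying Theorem \ref{theorem: exact recovery under Bernoulli sampling}.
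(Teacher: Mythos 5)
Your overall mechanism is the same as the paper's: both arguments reduce the lemma to the spectral concentration of $\mathcal{P}_{\widehat{\mathcal{V}}}\mathcal{P}_\Gamma\mathcal{P}_{\widehat{\mathcal{V}}}$ around a constant multiple of $\mathcal{P}_{\widehat{\mathcal{V}}}$, and your injectivity step, $\|\mathcal{P}_\Gamma M\|_F^2=\langle M,\mathcal{P}_{\widehat{\mathcal{V}}}\mathcal{P}_\Gamma\mathcal{P}_{\widehat{\mathcal{V}}}M\rangle\ge\frac{p}{2}\|M\|_F^2$ for $M\in\widehat{\mathcal{V}}$, is a cleaner packaging of the paper's inner-product computation (the paper states it as a quantified inequality between $\|\mathcal{P}_{\widehat{\mathcal{V}}}\mathcal{P}_{\Gamma^\perp}M\|_F$ and $\|\mathcal{P}_{\widehat{\mathcal{V}}^\perp}\mathcal{P}_{\Gamma^\perp}M\|_F$ and then specializes to $M\in\widehat{\mathcal{V}}\cap\Gamma^\perp$). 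Your sanity check that the rank bound makes the sampling-density requirement trivially satisfied is also correct.

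However, there is a genuine flaw in how you justify the concentration estimate. The entries of $\Gamma=\mathcal{I}_0^\perp\cap\Omega_{obs}$ are \emph{not} i.i.d.\ $\mbox{Ber}((1-a)p_0)$: the indicator of $(i,j)\in\Gamma$ is $(1-\delta_j)\kappa_{ij}$, where the column-level variable $\delta_j\sim\mbox{Ber}(a)$ is shared by every entry of column $j$, so indicators within a column are positively correlated --- $\mathbb{P}\left[(i,j)\in\Gamma\ \text{and}\ (i',j)\in\Gamma\right]=(1-a)p_0^2\neq\left((1-a)p_0\right)^2$. Consequently Lemma \ref{lemma: Omega 2 norm}, whose proof is a Bernstein bound over \emph{independent entry-wise} summands, does not apply to $\Gamma$ as you invoke it, and your second step fails as stated. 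This dependence structure is precisely why the paper proves a separate column-blocked concentration result, Lemma \ref{lemma: relation 2 norm}, in which the independent summands are whole columns $X_j=(a^{-1}\delta_j-1)\sum_i\kappa_{ij}\,\mathcal{P}_{\widehat{\mathcal{V}}}\omega_{ij}\otimes\mathcal{P}_{\widehat{\mathcal{V}}}\omega_{ij}$, and then combines it with Lemma \ref{lemma: Omega 2 norm} via a triangle inequality in Corollary \ref{corollary: sum 2 norm} --- which is exactly the operator bound your proof needs, and is the result the paper's own proof of this lemma cites. So the estimate you assert is true under the theorem's hypotheses, and your proof is repaired by replacing the appeal to an ``analog of Lemma \ref{lemma: Omega 2 norm}'' with Corollary \ref{corollary: sum 2 norm}; the rest then goes through verbatim. (Two side remarks: your marginal parameter $(1-a)p_0$ is the right one --- the factor $p(1-p_0)$ in the paper's own proof appears to be a typo --- and your closing observation about treating the incoherence of the random $\widehat{L}$ as a deterministic hypothesis matches how the paper itself handles the dependence of $\widehat{\mathcal{V}}$ on $\mathcal{I}_0$ and $\Omega_{obs}$.)
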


\begin{proof}
We first prove $p(1-p_0)\|\mathcal{P}_{\mathcal{\widehat{V}}}\mathcal{P}_{\Gamma^\perp}M\|_F\le2\|\mathcal{P}_{\mathcal{\widehat{V}}^\perp}\mathcal{P}_{\Gamma^\perp}M\|_F$ for any matrix $M$. Let $M'=\mathcal{P}_{\Gamma^\perp}M$. Because $\mathcal{P}_\Gamma\mathcal{P}_\mathcal{\widehat{V}}M'+\mathcal{P}_\Gamma\mathcal{P}_\mathcal{\widehat{V}^\perp}M'=0$, we have $\|\mathcal{P}_\Gamma\mathcal{P}_\mathcal{\widehat{V}}M'\|_F=\|\mathcal{P}_\Gamma\mathcal{P}_\mathcal{\widehat{V}^\perp}M'\|_F\le\|\mathcal{P}_\mathcal{\widehat{V}^\perp}M'\|_F$. Note that
\begin{equation*}
\begin{split}
&\ \ \ \ (p(1-p_0))^{-1}\|\mathcal{P}_\Gamma\mathcal{P}_\mathcal{\widehat{V}}M'\|_F\\&=(p(1-p_0))^{-1}\langle\mathcal{P}_\Gamma\mathcal{P}_\mathcal{\widehat{V}}M',\mathcal{P}_\Gamma\mathcal{P}_\mathcal{\widehat{V}}M'\rangle\\
&=\langle\mathcal{P}_\mathcal{\widehat{V}}M',(p(1-p_0))^{-1}\mathcal{P}_\mathcal{\widehat{V}}\mathcal{P}_\Gamma\mathcal{P}_\mathcal{\widehat{V}}M'\rangle\\
&=\langle \mathcal{P}_\mathcal{\widehat{V}}M',((p(1-p_0))^{-1}\mathcal{P}_\mathcal{\widehat{V}}\mathcal{P}_\Gamma\mathcal{P}_\mathcal{\widehat{V}}-\mathcal{P}_\mathcal{\widehat{V}})\mathcal{P}_\mathcal{\widehat{V}}M'\rangle\\&\ \ \ +\langle\mathcal{P}_\mathcal{\widehat{V}}M',\mathcal{P}_\mathcal{\widehat{V}}M'\rangle\\
&\ge\|\mathcal{P}_\mathcal{\widehat{V}}M'\|_F-\frac{1}{2}\|\mathcal{P}_\mathcal{\widehat{V}}M'\|_F\\
&=\frac{1}{2}\|\mathcal{P}_\mathcal{\widehat{V}}M'\|_F,
\end{split}
\end{equation*}
where the first inequality holds due to Corollary \ref{corollary: sum 2 norm}.
So we have
\begin{equation*}
\|\mathcal{P}_\mathcal{\widehat{V}^\perp}M'\|_F\ge\|\mathcal{P}_\Gamma\mathcal{P}_\mathcal{\widehat{V}}M'\|_F\ge\frac{p(1-p_0)}{2}\|\mathcal{P}_\mathcal{\widehat{V}}M'\|_F,
\end{equation*}
i.e., $p(1-p_0)\|\mathcal{P}_{\mathcal{\widehat{V}}}\mathcal{P}_{\Gamma^\perp}M\|_F\le2\|\mathcal{P}_{\mathcal{\widehat{V}}^\perp}\mathcal{P}_{\Gamma^\perp}M\|_F$.

Now let $M\in\widehat{\mathcal{V}}\cap\Gamma^\perp$. Then $\mathcal{P}_{\mathcal{\widehat{V}}^\perp}\mathcal{P}_{\Gamma^\perp}M=0$ while $\mathcal{P}_{\mathcal{\widehat{V}}}\mathcal{P}_{\Gamma^\perp}M=M$. So $p(1-p_0)\|M\|_F\le0$, i.e., $M=0$. Therefore, $\widehat{\mathcal{V}}\cap\Gamma^\perp=\{0\}$.
\end{proof}

By Lemma \ref{lemma: dual conditions for exact column space}, to prove the exact recovery of column space, it suffices to show a dual certificate $\widetilde{W}$ such that
\begin{equation}
\label{equ: dual conditions for exact space}
\begin{cases}
\widehat{W}\in \mathcal{\widehat{V}^\perp}\cap\Omega_{obs},\\
\|\widehat{W}\|\le1/2,\\
\mathcal{P}_\Pi\widehat{W}=\lambda\mathcal{B(\widehat{S})},\ \ \Pi=\mathcal{I}_0\cap\Omega_{obs},\\
\|\mathcal{P}_\Gamma\widehat{W}\|_{2,\infty}\le\lambda/2,\ \ \Gamma=\mathcal{I}_0^\perp\cap\Omega_{obs}.
\end{cases}
\end{equation}

\subsubsection{Certification by Least Squares}
\label{subsubsection: certification by least squares}
The remainder of proofs is to construct $\widehat{W}$ which satisfies
the dual conditions \eqref{equ: dual conditions for exact space}. Note that $\mathcal{\widehat{I}}=\mathcal{I}_0\sim\mbox{Ber}(p)$.
To construct $W$, we consider the method of least squares, which is
\begin{equation}
\label{equ: W}
\widehat{W}=\lambda\mathcal{P}_{\mathcal{\widehat{V}^\perp}\cap\Omega_{obs}}\sum_{k\ge0}(\mathcal{P}_\Pi\mathcal{P}_{\mathcal{\widehat{V}}+\Omega_{obs}^\perp}\mathcal{P}_\Pi)^k\mathcal{B}(\widehat{S}),
\end{equation}
where the Neumann series is well defined due to $\|\mathcal{P}_\Pi\mathcal{P}_{\mathcal{\widehat{V}}+\Omega_{obs}^\perp}\mathcal{P}_\Pi\|<1$. Indeed, note that $\Pi\subseteq\Omega_{obs}$. So we have the identity:
\begin{equation*}
\begin{split}
&\ \ \ \ \mathcal{P}_\Pi\mathcal{P}_{\mathcal{\widehat{V}}+\Omega_{obs}^\perp}\mathcal{P}_\Pi\\&=\mathcal{P}_\Pi(\mathcal{P}_{\mathcal{\widehat{V}}}+\mathcal{P}_{\Omega_{obs}^\perp}-\mathcal{P}_{\mathcal{\widehat{V}}}\mathcal{P}_{\Omega_{obs}^\perp}-\mathcal{P}_{\Omega_{obs}^\perp}\mathcal{P}_{\mathcal{\widehat{V}}}+...)\mathcal{P}_\Pi\\
&=\mathcal{P}_\Pi\mathcal{P}_{\mathcal{\widehat{V}}}(\mathcal{P}_{\mathcal{\widehat{V}}}+\mathcal{P}_{\mathcal{\widehat{V}}}\mathcal{P}_{\Omega_{obs}^\perp}\mathcal{P}_{\mathcal{\widehat{V}}}+...)\mathcal{P}_{\mathcal{\widehat{V}}}\mathcal{P}_\Pi\\
&=\mathcal{P}_\Pi\mathcal{P}_{\mathcal{\widehat{V}}}(\mathcal{P}_{\mathcal{\widehat{V}}}-\mathcal{P}_{\mathcal{\widehat{V}}}\mathcal{P}_{\Omega_{obs}^\perp}\mathcal{P}_{\mathcal{\widehat{V}}})^{-1}\mathcal{P}_{\mathcal{\widehat{V}}}\mathcal{P}_\Pi\\
&=\mathcal{P}_\Pi\mathcal{P}_{\mathcal{\widehat{V}}}(\mathcal{P}_{\mathcal{\widehat{V}}}\mathcal{P}_{\Omega_{obs}}\mathcal{P}_{\mathcal{\widehat{V}}})^{-1}\mathcal{P}_{\mathcal{\widehat{V}}}\mathcal{P}_\Pi
.
\end{split}
\end{equation*}
By Lemma \ref{lemma: Omega 2 norm} and the triangle inequality, we have that $1-(1-p_0)^{-1}\|\mathcal{P}_{\mathcal{\widehat{V}}}\mathcal{P}_{\Omega_{obs}}\mathcal{P}_{\mathcal{\widehat{V}}}\|<1/2$, i.e., $\|(\mathcal{P}_{\mathcal{\widehat{V}}}\mathcal{P}_{\Omega_{obs}}\mathcal{P}_{\mathcal{\widehat{V}}})^{-1}\|<2/(1-p_0)$. Therefore,
\begin{equation}
\label{equ: operator norm}
\begin{split}
\|\mathcal{P}_\Pi\mathcal{P}_{\mathcal{\widehat{V}}+\Omega_{obs}^\perp}\|^2&=\|\mathcal{P}_\Pi\mathcal{P}_{\mathcal{\widehat{V}}+\Omega_{obs}^\perp}\mathcal{P}_\Pi\|\\&\le2(1-p_0)^{-1}\|\mathcal{P}_{\mathcal{\widehat{V}}}\mathcal{P}_\Pi\|^2\\&\le2(1-p_0)^{-1}\sigma^2\\&<1,
\end{split}
\end{equation}
where the second inequality holds due to Corollary \ref{corollary: gamma v bound}.
Note that $\mathcal{P}_\Omega \widehat{W}=\lambda\mathcal{B}(\mathcal{\widehat{S}})$ and $\widehat{W}\in\mathcal{\widehat{V}^\perp}\cap\Omega_{obs}$. So to prove the dual conditions \eqref{equ: dual conditions for exact space}, it suffices to show that
\begin{flalign}
\label{equ: dual conditions for W}
\begin{cases}
\mbox{(a)}\ \ \|\widehat{W}\|\le1/2,\\
\mbox{(b)}\ \ \|\mathcal{P}_\Gamma\widehat{W}\|_{2,\infty}\le\lambda/2.
\end{cases}
\end{flalign}

\subsubsection{Proofs of Dual Conditions}
\label{subsubsection: proofs of dual conditions 2}
We now prove that the dual variables that we construct above satisfy our dual conditions.
\begin{lemma}
\label{lemma: correction for column space}
Under the assumptions of Theorem \ref{theorem: exact recovery under Bernoulli sampling}, $\widehat{W}$ given by \eqref{equ: W} obeys dual
conditions \eqref{equ: dual conditions for W}.
\end{lemma}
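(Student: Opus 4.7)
The plan is to verify the two conditions (a) $\|\widehat{W}\|\le 1/2$ and (b) $\|\mathcal{P}_\Gamma \widehat{W}\|_{2,\infty}\le\lambda/2$ separately, exploiting the Neumann-series form of $\widehat{W}$ in \eqref{equ: W} together with the ambiguity condition on $\widehat{S}$ and the incoherence of $\widehat{\mathcal{V}}$.

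For condition (a), I would apply the triangle inequality to the Neumann series, factor out $\lambda$, use the fact that $\mathcal{P}_{\widehat{\mathcal{V}}^\perp\cap\Omega_{obs}}$ is an orthogonal projection of operator norm at most one, and invoke the spectral bound $\|\mathcal{P}_\Pi\mathcal{P}_{\widehat{\mathcal{V}}+\Omega_{obs}^\perp}\mathcal{P}_\Pi\|\le 2\sigma^2/(1-p_0)$ from \eqref{equ: operator norm} to sum the geometric series. The remaining ingredient is the ambiguity bound $\|\mathcal{B}(\widehat{S})\|\le\mu'$ from \eqref{equ: incoherence on S_0}. Combining, one obtains $\|\widehat{W}\|\le\lambda\mu'/(1-2\sigma^2/(1-p_0))$. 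Choosing $\sigma$ small (a consequence of $a\le\rho_a n_{(2)}/(\mu n(\log n_{(1)})^3)$) and using $\lambda=1/\sqrt{\log n}$, this is at most $1/2$ for $n$ large enough.

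For condition (b), the key structural step is to invoke the orthogonal decomposition $I=\mathcal{P}_{\widehat{\mathcal{V}}+\Omega_{obs}^\perp}+\mathcal{P}_{\widehat{\mathcal{V}}^\perp\cap\Omega_{obs}}$ applied to $Y=\sum_k(\mathcal{P}_\Pi\mathcal{P}_{\widehat{\mathcal{V}}+\Omega_{obs}^\perp}\mathcal{P}_\Pi)^k\mathcal{B}(\widehat{S})$, which yields $\widehat{W}=\lambda(Y-\mathcal{P}_{\widehat{\mathcal{V}}+\Omega_{obs}^\perp}Y)$. Every term of $Y$ lives in $\Pi\subseteq\mathcal{I}_0$, so $\mathcal{P}_\Gamma Y=0$ because $\Gamma=\mathcal{I}_0^\perp\cap\Omega_{obs}$, and therefore $\mathcal{P}_\Gamma\widehat{W}=-\lambda\mathcal{P}_\Gamma\mathcal{P}_{\widehat{\mathcal{V}}+\Omega_{obs}^\perp}Y$. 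Since $\Gamma\subseteq\Omega_{obs}$, the $\Omega_{obs}^\perp$-piece does not contribute and, using the identity derived just before \eqref{equ: operator norm}, this reduces to $\mathcal{P}_\Gamma\widehat{W}=-\lambda\mathcal{P}_\Gamma\mathcal{P}_{\widehat{\mathcal{V}}}(\mathcal{P}_{\widehat{\mathcal{V}}}\mathcal{P}_{\Omega_{obs}}\mathcal{P}_{\widehat{\mathcal{V}}})^{-1}\mathcal{P}_{\widehat{\mathcal{V}}}Y$. I would then bound the $\ell_{2,\infty}$ norm column-by-column using the incoherence bound $\|\mathcal{P}_{\widehat{\mathcal{V}}}\omega_{ij}\|_F^2\le\mu r/n$ of \eqref{equ: incoherence 1}, the inverse bound $\|(\mathcal{P}_{\widehat{\mathcal{V}}}\mathcal{P}_{\Omega_{obs}}\mathcal{P}_{\widehat{\mathcal{V}}})^{-1}\|<2/(1-p_0)$ noted before \eqref{equ: operator norm}, the convergence of the Neumann series for $Y$, and the ambiguity bound $\|\mathcal{B}(\widehat{S})\|\le\mu'$. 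A standard matrix Bernstein / union-bound argument over the Bernoulli randomness of $\mathcal{I}_0$ and $\Omega_{obs}$ then produces a per-column estimate of order $\lambda\mu'\sqrt{\mu r/n_{(2)}}\cdot O(\sqrt{\log n})$, which is dominated by $\lambda/2$ under the rank and outlier bounds in \eqref{equ: upper bounds under Bernoulli sampling}.

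The main obstacle is condition (b). The spectral bound in (a) is a one-line Neumann estimate, but the $\ell_{2,\infty}$ norm is not dominated by the operator norm in general, and converting the spectral calculation into a per-column bound requires carefully combining the incoherence of $\widehat{\mathcal{V}}$ with the Bernoulli randomness in both $\mathcal{I}_0$ and $\Omega_{obs}$. The choice $\lambda=1/\sqrt{\log n}$ is precisely what absorbs the $\sqrt{\log n}$ factor that emerges from the concentration inequalities used in this step.
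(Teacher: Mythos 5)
Your treatment of condition (a) matches the paper's: split off the $k=0$ term of the Neumann series, bound $\|\mathcal{H}\|$ by the geometric sum via \eqref{equ: operator norm}, and use the ambiguity bound $\|\mathcal{B}(\widehat{S})\|\le\mu'$ from \eqref{equ: incoherence on S_0} together with $\lambda=1/\sqrt{\log n}$ so that $\lambda\mu'\le 1/4$. The first half of your (b) is also exactly the paper's move: since $\mathcal{G}(\mathcal{B}(\widehat{S}))\in\mathcal{I}_0$ and $I=\mathcal{P}_{\mathcal{\widehat{V}}^\perp\cap\Omega_{obs}}+\mathcal{P}_{\mathcal{\widehat{V}}+\Omega_{obs}^\perp}$, one gets $\mathcal{P}_\Gamma\widehat{W}=-\lambda\mathcal{P}_{\mathcal{I}_0^\perp}\mathcal{P}_{\mathcal{\widehat{V}}+\Omega_{obs}^\perp}\mathcal{G}(\mathcal{B}(\widehat{S}))$, and your further rewriting through $(\mathcal{P}_{\mathcal{\widehat{V}}}\mathcal{P}_{\Omega_{obs}}\mathcal{P}_{\mathcal{\widehat{V}}})^{-1}$ is consistent with the identity derived before \eqref{equ: operator norm}.

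The gap is the final step of (b). You invoke ``a standard matrix Bernstein / union-bound argument over the Bernoulli randomness of $\mathcal{I}_0$ and $\Omega_{obs}$,'' but the quantity being bounded is not amenable to fresh concentration: $\widehat{S}=S_0'-\mathcal{P}_{\Omega_{obs}}\mathcal{P}_{\mathcal{U}_0}H_L$ is defined through the optimal solution $H_L$, so $\mathcal{B}(\widehat{S})$ --- and the operator $\mathcal{G}$ itself, which is built from $\mathcal{P}_\Pi$ and $\mathcal{P}_{\Omega_{obs}^\perp}$ --- are correlated with precisely the Bernoulli variables you would sum over, and the independence structure Bernstein requires is absent. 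Nor is there a deterministic fallback along your route: Cauchy--Schwarz with $\|\mathcal{G}(\mathcal{B}(\widehat{S}))\|_F\lesssim\sqrt{an}$ and the incoherence factor $\sqrt{\mu r/n}$ gives a per-column bound of order $\sqrt{a\mu r}$, which is polynomially large under \eqref{equ: upper bounds under Bernoulli sampling}, not $o(1)$. The paper sidesteps both problems with a purely deterministic duality computation: setting $Q=\mathcal{P}_{\mathcal{\widehat{V}}+\Omega_{obs}^\perp}\mathcal{G}(\mathcal{B}(\widehat{S}))$, it evaluates $Q_{ij}=\langle\mathcal{B}(\widehat{S}),\mathcal{G}\mathcal{P}_\Pi\mathcal{P}_{\mathcal{\widehat{V}}+\Omega_{obs}^\perp}(\omega_{ij})\rangle$, exploits that the nonzero columns of $\mathcal{B}(\widehat{S})$ are \emph{unit} vectors, and concludes $\|Q_{:j}\|_2^2\le\|\mathcal{G}\|\,\|\mathcal{P}_\Pi\mathcal{P}_{\mathcal{\widehat{V}}+\Omega_{obs}^\perp}\|\le1/4$ directly from \eqref{equ: operator norm} --- all the probability is already spent in establishing that operator bound. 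Note also that your closing remark about $\lambda=1/\sqrt{\log n}$ absorbing a $\sqrt{\log n}$ factor in step (b) is off: condition (b) reads $\|\mathcal{P}_\Gamma\widehat{W}\|_{2,\infty}\le\lambda/2$ while $\widehat{W}$ is itself proportional to $\lambda$, so $\lambda$ cancels there; the specific choice of $\lambda$ does its work in condition (a), where $\lambda\|\mathcal{B}(\widehat{S})\|\le\mu'/\sqrt{\log n}$ must fall below $1/4$.
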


\begin{proof}
Let
$\mathcal{H}=\sum_{k\ge1}(\mathcal{P}_\Pi\mathcal{P}_{\mathcal{\widehat{V}}+\Omega_{obs}^\perp}\mathcal{P}_\Pi)^k$.
Then
\begin{equation}
\label{equ: proof of W 1}
\begin{split}
\widehat{W}&=\lambda\mathcal{P}_{\mathcal{\widehat{V}^\perp}\cap\Omega_{obs}}\sum_{k\ge0}(\mathcal{P}_\Pi\mathcal{P}_{\mathcal{\widehat{V}}+\Omega_{obs}^\perp}\mathcal{P}_\Pi)^k\mathcal{B}(\widehat{S})\\
&=\lambda\mathcal{P}_{\mathcal{\widehat{V}^\perp}\cap\Omega_{obs}}\mathcal{B}(\widehat{S})+\lambda\mathcal{P}_{\mathcal{\widehat{V}^\perp}\cap\Omega_{obs}}\mathcal{H}(\mathcal{B}(\widehat{S})),
\end{split}
\end{equation}
Now we check the two conditions in \eqref{equ: dual conditions for
W}.

(a) By the assumption, we have $\|\mathcal{B}(\widehat{S})\|\leq\mu'$.
Thus the first term in \eqref{equ: proof of W 1} obeys
\begin{equation}
\label{equ:W_first_term}
\lambda\left\Vert\mathcal{P}_{\mathcal{\widehat{V}^\perp}\cap\Omega_{obs}}\mathcal{B}(\widehat{S})\right\Vert\le\lambda\left\Vert\mathcal{B}(\widehat{S})\right\Vert\le\frac{1}{4}.
\end{equation}
For the second term, we have
\begin{equation*}
\lambda\|\mathcal{P}_{\mathcal{\widehat{V}^\perp}\cap\Omega_{obs}}\mathcal{H}(\mathcal{B}(\widehat{S}))\|\le\lambda\|\mathcal{H}\|\left\Vert\mathcal{B}(\widehat{S})\right\Vert.
\end{equation*}
Then according to \eqref{equ: operator norm} which states that $\|\mathcal{P}_{\mathcal{\widehat{V}}+\Omega_{obs}^\perp}\mathcal{P}_\Pi\|^2\le2\sigma^2/(1-p_0)\triangleq\sigma_0^2$ with high probability,
\begin{equation*}
\|\mathcal{H}\|\le\sum_{k\ge1}\sigma_0^{2k}=\frac{\sigma_0^2}{1-\sigma_0^2}\le1.
\end{equation*}
So
\begin{equation*}
\lambda\|\mathcal{P}_{\mathcal{\widehat{V}^\perp}\cap\Omega_{obs}}\mathcal{H}(\mathcal{B}(\widehat{S}))\|\le\frac{1}{4}.
\end{equation*}
That is
\begin{equation*}
\|\widehat{W}\|\le\frac{1}{2}.
\end{equation*}

\comment{
Now we focus on the second term in \eqref{equ: proof of W 1}.
Let $\mathcal{N}$ represent the $1/2$-net of unit ball
$\mathcal{S}^{n-1}$, whose cardinality $|\mathcal{N}|$ is at most
$6^n$~\cite{eldar2012compressed}. Then a standard argument
in~\cite{eldar2012compressed} showed that
\begin{equation}
\label{equ: proof of W 2}
\left\Vert\mathcal{H}(\mathcal{B}(\widehat{S}))\right\Vert\le4\sup_{x,y\in\mathcal{N}}\langle y,\mathcal{H}(\mathcal{B}(\widehat{S}))x\rangle.
\end{equation}
Note that the operator $\mathcal{H}$ is self-adjoint. Now let
\begin{equation*}
\begin{split}
X(x,y)&=\langle y, \mathcal{H}(\mathcal{B}(\widehat{S}))x\rangle\\
&=\langle \mathcal{H}(yx^*),\mathcal{B}(\widehat{S})\rangle\\
&=\sum_j\langle [\mathcal{H}(yx^*)]_{:j},\mathcal{B}(\widehat{S})_{:j}\rangle\\
&=\sum_j\langle [\mathcal{H}(yx^*)]_{:j},\delta_jB_{:j}\rangle\\
&=\sum_j \delta_jB_{:j}^*[\mathcal{H}(yx^*)]_{:j},
\end{split}
\end{equation*}
where $B$ is a matrix such that $\|B_{:j}\|_2=1$ and $\mathcal{B}(\widehat{S})_{:j}=\delta_jB_{:j}$. According to \cite{Candes}, any guarantee for the symmetrical Bernoulli distribution with parameter $p/2$ automatically holds for the Bernoulli distribution with parameter $p$. So we assume that $\delta_j$ is a random variable such that

\begin{equation*}
\delta_j=
\begin{cases}
1, & w.p.\ p/2,\\
0, & w.p.\ 1-p,\\
-1, & w.p.\ p/2.
\end{cases}
\end{equation*}
Notice that
\begin{equation*}
\begin{split}
\sum_{j=1}^n|\delta_jB_{:j}^*[\mathcal{H}(yx^*)]_{:j}|^2&\le\sum_{j=1}^n\|B_{:j}\|_2^2\|[\mathcal{H}(yx^*)]_{:j}\|_2^2\\
&=\sum_{j=1}^n\|[\mathcal{H}(yx^*)]_{:j}\|_2^2\\
&=\|\mathcal{H}(yx^*)\|_F^2\\
&\le\|\mathcal{H}\|^2\|yx^*\|_F^2\\
&=\|\mathcal{H}\|^2\|y\|_2^2\|x\|_2^2=\|\mathcal{H}\|^2.
\end{split}
\end{equation*}
Also note that $\mathbb{E}X(x,y)=0$. Thus, by the Hoeffding inequality, we have
\begin{equation*}
\mathbb{P}\left(|X(x,y)|>t|\mathcal{\widehat{I}}\right)\le2\mbox{exp}\left(-\frac{t^2}{2\|\mathcal{H}\|^2}\right).
\end{equation*}
As a result,
\begin{equation*}
\mathbb{P}\left(\sup_{x,y\in\mathcal{N}}|X(x,y)|>t|\mathcal{\widehat{I}}\right)\le2|\mathcal{N}|^2\mbox{exp}\left(-\frac{t^2}{2\|\mathcal{H}\|^2}\right).
\end{equation*}
Namely, by inequality \eqref{equ: proof of W 2},
\begin{equation*}
\mathbb{P}(\Vert\mathcal{H}(\mathcal{B}(\widehat{S}))\Vert>t|\mathcal{\widehat{I}})\le2|\mathcal{N}|^2\mbox{exp}\left(-\frac{t^2}{32\|\mathcal{H}\|^2}\right).
\end{equation*}
Now suppose that
$\|\mathcal{P}_\mathcal{\widehat{V}}\mathcal{P}_\Pi\|\le\sigma$.
Then according to \eqref{equ: operator norm} which states that $\|\mathcal{P}_{\mathcal{\widehat{V}}+\Omega_{obs}^\perp}\mathcal{P}_\Pi\|^2\le2\sigma^2/(1-p_0)\triangleq\sigma_0^2$,
\begin{equation*}
\|\mathcal{H}\|\le\sum_{k\ge1}\sigma_0^{2k}=\frac{\sigma_0^2}{1-\sigma_0^2}\triangleq\frac{1}{\gamma},
\end{equation*}
where $\gamma$ can be sufficient large, and we have
\begin{equation*}
\begin{split}
&\ \ \ \ \ \mathbb{P}(\|\mathcal{H}(\mathcal{B}(\widehat{S}))\|>t)\\
&\le\mathbb{P}(\|\mathcal{H}(\mathcal{B}(\widehat{S}))\|>t\ |\ \|\mathcal{P}_\mathcal{\widehat{V}}\mathcal{P}_\Pi\|\le\sigma)+\mathbb{P}(\|\mathcal{P}_\mathcal{\widehat{V}}\mathcal{P}_\Pi\|>\sigma)\\
&\le2|\mathcal{N}|^2\mbox{exp}\left(-\frac{t^2}{32\|\mathcal{H}\|^2}\right)+\mathbb{P}(\|\mathcal{P}_\mathcal{\widehat{V}}\mathcal{P}_\Pi\|>\sigma)\\
&\le2|\mathcal{N}|^2\mbox{exp}\left(-\frac{\gamma^2t^2}{32}\right)+\mathbb{P}(\|\mathcal{P}_\mathcal{\widehat{V}}\mathcal{P}_\Pi\|>\sigma),
\end{split}
\end{equation*}
where $\mathbb{P}(\|\mathcal{P}_\mathcal{\widehat{V}}\mathcal{P}_\Pi\|>\sigma)$ is tiny. Adopt $t=1/(4\lambda)=\sqrt{\log n}/4$. Then
$\lambda\|\mathcal{H}(\mathcal{B}(\widehat{S}))\|\le1/4$ holds
with an overwhelming probability. This together with \eqref{equ: proof of W 1}
and \eqref{equ:W_first_term} proves
$\|\widehat{W}\|\le1/2$.
}

(b) Let $\mathcal{G}$ stand for
$\mathcal{G}=\sum_{k\ge0}(\mathcal{P}_\Pi\mathcal{P}_{\mathcal{\widehat{V}}+\Omega_{obs}^\perp}\mathcal{P}_\Pi)^k$. Then $\widehat{W}=\lambda\mathcal{P}_{\mathcal{\widehat{V}^\perp}\cap\Omega_{obs}}\mathcal{G}(\mathcal{B}(\widehat{S}))$.
Notice that $\mathcal{G}(\mathcal{B}(\widehat{S}))\in\mathcal{I}_0$. Thus
\begin{equation*}
\begin{split}
\mathcal{P}_\Gamma \widehat{W}&=\lambda\mathcal{P}_{\mathcal{I}_0^\perp}\mathcal{P}_{\mathcal{\widehat{V}^\perp}\cap\Omega_{obs}}\mathcal{G}(\mathcal{B}(\widehat{S}))\\
&=\lambda\mathcal{P}_{\mathcal{I}_0^\perp}\mathcal{G}(\mathcal{B}(\widehat{S}))-\lambda\mathcal{P}_{\mathcal{I}_0^\perp}\mathcal{P}_{\mathcal{\widehat{V}}+\Omega_{obs}^\perp}\mathcal{G}(\mathcal{B}(\widehat{S}))\\
&=-\lambda\mathcal{P}_{\mathcal{I}_0^\perp}\mathcal{P}_{\mathcal{\widehat{V}}+\Omega_{obs}^\perp}\mathcal{G}(\mathcal{B}(\widehat{S})).
\end{split}
\end{equation*}
Now denote $Q\triangleq\mathcal{P}_{\mathcal{\widehat{V}}+\Omega_{obs}^\perp}\mathcal{G}(\mathcal{B}(\widehat{S}))$. Note that

\begin{equation*}
\begin{split}
\|Q_{:j}\|^2&=\sum_iQ_{ij}^2=\sum_i\left\langle\mathcal{P}_{\mathcal{\widehat{V}}+\Omega_{obs}^\perp}\mathcal{G}(\mathcal{B}(\widehat{S})), \omega_{ij}\right\rangle^2\\
&=\sum_i\left\langle\mathcal{B}(\widehat{S}), \mathcal{G}\mathcal{P}_\Pi\mathcal{P}_{\mathcal{\widehat{V}}+\Omega_{obs}^\perp}(\omega_{ij})\right\rangle^2\\&=\sum_i\sum_{j_0}\left\langle[\mathcal{B}(\widehat{S})]_{:j_0}, \mathcal{G}\mathcal{P}_\Pi\mathcal{P}_{\mathcal{\widehat{V}}+\Omega_{obs}^\perp}(\omega_{ij})e_{j_0}\right\rangle^2\\
&=\sum_{j_0}\hspace{-0.1cm}\sum_i\hspace{-0.1cm}\left(\hspace{-0.1cm}(G_je_i)^*[\mathcal{B}(\widehat{S})]_{:j_0}\right)^2\hspace{-0.15cm}\left(\hspace{-0.1cm}(G_je_i)^*\mathcal{G}\mathcal{P}_\Pi\mathcal{P}_{\mathcal{\widehat{V}}+\Omega_{obs}^\perp}(\omega_{ij})e_{j_0}\hspace{-0.1cm}\right)^2\\
&\le\sum_{j_0}\left((G_je_{i_m})^*\mathcal{G}\mathcal{P}_\Pi\mathcal{P}_{\mathcal{\widehat{V}}+\Omega_{obs}^\perp}(\omega_{i_mj})e_{j_0}\right)^2\\
&=\left\Vert(G_je_{i_m})^*\mathcal{G}\mathcal{P}_\Pi\mathcal{P}_{\mathcal{\widehat{V}}+\Omega_{obs}^\perp}(\omega_{i_mj})\right\Vert_2^2\\&\le\|\mathcal{G}\|\ \left\Vert\mathcal{P}_\Pi\mathcal{P}_{\mathcal{\widehat{V}}+\Omega_{obs}^\perp}\right\Vert\\&\le\frac{1}{4},\quad\forall j,
\end{split}
\end{equation*}
where $i_m=\arg \max_i |e_i^*\mathcal{G}\mathcal{P}_\Omega\mathcal{P}_{\mathcal{\widehat{V}}+\Omega_{obs}^\perp}(e_ie_j^*)e_{j_0}|$, $G_j$ is a unitary matrix, and the second inequality holds because of fact \eqref{equ: operator norm}.
Thus $\|\mathcal{P}_\Gamma \widehat{W}\|_{2,\infty}=\lambda\|\mathcal{P}_{\mathcal{I}_0^\perp} Q\|_{2,\infty}\le\lambda\|Q\|_{2,\infty}\le\lambda/2$. The proofs are completed.
\end{proof}

\section{Algorithm}
\label{section: algorithm}
It is well known that robust MC can be efficiently solved by Alternating Direction Method of Multipliers (ADMM)~\cite{Lin2}, which is probably the most widely used method for solving nuclear norm minimization problems. In this section, we develop a faster algorithm, termed $\ell_{2,1}$ filtering algorithm, to solve the same problem.

\subsection{$\ell_{2,1}$ Filtering Algorithm}
\label{section: l{2,1} filtering algorithm}
Briefly speaking, our $\ell_{2,1}$ filtering algorithm consists of two steps:
recovering the ground truth subspace from a randomly selected sub-column matrix, and then processing the remaining columns via
$\ell_{2,1}$ norm based linear regression, which turns out to be a
least square problem.

\subsubsection{Recovering Subspace from a Seed Matrix}
To speed up the algorithm, our strategy is to focus on a small-scaled subproblem from which we can recover the same subspace as solving the whole original problem~\cite{zhang2015relations}. To this end, we partition the
whole matrix into two blocks. Suppose that $r=\mbox{rank}(L)\ll
\min\{m,n\}$. We randomly sample $k$ columns from $M$ by i.i.d. Ber$(d/n)$ (our Theorem \ref{theorem: success of steps 2} suggests choosing $d$ as $\Theta(r\log^3 n)$), forming a submatrix
$M_{l}$ (for brevity, we assume that $M_{l}$ is the leftmost submatrix
of $M$). Then we can partition $M$, $L$, and $S$ accordingly:
$$M=[M_{l},M_{r}],\quad S=[S_{l},S_{r}],\quad L=[L_{l},L_{r}].$$
To recover the desired subspace $\mbox{Range}(L_0)$ from $M_{l}$, we solve a small-scaled problem:
\begin{equation}
\label{equrecovery of the seed matrix}
\begin{split}
&\min_{L_{l},S_l}
\|L_{l}\|_*+\frac{1}{\sqrt{\log k}}\|S_{l}\|_{2,1},\ \ \mbox{s.t.} \ \
\mathcal{R}'(M_{l})=\mathcal{R}'(L_{l}+S_{l})\in\mathbb{R}^{m\times k},
\end{split}
\end{equation}
where $\mathcal{R}'(\cdot)$ is a linear mapping restricting $\mathcal{R}(\cdot)$ on the column index of $M_l$. As we will show in Section \ref{section: theoretical guarantee for algorithm}, when the Bernoulli parameter $d$ is no less than a lower bound, problem \eqref{equrecovery of the seed matrix} exactly recovers the correct subspace $\mbox{Range}(L_0)$ and the column support of $[S_0]_{l}$ with an overwhelming probability.

\subsubsection{$\ell_{2,1}$ Filtering Step}
Since $\mbox{Range}(L_{l})=\mbox{Range}(L_0)$ at an overwhelming probability, each column of
$L_{r}$ can be represented as the linear combinations of $L_{l}$. Namely, there exists a
representation matrix $Q\in\mathbb{R}^{k\times (n-k)}$ such that
\begin{equation*}
L_{r}=L_{l}Q.
\end{equation*}
Note that the part $S_{r}$ should have very sparse columns, so we use the following $\ell_{2,1}$ norm based linear
regression problem to explore the column supports of $S_r$:
\begin{equation}
\label{equl21 filtering}
\min_{Q,S_{r}} \|S_{r}\|_{2,1}, \ \ \mbox{s.t.} \ \ \mathcal{R'}(M_{r})=\mathcal{R'}(L_{l}Q+S_{r}).
\end{equation}

If we solve problem \eqref{equl21 filtering} directly by using
ADMM~\cite{LiuR2}, the complexity of our algorithm will be nearly
the same as that of solving the whole original problem.
Fortunately, we can solve \eqref{equl21 filtering} column-wise
due to the separability of $\ell_{2,1}$ norms.
Let $M_{r}^{(i)}$, $q^{(i)}$, and $S_{r}^{(i)}$ represent the $i$th
column of $M_{r}$, $Q$, and $S_{r}$, respectively
($i=1,...,n-sr$). Then problem \eqref{equl21 filtering}
could be decomposed into $n-k$ subproblems:
\begin{equation}
\label{equleast-square problem}
\begin{split}
&\min_{q^{(i)},S_{r}^{(i)}}
\|S_{r}^{(i)}\|_2,\ \ \ \mbox{s.t.} \ \
\mathcal{R}_i'(M_{r})^{(i)}=\mathcal{R}_i'(L_{l}q+S_{r})^{(i)}\in\mathbb{R}^m,\ \ i=1,...,n-k.
\end{split}
\end{equation}
Equivalently,
\begin{equation}
\label{equleast-square problem 2}
\begin{split}
&\min_{q^{(i)},\mathcal{Z}_i'(S_{r}^{(i)})}
\|\mathcal{Z}_i'(S_{r}^{(i)})\|_2,\ \ \mbox{s.t.} \ \
\mathcal{Z}_i'(M_{r}^{(i)})=\mathcal{Y}_i'(L_{l})q^{(i)}+\mathcal{Z}_i'(S_{r}^{(i)})\in\mathbb{R}^{h_i},\ \ i=1,...,n-k,
\end{split}
\end{equation}
where $\mathcal{Z}_i'$ is an operator functioning on a vector which wipes out the unobserved elements, $\mathcal{Y}_i'$ is a matrix operator which wipes out the corresponding rows of a matrix, and $h_i$ is the number of observed elements in the $i$th column.
As least square problems, \eqref{equleast-square problem} admits
closed-form solutions $q^{(i)}=\mathcal{Y}_i'(L_{l})^\dag \mathcal{Z}_i'(M_{r}^{(i)}),\ \mathcal{Z}_i'(S_{r}^{(i)})=\mathcal{Z}_i'(M_{r}^{(i)})-\mathcal{Y}_i'(L_{l})\mathcal{Y}_i'(L_{l})^\dag \mathcal{Z}_i'(M_{r}^{(i)}),
i=1,...,n-k$. If $\mathcal{Z}_i'(S_{r}^{(i)})\not=\textbf{0}$, we infer that the column $M_{r}^{(i)}$ is corrupted by noises.

We summarize our $\ell_{2,1}$ filtering algorithm in Algorithm \ref{algl_21 filtering}.
\begin{algorithm}
\caption{$\ell_{2,1}$ Filtering Algorithm for Exact Recovery of Subspace and Support}
\label{algl_21 filtering}
\begin{algorithmic}
\STATE {\bfseries Input:} Observed data matrix $\mathcal{R}(M)$ and estimated
rank $r$ (see Section \ref{Target Rank Estimation}).
\label{alg: 1}\STATE {\bfseries 1.} Randomly sample columns from $\mathcal{R}(M)\in\mathbb{R}^{m\times n}$ by Ber$(d/n)$ to form $\mathcal{R}(M_{l})\in\mathbb{R}^{m\times k}$;
\label{alg: 12}\STATE {\bfseries 2.} // Line 3 recovers the subspace from a seed matrix.
\label{alg: 6}\STATE {\bfseries 3.} Solve small-scaled $m\times k$ problem \eqref{equrecovery of the seed matrix} by ADMM and
obtain $L_{l}$, $Range(L_0)$, and column support of $S_{l}$;
\label{alg: 7}\STATE {\bfseries 4.} \textbf{For} $i$ from $1$ to $n-k$
\label{alg: 8}\STATE {\bfseries 5.} \ \ \ \ \ Conduct QR factorization on the matrix $\mathcal{Y}_i'(L_{l})$ as $\mathcal{Y}_i'(L_{l})=Q_iR_i$;
\label{alg: 13}\STATE {\bfseries 6.} \ \ \ \ \ // Line 7 implements $\ell_{2,1}$ filtering to the remaining columns.
\label{alg: 9}\STATE {\bfseries 7.} \ \ \ \ \ Recover $\mathcal{Z}_i'(S_{r}^{(i)})\in\mathbb{R}^{h_i}$ by solving \eqref{equleast-square problem 2}, which is $\mathcal{Z}_i'(S_{r}^{(i)})=\mathcal{Y}_i'(M_{r}^{(i)})-Q_i(Q_i^*\mathcal{Y}_i'(M_{r}^{(i)}))$;
\label{alg: 10}\STATE {\bfseries 8.} \ \ \ \ \ \textbf{If} $\mathcal{Y}_i'(S_{r}^{(i)})\not=\textbf{0}$
\label{alg: 11}\STATE {\bfseries 9.} \ \ \ \ \ \ \ \ \ \ Output ``$M_{r}^{(i)}$ is an outlier'';
\label{alg: 14}\STATE {\bfseries 10.} \ \ \ \ \textbf{End If}
\label{alg: 15}\STATE {\bfseries 11.} \textbf{End For}
\STATE {\bfseries Output:} Low-dimensional subspace Range$(L_0)$ and column support of matrix $S_0$.
\end{algorithmic}
\end{algorithm}

\subsection{Target Rank Estimation}
\label{Target Rank Estimation}
As we mentioned above, our algorithm requires the rank estimation $r$ as an input. For some specific applications, e.g., background modeling~\cite{Candes} and photometric stereo~\cite{wu2011robust}, the rank of the underlying matrix is known to us due to their physical properties. However, it is not always clear how to estimate the rank for some other cases. Here we provide a heuristic strategy for rank estimation.

Our strategy is based on the multiple trials of solving subproblem \eqref{equrecovery of the seed matrix}. Namely, starting from a small $r$ estimation, we solve subproblem \eqref{equrecovery of the seed matrix} by subsampling. If the optimal solution $L_l^*$ is such that $k/\mbox{rank}(L_l^*)\ge \Theta(\log^3 n)$, we accept the $r$ and output; Otherwise, we increase $r$ by a fixed step (and so increase $k$) and repeat the procedure until $k/n\ge0.5$. We require $k/n<0.5$ because the speed advantage of our $\ell_{2,1}$ algorithm vanishes if the low-rank assumption does not hold.

\subsection{Theoretical Guarantees}
\label{section: theoretical guarantee for algorithm}
In this section, we establish theoretical guarantees for our $\ell_{2,1}$ filtering algorithm. Namely, Algorithm \ref{algl_21 filtering} is able to exactly recover the range space of $L_0$ and the column support of $S_0$ with a high probability. To this end, we show that the two steps in Section \ref{section: l{2,1} filtering algorithm} succeed at overwhelming probabilities, respectively:
\begin{itemize}
\item
To guarantee the exact recovery of Range($L_0$) from the seed matrix, we prove that the sampled columns in Line \ref{alg: 1} exactly span the desired subspace Range($L_0$) when the columns are restricted to the set $\mathcal{I}_0^\perp$, i.e., Range$(\mathcal{P}_{\mathcal{I}_0^\perp}M_{l})$=Range($L_0$) (see Theorem \ref{theorem: complete bases of low-rank part}); Otherwise, only a subspace of Range($L_0$) can be recovered by Line \ref{alg: 6}.
Applying Theorem \ref{theorem: exact recovery under Bernoulli sampling}, we justify that Line \ref{alg: 6} recovers the ground truth subspace from the seed matrix with an overwhelming probability (see Theorem \ref{theorem: success of steps 2}).
\item
For $\ell_{2,1}$ filtering step, we demonstrate that, though operator $\mathcal{Y}_i'$ randomly wipes out several rows of $L_{l}$, the columns of $\mathcal{Y}_i'(L_{l})$ exactly span Range$(\mathcal{Y}_i'(L_0))$ with an overwhelming probability. So by checking whether the $i$th column belongs to Range$(\mathcal{Y}_i'(L_0))$, the least squares problem \eqref{equleast-square problem 2} suffices to examine whether a specific column $M_{r}^{(i)}$ is an outlier (see Theorem \ref{theorem: complete bases of elements observation}).
\end{itemize}

\subsubsection{Analysis for Recovering Subspace from a Seed Matrix}
To guarantee the recovery of Range$(L_0)$ by Line \ref{alg: 6}, the sampled columns in Line \ref{alg: 1} should be informative. In other words, $\mbox{Range}(L_0)=\mbox{Range}(\mathcal{P}_{\mathcal{I}_0^\perp}M_{l})$. To select the smallest number of columns in Line \ref{alg: 1}, we estimate the lower bound for the Bernoulli parameter $d/n$. Intuitively, this problem is highly connected to the property of $\mathcal{P}_{\mathcal{I}_0^\perp}M$. For instance, suppose that in the worst case $\mathcal{P}_{\mathcal{I}_0^\perp}M$ is a matrix whose elements in the first column are ones while all other elements equal zeros. By this time, Line \ref{alg: 1} will select the first column (the only complete basis) at a high probability if and only if $d=n$. But for $\mathcal{P}_{\mathcal{I}_0^\perp}M$ whose elements are all equal to ones, a much smaller $d$ suffices to guarantee the success of sampling. Thus, to identify the two cases, we involve the incoherence in our analysis.
\comment{
\begin{equation}
\label{equ: f(B)}
f(L)=\frac{\|L\|_{2,\infty}^2}{\sigma_r(L)^2}.
\end{equation}
The following bounds connect $f(L)$ to the incoherence condition:
\begin{lemma}
\label{lemma: f(B)}
For $f(L)$ defined as Eqn. \eqref{equ: f(B)}, we have
\begin{equation}
\label{equ: bound for f(L)}
\frac{r}{n}\le f(L)\le\frac{\sigma_1^2(L)}{\sigma_r^2(L)}\max_{ij}\|\mathcal{P}_{\mathcal{V}_L}e_ie_j^*\|_F^2,
\end{equation}
where $r$ and $\mathcal{V}_L$ are the rank and the right singular space, respectively.
\end{lemma}
\begin{proof}
It can be seen that
\begin{equation*}
f(L)=\frac{\|L\|_{2,\infty}^2}{\sigma_r(L)^2}\ge\frac{\|L\|_F^2/n}{\|L\|_F^2/r}=\frac{r}{n}.
\end{equation*}

For the upper bound, we have
\begin{equation*}
f(L)=\frac{\|L\|_{2,\infty}^2}{\sigma_r(L)^2}=\frac{\|U_L\Sigma_LV_L^Te_{i_0}e_1^*\|_F^2}{\sigma_r(L)^2}\le\frac{\|U_L\Sigma_L\|^2\|V_L^Te_{i_0}e_1^*\|_F^2}{\sigma_r(L)^2}\le\frac{\sigma_1^2(L)}{\sigma_r^2(L)}\max_{ij}\|\mathcal{P}_{\mathcal{V}_L}e_ie_j^*\|_F^2,
\end{equation*}
where the $i_0$th column of $L$ has the maximal $\ell_2$ norm among all columns.
\end{proof}

We now define a class of ideal matrices.
\begin{definition}
\label{definition: well-conditioned}
We define a rank-$r$ matrix $L\in\mathbb{R}^{m\times n}$ to be well conditioned, if its condition number $\sigma_1(L)/\sigma_r(L)$ is a constant.
\end{definition}

By Lemma \ref{lemma: f(B)} and Definition \ref{definition: well-conditioned}, the following corollary shows that the order of lower bound in Eqn. \eqref{equ: bound for f(L)} can be achieved:
\begin{corollary}
Suppose that $L$ is well conditioned, fulfilling incoherence condition \eqref{equ: incoherence 1}. Then we have $f(L)=\Theta(r/n)$.
\end{corollary}
}

We now estimate the smallest Bernoulli parameter $d$ in Line \ref{alg: 1} which ensures that $\mbox{Range}(L_0)\subseteq\mbox{Range}(M_{l})$, or equivalently $\mbox{Range}(L_0)=\mbox{Range}(\mathcal{P}_{\mathcal{I}_0^\perp}M_{l})$, at an overwhelming probability. The following theorem illustrates the result:
\begin{theorem}[Sampling a Set of Complete Basis by Line \ref{alg: 1}]
\label{theorem: complete bases of low-rank part}
Suppose that each column of the incoherent $L_0$ is sampled i.i.d. by Bernoulli distribution with parameter $d/n$. Let $[L_0]_l$ be the selected columns from $L_0$, i.e., $[L_0]_l=\sum_j\delta_j[L_0]_{:j} e_j^*$, where $\delta_j\sim \mbox{Ber}(d/n)$. Then with probability at least $1-\delta$, we have $\mbox{Range}([L_0]_l)=\mbox{Range}(L_0)$, provided that
$
d\ge 2\mu r\log\frac{r}{\delta},
$
where $\mu$ is the incoherence parameter on the row space of matrix $L_0$.
\end{theorem}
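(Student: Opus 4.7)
My plan is to reduce the geometric claim $\mathrm{Range}([L_0]_l)=\mathrm{Range}(L_0)$ to an invertibility statement about a small random matrix built from the right singular vectors of $L_0$, and then control that invertibility by a matrix Chernoff inequality.

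First, I would take a compact SVD $L_0=U\Sigma V^{*}$ with $V\in\mathbb{R}^{n\times r}$ having orthonormal columns. Since $U\Sigma$ has full column rank, for any index set the column range of $[L_0]_l=U\Sigma V^{*}\mathrm{diag}(\delta)$ equals $\mathrm{Range}(L_0)=\mathrm{Range}(U)$ if and only if the selected rows of $V$ span $\mathbb{R}^{r}$. Equivalently, letting $Y_j=\delta_j (V^{*}e_j)(V^{*}e_j)^{*}$, we need the $r\times r$ PSD matrix $Y=\sum_{j=1}^{n}Y_j$ to satisfy $\lambda_{\min}(Y)>0$. Because the $\delta_j$ are independent $\mathrm{Ber}(d/n)$, the $Y_j$'s are independent rank-one PSD matrices and
\begin{equation*}
\mathbb{E}Y=\frac{d}{n}\sum_{j=1}^{n}(V^{*}e_j)(V^{*}e_j)^{*}=\frac{d}{n}V^{*}V=\frac{d}{n}I_r,
\end{equation*}
so $\mu_{\min}:=\lambda_{\min}(\mathbb{E}Y)=d/n$.

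Next I would use the row-space incoherence hypothesis to bound the summands uniformly. Under condition \eqref{equ: incoherence 1} with $\omega_{ij}=e_ie_j^{*}$, we have $\|V^{*}e_j\|_2^{2}\le\mu r/n$ for every $j$, hence $\|Y_j\|\le R:=\mu r/n$ almost surely. I can now invoke Tropp's matrix Chernoff bound~\cite{Tropp} for a sum of independent PSD matrices with uniformly bounded spectral norm: for every $\varepsilon\in(0,1]$,
\begin{equation*}
\mathbb{P}\bigl[\lambda_{\min}(Y)\le(1-\varepsilon)\mu_{\min}\bigr]\le r\cdot\Bigl[\tfrac{e^{-\varepsilon}}{(1-\varepsilon)^{1-\varepsilon}}\Bigr]^{\mu_{\min}/R}.
\end{equation*}
Taking $\varepsilon=1$ (the weakest nontrivial choice, adequate for invertibility) yields $\mathbb{P}[\lambda_{\min}(Y)=0]\le r\,e^{-\mu_{\min}/R}=r\exp(-d/(\mu r))$. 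Demanding this to be $\le\delta$ gives $d\ge\mu r\log(r/\delta)$; the factor $2$ stated in the theorem absorbs the gap coming from the exact Chernoff constant (or, alternatively, from using the more convenient bound $r\exp(-\varepsilon^{2}\mu_{\min}/(2R))$ at $\varepsilon=1/\sqrt{2}$ or similar), so $d\ge 2\mu r\log(r/\delta)$ suffices.

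There is not really a deep obstacle here; the main thing to be careful about is correctly translating the incoherence on the row space $\mathcal{V}$ of $L_0$ into the uniform bound $\|V^{*}e_j\|_2^{2}\le\mu r/n$ under the general-basis convention used elsewhere in the paper (condition \eqref{equ: incoherence 1} is stated with $\omega_{ij}$, but in Line~1 of Algorithm~\ref{algl_21 filtering} the sampling is over columns, which is precisely the standard-basis specialization). After that, the argument is a direct application of the matrix Chernoff inequality to rank-one PSD summands, exactly the tool flagged in the introduction as the basis for the $\ell_{2,1}$ filtering analysis.
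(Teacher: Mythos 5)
Your proposal is correct and follows essentially the same route as the paper's proof: reduce $\mbox{Range}([L_0]_l)=\mbox{Range}(L_0)$ to the sampled rows of $V$ spanning $\mathbb{R}^r$ (the paper packages this as Lemma \ref{lemma: orth rank}), observe $\mathbb{E}Y=\frac{d}{n}I_r$ and the uniform bound $\|V^*e_j\|_2^2\le\mu r/n$ from incoherence, and apply the matrix Chernoff bound of Theorem \ref{theorem: matrix chernoff bound}. The only differences are cosmetic: you invoke the sharp Chernoff form at $\varepsilon\to1$, which even yields the slightly better constant $d\ge\mu r\log(r/\delta)$, whereas the paper uses the weaker exponential form $re^{-\mu_r/(2L)}$, which is exactly where its factor $2$ comes from.
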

\begin{proof}
The proof of Theorem \ref{theorem: complete bases of low-rank part} can be found in the Appendices.
\end{proof}

\begin{remark}
Note that a large incoherence parameter on the row space implies that slightly perturbing $L_0$ tremendously changes its column space. So we will need more samples in order to capture enough information about the column space of $L_0$.
\end{remark}

To guarantee the exact recovery of desired subspace from the seed matrix, the rank $r$ of intrinsic matrix should be low enough compared with the input size (see Theorem \ref{theorem: exact recovery under Bernoulli sampling}). Note that Line \ref{alg: 1}, however, selects the columns by i.i.d. Ber$(d/n)$, so that the number $k$ of sampled columns is a random variable. Roughly, $k$ should be around $d$ due to the fact $\mathbb{E}(k)=d$. The following lemma implies that the magnitude of $k$ typically has the same order as that of parameter $d$ with an overwhelming probability.
\begin{lemma}
\label{lemma: number of sampling by Bernoulli}
Let $n$ be the number of Bernoulli trials and suppose that $\Omega\sim\mbox{Ber}(d/n)$. Then with an overwhelming probability, $|\Omega|=\Theta(d)$, provided that $d\ge c\log n$ for a numerical constant $c$.
\end{lemma}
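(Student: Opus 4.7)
The plan is to recognize that $|\Omega| = \sum_{j=1}^n \delta_j$ is a sum of $n$ i.i.d.\ Bernoulli$(d/n)$ random variables, with mean $\mathbb{E}|\Omega| = d$. The claim $|\Omega| = \Theta(d)$ with overwhelming probability is then a direct consequence of sharp concentration of sums of independent bounded random variables around their mean, so the main tool will be a multiplicative Chernoff bound.

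Specifically, I would invoke the standard two-sided multiplicative Chernoff inequality: for any $\delta \in (0,1)$,
\begin{equation*}
\Pr\bigl(\,\bigl||\Omega| - d\bigr| > \delta d\,\bigr) \;\le\; 2\exp\!\left(-\frac{\delta^2 d}{3}\right).
\end{equation*}
Taking $\delta = 1/2$ gives $\Pr(|\Omega| \notin [d/2, 3d/2]) \le 2 \exp(-d/12)$. To make this at most $2n^{-10}$, which is the overwhelming-probability regime adopted in the paper, it suffices to require $d \ge 120 \log n$; thus choosing $c = 120$ (or any sufficiently large numerical constant) works. On this event one has $d/2 \le |\Omega| \le 3d/2$, which is exactly $|\Omega| = \Theta(d)$.

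There is essentially no obstacle here: the proof is a one-line application of Chernoff, and the hypothesis $d \ge c\log n$ is precisely what one needs to convert the exponential tail $\exp(-\Theta(d))$ into the target failure probability $O(n^{-10})$. The only minor bookkeeping is to make the numerical constant $c$ explicit and to confirm that the upper and lower tails are handled simultaneously, both of which are immediate from the two-sided form of the inequality above. Alternatively, one could use Bernstein's inequality or Hoeffding's inequality in place of Chernoff; all give the same qualitative conclusion, and the constant $c$ can be optimized if desired, but this has no bearing on the statement as written.
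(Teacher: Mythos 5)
Your proposal is correct and follows essentially the same route as the paper: the paper also applies the scalar Chernoff bound to the binomial count $|\Omega|$, just in two separate one-sided steps (obtaining $d/2<|\Omega|<2d$ with failure probability $e^{-d/4}+e^{-d/3}\le 2n^{-10}$ once $d\ge c\log n$), whereas you package the same estimate as a single two-sided multiplicative bound. The only difference is bookkeeping of constants, which is immaterial to the statement.
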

\begin{proof}
Take a perturbation $\epsilon$ such that $d/n=m/n+\epsilon$. By scalar Chernoff bound which states that
\begin{equation*}
\mathbb{P}(|\Omega|\le m)\le e^{-\epsilon^2n^2/2d},
\end{equation*}
if taking $m=d/2$, $\epsilon=d/2n$ and $d\ge c_1\log n$ for an appropriate constant $c_1$, we have
\begin{equation}
\label{equ: up}
\mathbb{P}(|\Omega|\le d/2)\le e^{-d/4}\le n^{-10}.
\end{equation}

In the other direction, by scalar Chernoff bound again which states that
\begin{equation*}
\mathbb{P}(|\Omega|\ge m)\le e^{-\epsilon^2n^2/3d},
\end{equation*}
if taking $m=2d$, $\epsilon=-d/n$ and $d\ge c_2\log n$ for an appropriate constant $c_2$, we obtain
\begin{equation}
\label{equ: lower}
\mathbb{P}(|\Omega|\ge 2d)\le e^{-d/3}\le n^{-10}.
\end{equation}

Finally, according to \eqref{equ: up} and \eqref{equ: lower}, we conclude that $d/2<|\Omega|<2d$ with an overwhelming probability, provided that $d\ge c\log n$ for some constant $c$.
\end{proof}

By Theorems \ref{theorem: exact recovery under Bernoulli sampling} and \ref{theorem: complete bases of low-rank part} and Lemma \ref{lemma: number of sampling by Bernoulli}, the following theorem justifies the success of Line \ref{alg: 6} in Algorithm \ref{algl_21 filtering}.
\begin{theorem}[Exact Recovery of Ground Truth Subspace from Seed Matrix]
\label{theorem: success of steps 2}
Suppose that all the conditions in Theorem \ref{theorem: exact recovery under Bernoulli sampling} are fulfilled for the pair $([L_0]_{l},[S_0]_{l})$. Then Line \ref{alg: 6} of Algorithm \ref{algl_21 filtering} exactly recovers the column space of the incoherent $L_0$ and the column support of $[S_0]_{l}$ with an overwhelming probability $1-cn^{-10}$, provided that
$
d\ge C_0\mu r\log^3 n,
$
where $c$ and $C_0$ are numerical constants, and $\mu$ is the incoherence parameter on the row space of matrix $L_0$.
\end{theorem}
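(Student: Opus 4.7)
The plan is to combine Theorem \ref{theorem: complete bases of low-rank part} (which guarantees the sampled columns span the ground truth column space) with Theorem \ref{theorem: exact recovery under Bernoulli sampling} (which guarantees the sub-problem solver succeeds on the seed), bridged by Lemma \ref{lemma: number of sampling by Bernoulli} (which controls the realized seed size). A union bound over the three corresponding high-probability events will then yield the claimed $1 - cn^{-10}$ guarantee.

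First, I would apply Lemma \ref{lemma: number of sampling by Bernoulli} to the $\mbox{Ber}(d/n)$ column sampling performed in Line \ref{alg: 1} of Algorithm \ref{algl_21 filtering}. Since $d \ge C_0 \mu r \log^3 n$ dominates $c \log n$ for $C_0$ large enough, the realized number of sampled columns satisfies $k = \Theta(d)$ with probability at least $1 - n^{-10}$. This step is needed so that when Theorem \ref{theorem: exact recovery under Bernoulli sampling} is instantiated on the $m\times k$ sub-problem, the rank condition $r \le \rho_r k/(\mu \log^3 \max\{m,k\})$ translates back to an $O(\mu r \log^3 n)$ lower bound on $d$, matching the theorem's hypothesis.

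Second, I would invoke Theorem \ref{theorem: complete bases of low-rank part} with failure probability $\delta = n^{-10}$. Since $d \ge C_0 \mu r \log^3 n$ easily dominates the threshold $2\mu r \log(r/\delta) = 2\mu r(\log r + 10\log n)$, this yields $\mbox{Range}([L_0]_l) = \mbox{Range}(L_0)$ (and in particular $\mbox{rank}([L_0]_l)=r$) on an event of probability at least $1-n^{-10}$. Third, conditioning on the above two events and using the hypothesis that the conditions of Theorem \ref{theorem: exact recovery under Bernoulli sampling} are satisfied for the pair $([L_0]_l,[S_0]_l)$, I would apply Theorem \ref{theorem: exact recovery under Bernoulli sampling} to problem \eqref{equrecovery of the seed matrix}. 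This gives exact recovery of $\mbox{Range}([L_0]_l) = \mbox{Range}(L_0)$ and of the column support of $[S_0]_l$ with probability at least $1 - c n^{-10}$. A final union bound over the three events concludes the proof, with the constants absorbed into a single $c$.

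The main obstacle I anticipate is the book-keeping needed to reconcile the logarithmic factors between the stated assumption $d \ge C_0 \mu r \log^3 n$ and the $\log^3 \max\{m,k\}$ appearing in the rank condition of Theorem \ref{theorem: exact recovery under Bernoulli sampling} on the seed, and to verify that a single numerical constant $C_0$ simultaneously satisfies the thresholds demanded by Lemma \ref{lemma: number of sampling by Bernoulli} (to obtain $k = \Theta(d)$), by Theorem \ref{theorem: complete bases of low-rank part} (to obtain complete basis sampling with $\delta = n^{-10}$), and by the rank condition of Theorem \ref{theorem: exact recovery under Bernoulli sampling} translated through $k = \Theta(d)$. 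On the event $k \le n$ we have $\log^3 \max\{m,k\} \le \log^3 n_{(1)}$, so $C_0$ can be chosen independently of $n$; no further obstruction arises because the incoherence parameter $\mu$ on $L_0$ already appears in the assumption, and the ambiguity condition on $[\widehat{S}]_l$ is inherited from that on $\widehat{S}$ since restricting to a subset of normalized columns does not increase the spectral norm.
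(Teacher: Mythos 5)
Your proposal is correct and follows essentially the same route as the paper, which establishes this theorem precisely by combining Lemma \ref{lemma: number of sampling by Bernoulli} (to get $k=\Theta(d)$), Theorem \ref{theorem: complete bases of low-rank part} with $\delta=n^{-10}$ (to get $\mbox{Range}([L_0]_l)=\mbox{Range}(L_0)$), and Theorem \ref{theorem: exact recovery under Bernoulli sampling} applied to the $m\times k$ seed problem, with the failure probabilities absorbed into a single union bound. Your explicit bookkeeping of the $\log^3$ factors and the choice of a single constant $C_0$ simply spells out what the paper leaves implicit.
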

%

\subsubsection{Analysis for $\ell_{2,1}$ Filtering}
To justify the outlier identifiability of model \eqref{equleast-square problem 2}, it suffices to show that $\mbox{Range}(\mathcal{Y}_i'(L_{l}))$ is complete, i.e., $\mbox{Range}(\mathcal{Y}_i'(L_{l}))=\mbox{Range}(\mathcal{Y}_i'(L_0))$.
Actually, this can be proved by the following theorem:
\begin{theorem}[Outlier Identifiability of $\ell_{2,1}$ Filtering]
\label{theorem: complete bases of elements observation}
Suppose that each row of $L_{l}$ is sampled i.i.d. by Bernoulli distribution with parameter $p_0$. Let $\mathcal{Y}_i'(L_{l})$ be the selected rows from matrix $L_l$, i.e., $\mathcal{Y}_i'(L_{l})=\sum_j\delta_jL_{l_{j:}} e_j$, where $\delta_j\sim \mbox{Ber}(p_0)$. Then with probability at least $1-\delta$, we have rank$(\mathcal{Y}_i'(L_{l}))=r$, or equivalently $\mbox{Range}(\mathcal{Y}_i'(L_{l}))=\mbox{Range}(\mathcal{Y}_i'(L_0))$, provided that
$
p_0\ge 2\mu\frac{r}{m}\log\frac{r}{\delta},
$
where $\mu$ is the incoherence parameter on the column space of matrix $L_0$.
\end{theorem}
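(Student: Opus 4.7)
The plan is to mirror the argument used for Theorem \ref{theorem: complete bases of low-rank part}, but now applied in the ``transposed'' direction: instead of sampling columns to capture a row space, we sample rows to preserve rank. Since $\mbox{Range}(L_l)=\mbox{Range}(L_0)$ by Theorem \ref{theorem: success of steps 2}, the column space of $L_l$ coincides with that of $L_0$, so the incoherence parameter $\mu$ on the column space of $L_0$ also controls the column space of $L_l$. Writing a skinny SVD $L_l = U\Sigma V^*$ with $U\in\mathbb{R}^{m\times r}$ orthonormal, I observe that the row-sampling operator $\mathcal{Y}_i'$ acts on $L_l$ on the left, so $\mbox{rank}(\mathcal{Y}_i'(L_l))=r$ iff $\mbox{rank}(\mathcal{Y}_i'(U))=r$. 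It therefore suffices to show that $\mathcal{Y}_i'(U)^*\mathcal{Y}_i'(U)$ is strictly positive definite with probability at least $1-\delta$.

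The second step is to recognize the Gram matrix as a sum of independent rank-one PSD matrices:
\begin{equation*}
\mathcal{Y}_i'(U)^*\mathcal{Y}_i'(U) \;=\; \sum_{j=1}^{m} \delta_j\, U_{j:}^*\, U_{j:},
\end{equation*}
where $\delta_j\sim\mbox{Ber}(p_0)$ i.i.d. By orthonormality, $\mathbb{E}\bigl[\sum_j \delta_j U_{j:}^* U_{j:}\bigr] = p_0\,U^*U = p_0\, I_r$. The incoherence condition \eqref{equ: another incoherence} on the column space of $L_0$ (hence of $L_l$) gives the uniform spectral bound $\|U_{j:}^*U_{j:}\| = \|U_{j:}\|_2^2 \le \mu r/m$ for every $j$.

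The third step is to invoke the matrix Chernoff inequality (lower tail) of Tropp~\cite{Tropp} for sums of independent PSD matrices with uniformly bounded spectral norm. With $R=\mu r/m$ and mean $\mu_{\min}=p_0$, one obtains
\begin{equation*}
\mathbb{P}\!\left[\lambda_{\min}\!\Bigl(\sum_j \delta_j U_{j:}^*U_{j:}\Bigr) \le (1-\tfrac{1}{2})\,p_0 \right] \;\le\; r\cdot\exp\!\left(-\frac{p_0\, m}{8\,\mu r}\right).
\end{equation*}
Requiring the right-hand side to be at most $\delta$ yields exactly the threshold $p_0 \ge 2\mu (r/m)\log(r/\delta)$ (up to the stated absolute constant). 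On the complementary event, $\lambda_{\min}(\mathcal{Y}_i'(U)^*\mathcal{Y}_i'(U))>0$, so $\mathcal{Y}_i'(U)$ and hence $\mathcal{Y}_i'(L_l)$ has full column rank $r$, which is equivalent to $\mbox{Range}(\mathcal{Y}_i'(L_l))=\mbox{Range}(\mathcal{Y}_i'(L_0))$ because both live in an $r$-dimensional subspace of $\mathbb{R}^{h_i}$ and the latter is contained in the former.

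The main obstacle is merely bookkeeping rather than mathematical difficulty: one must be careful that the incoherence hypothesis we invoke is the one on the \emph{column space} of $L_0$ (parameter from \eqref{equ: another incoherence}) rather than the row-space variant used in Theorem \ref{theorem: complete bases of low-rank part}, and that passing from $L_0$ to $L_l$ preserves it. Once these identifications are in place, the Chernoff bound delivers the conclusion directly, and no golfing-scheme-style iteration or dual construction is needed.
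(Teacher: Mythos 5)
Your proposal is correct and is essentially the paper's own argument: the paper proves this theorem by transposing the proof of Theorem \ref{theorem: complete bases of low-rank part} --- the matrix Chernoff bound (Theorem \ref{theorem: matrix chernoff bound}) applied to the Gram matrix $\sum_j \delta_j U_{j:}^* U_{j:}$ of the sampled rows of the orthonormal factor, combined with the rank-transfer fact of Lemma \ref{lemma: orth rank} and the observation that $\mu(L_l)=\mu(L_0)$ because $\mbox{Range}(L_l)=\mbox{Range}(L_0)$ --- exactly as you do. The one slip is your choice $\epsilon=1/2$, which produces the threshold $p_0\ge 8\mu(r/m)\log(r/\delta)$ rather than the stated constant $2$; since you only need $\lambda_{\min}>0$ (not a quantitative lower bound), let $\epsilon\to1^-$ in the Chernoff bound, as the paper does in its proof of Theorem \ref{theorem: complete bases of low-rank part}, to get $\mathbb{P}\{\lambda_{\min}>0\}\ge 1-re^{-p_0 m/(2\mu r)}$ and hence exactly $p_0\ge 2\mu\frac{r}{m}\log\frac{r}{\delta}$.
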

\begin{proof}
The proof is similar to that of Theorem \ref{theorem: complete bases of low-rank part}, where we use a property that $\mu(L_l)=\mu(L_0)$ since $\text{Range}(L_l)=\text{Range}(L_0)$, by Theorem \ref{theorem: complete bases of low-rank part}.
\end{proof}
It is worth noting that, when the matrix is fully observed, model \eqref{equleast-square problem 2} exactly identifies the outliers even without Theorem \ref{theorem: complete bases of elements observation}.

\subsection{Complexity Analysis}
In this section, we consider the time complexity of our randomized $\ell_{2,1}$ filtering algorithm.
We analyze our algorithm in the case where $d=\Theta(r\log^3 n)$. In Algorithm \ref{algl_21 filtering}, Line \ref{alg: 1} requires $O(n)$ time. For Line \ref{alg: 6} which recovers $m\times cr(\log n)^3$ seed matrix, this step requires $O(r^2m\log^6 n)$ time. Line \ref{alg: 8} requires at most $6r^2m\log^3 n$ time due to the QR factorization~\cite{golub2012matrix}, and Line \ref{alg: 9} needs $(2r+1)m$ time due to matrix-matrix multiplication. Thus the \emph{overall} complexity of
our $\ell_{2,1}$ filtering algorithm is at most $O(r^2m\log^6 n)+(2r+1)mn+6r^2mn\log^3 n\approx6r^2mn\log^3 n$. As ADMM algorithm requires
$O(mn\min\{m,n\})$ time to run for our model due to SVD or matrix-matrix
multiplication \emph{in every iteration}, and require many iterations in order to converge, our algorithm is significantly
faster than the state-of-the-art methods.


\section{Applications and Experiments}
\label{section: applications and experiments}

As we discuss in Section \ref{section: introduction}, our model and algorithm have various applications. To show that, this section first relates our model to the subspace clustering task with missing values, and then demonstrates the validity of our theory and applications by synthetic and real experiments.
\subsection{Applications to Robust Subspace Clustering with Missing Values}
Subspace clustering aims at clustering data according to the subspaces they lie in. It is well known that many datasets, e.g., face~\cite{LiuG1} and motion~\cite{Gear,Yan,Rao}, can be well separated by their different subspaces. So subspace clustering has been successfully applied to face recognition~\cite{LiuG1}, motion segmentation~\cite{Elhamifar}, etc.

Probably one of the most effective subspace clustering models is robust LRR~\cite{Favaro,Wei}. Suppose that the data matrix $M$ contains columns that are from a union of independent subspaces with outliers. The idea of robust LRR is to self-express the data, namely, using the clean data themselves as the dictionary, and then find the representation matrix with the lowest rank. Mathematically, it is formulated as
\begin{equation}
\label{equ: R-LRR}
\min_{Z,L,S} \|Z\|_*+\lambda\|S\|_{2,1},\quad \mbox{s.t.}\quad L=LZ,\quad M=L+S.
\end{equation}
After obtaining the optimal solution $Z^*$, we can apply spectral clustering algorithms, such as Normalized Cut, to cluster each data point
according to the subspaces they lie in.

Although robust LRR \eqref{equ: R-LRR} has been widely applied to many computer vision tasks~\cite{Favaro,Wei}, it cannot handle missing values, i.e., only a few entries of $M$ are observed. Such a situation commonly occurs because of sensor failures, uncontrolled environments, etc. To resolve the issue, in this paper we extend robust LRR by slightly modifying the second constraint:
\begin{equation}
\label{equ: MD-R-LRR}
\begin{split}
\min_{Z,L,S} \|Z\|_*+\lambda\|S\|_{2,1},\ \ \mbox{s.t.}\quad L=LZ,\ \langle M,\omega_{ij}\rangle=\langle L+S,\omega_{ij}\rangle,\ (i,j)\in\mathcal{K}_{obs}.
\end{split}
\end{equation}
A similar model has been proposed by Shi et al.~\cite{shi2014low}, which is
\begin{equation}
\label{equ: LRR for incomplete data}
\begin{split}
\min_{Z,D,S} \|Z\|_*+\lambda\|S\|_{2,1},\ \ \mbox{s.t.}\quad D=DZ+S,\ \langle M,e_ie_j^*\rangle=\langle D,e_ie_j^*\rangle,\ (i,j)\in\mathcal{K}_{obs}.
\end{split}
\end{equation}
However, there are two main differences between their model and ours: 1. Our model does not require $\omega_{ij}$ to be standard basis, thus being more general; 2. Unlike \eqref{equ: LRR for incomplete data}, we use clean data as the dictionary to represent themselves. Such a modification robustifies the model significantly, as discussed in \cite{zhang2015relations,Wei}.

The extended robust LRR \eqref{equ: MD-R-LRR} is NP-hard due to its non-convexity, which incurs great difficulty in efficient solution. As an application of this paper, we show that the solutions to \eqref{equ: MD-R-LRR} and to \eqref{equ: original robust MC wrt any basis 1} are mutually expressible in closed forms:
\begin{claim}
\label{theorem: relations}
The pair $(L^*(L^*)^\dag,L^*,S^*)$ is optimal to the extended robust LRR problem \eqref{equ: MD-R-LRR}, if $(L^*,S^*)$ is a solution to the extended robust MC problem \eqref{equ: original robust MC wrt any basis 1}. Conversely, suppose that $(Z^*,L^*,S^*)$ is a solution to the extended robust LRR problem \eqref{equ: MD-R-LRR}, then $(L^*,S^*)$ will be optimal to the extended robust MC problem \eqref{equ: original robust MC wrt any basis 1}.
\end{claim}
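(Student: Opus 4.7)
The plan is to reduce the LRR problem to the MC problem by eliminating $Z$ through an explicit inner minimization. The key technical lemma I would establish first is: for any matrix $L$, $\min\{\|Z\|_* : L = LZ\} = \mbox{rank}(L)$, with the minimum attained at the orthoprojector onto the row space of $L$. The upper bound is a direct construction: writing the skinny SVD $L = U\Sigma V^*$, the choice $Z = VV^* = L^\dag L$ satisfies $L = L Z$ and has nuclear norm equal to $\mbox{rank}(L)$. For the matching lower bound, the constraint $L = LZ$ forces $V^* = V^* Z$, so $V^* Z V = I_r$ and hence $r = \mbox{tr}(V^* Z V)$; Von Neumann's trace inequality then yields $r \le \sum_i \sigma_i(Z)\sigma_i(VV^*) = \|Z\|_*$.

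With this lemma in hand, an envelope argument gives the equivalence of the two problems. Rewriting the LRR objective as an outer minimization over $(L,S)$ of $\min_{Z:\,L=LZ}\|Z\|_* + \lambda\|S\|_{2,1}$, the inner minimum collapses to $\mbox{rank}(L)$, producing exactly the MC problem \eqref{equ: original robust MC wrt any basis 1}. Since the coefficient-feasibility constraint w.r.t. $\{\omega_{ij}\}$ is identical in both models and is the only constraint coupling $L$ and $S$, a pair $(L,S)$ is MC-feasible iff $(Z,L,S)$ is LRR-feasible for some $Z$, and both programs share a common optimal value.

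For the forward direction, given MC-optimal $(L^*,S^*)$, I would define $Z^*$ as the projector onto the appropriate singular subspace of $L^*$ prescribed by the claim (verifying in passing that the pseudoinverse expression $L^*(L^*)^\dag$ satisfies the self-expression constraint in the paper's convention). By the lemma, $\|Z^*\|_* = \mbox{rank}(L^*)$, so the triple $(Z^*,L^*,S^*)$ is LRR-feasible and attains the common optimal value, hence is LRR-optimal. For the reverse direction, given LRR-optimal $(Z^*,L^*,S^*)$, the lemma gives $\|Z^*\|_* \ge \mbox{rank}(L^*)$, so the MC-objective at the MC-feasible pair $(L^*,S^*)$ is no larger than the LRR-optimum; were $(L^*,S^*)$ strictly suboptimal for MC, combining a better MC-solution with its own minimum-nuclear-norm $Z$ would contradict LRR-optimality.

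The main obstacle is the lemma's lower bound: the easy observation $\mbox{rank}(Z) \ge \mbox{rank}(L)$ controls only the number of non-zero singular values, not their sum, so some nontrivial tool (Von Neumann's inequality, or an equivalent singular-value interlacing argument) is essential. A secondary subtlety is matching the specific pseudoinverse expression $L^*(L^*)^\dag$ in the claim to the correct side of the self-expression constraint $L = LZ$, which amounts to choosing the right orthoprojector built from the SVD factors of $L^*$.
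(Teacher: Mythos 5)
Your proposal is correct and follows essentially the same route as the paper: both arguments reduce the extended robust LRR program to the extended robust MC program by collapsing the inner minimization over $Z$ (with $L$ fixed) to $\mbox{rank}(L)$, which is precisely the paper's Proposition \ref{prop:solution to relaxed noiseless LRR} (optimal $Z=V_LV_L^*$ of nuclear norm $\mbox{rank}(L)$), combined with the fact that the orthoprojector built from $L^*$ has nuclear norm $\mbox{rank}(L^*)$. The only differences are that you actually prove the key lemma (via Von Neumann's trace inequality, where the paper invokes the proposition without proof) and run a symmetric common-optimal-value argument where the paper proves the forward direction by contradiction and calls the converse ``similar''; you also rightly flag that for the constraint $L=LZ$ the feasible orthoprojector is $(L^*)^\dag L^*=V_LV_L^*$ rather than the claim's $L^*(L^*)^\dag$ as literally written.
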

\begin{proof}
The proof can be found in the Appendices.
\end{proof}
Using relaxed form \eqref{equ: robust MC wrt any basis 1} to well approximate original problem \eqref{equ: original robust MC wrt any basis 1} according to Theorem \ref{theorem: exact recovery under Bernoulli sampling}, and then applying Claim \ref{theorem: relations} to obtain a solution to the extended robust LRR problem model \eqref{equ: MD-R-LRR}, we are able to robustly cluster subspaces even though a constant fraction of values are unobserved. This is true once the conditions in Theorem \ref{theorem: exact recovery under Bernoulli sampling} can be satisfied:
\begin{itemize}
\item
The low-rankness condition holds if the sum of the subspaces is low-dimensional (less than $O(n/\log^3 n)$);
\item
The incoherence condition holds if the number of the subspaces is not too large (less than an absolute constant)~\cite{liu2014recovery}.
\end{itemize}
The computational cost can be further cut by applying our $\ell_{2,1}$ filtering approach, i.e., Algorithm \ref{algl_21 filtering}.

\begin{remark}
Intuitively, Claim \ref{theorem: relations} is equivalent to a two-step procedure: First completing the data matrix and identifying the outlier by extended robust MC, and then clustering the data by LRR.
\end{remark}

\subsection{Simulations and Experiments}
In this section, we conduct a series of experiments to demonstrate the validity of our theorems, and show possible applications of our model and algorithm.

\subsubsection{Validity of Regularization Parameter}
We first verify the validity of our regularization
parameter $\lambda=1/\sqrt{\log n}$ by simulations. The toy data are designed as
follows. We compute $L_0=XY^T$ as a product of two $n\times r$
i.i.d. $\mathcal{N}(0,1)$ matrices. The non-zero columns of
$S_0$ are sampled by Bernoulli distribution with parameter $a$, whose entries obey i.i.d.
$\mathcal{N}(0,1)$. Finally, we construct our observation matrix as
$\mathcal{P}_{\Omega_{obs}}(L_0+S_0)$, where $\Omega_{obs}$ is the observed index selected by i.i.d. $\mbox{Ber}(p_0)$. We solve model \eqref{equ: robust MC wrt any basis} to obtain an optimal solution $(L^*,S^*)$, and then
compare it with $(L_0,S_0)$. The distance between the range spaces of $L^*$ and $L_0$
is defined by $\|\mathcal{P}_{\mathcal{U}^*}-\mathcal{P}_{\mathcal{U}_0}\|_F$ and the distance between the column supports of $S^*$ and $S_0$
is given by the Hamming distance. The experiment is run by 10 times and we report the average outputs. Table \ref{table: Exact Recovery for Random
Problem of Varying Size} illustrates that our choice of the regularization parameter enables
model \eqref{equ: robust MC wrt any basis} to exactly recover the range space of
$L_0$ and the column support of $S_0$ at a high probability.

\begin{table}[ht]
\caption{Exact recovery on problems with different sizes. Here
rank$(L_0)=0.05n$, $a=0.1$, $p_0=0.8$, and
$\lambda=1/\sqrt{\log n}$.} \label{table: Exact Recovery
for Random Problem of Varying Size}
\begin{center}
\begin{tabular}{c||c|c}
\hline
$n$ & dist(Range($L^*$),Range($L_0$)) & dist$(\mathcal{I}^*,\mathcal{I}_0)$\\
\hline\hline
100 & $1.09\times10^{-14}$ & 0\\
\hline
200 & $1.70\times10^{-14}$ & 0\\
\hline
500 & $4.02\times10^{-14}$ & 0\\
\hline
1,000 & $6.10\times10^{-13}$ & 0\\
\hline
\end{tabular}
\end{center}
\end{table}
\vspace{+0.5cm}
Theorem \ref{theorem: exact recovery under Bernoulli sampling} shows that the exact recoverability of model \eqref{equ: robust MC wrt any basis} is independent of the magnitudes of noises. To verify this, Table \ref{table: Exact Recovery
for Random Problem of Varying Noise Magnitudes} records the differences between the ground truth $(L_0,S_0)$ and the output $(L^*,S^*)$ of model \eqref{equ: robust MC wrt any basis} under varying noise magnitudes $\mathcal{N}(0,1/n)$, $\mathcal{N}(0,1)$, and $\mathcal{N}(0,n)$. It seems that our model always succeeds, no matter what magnitudes the noises are.

\begin{table}[ht]
\caption{Exact recovery on problems with different noise magnitudes. Here
$n=200$, rank$(L_0)=0.05n$, $a=0.1$, $p_0=0.8$, and
$\lambda=1/\sqrt{\log n}$.} \label{table: Exact Recovery
for Random Problem of Varying Noise Magnitudes}
\begin{center}
\begin{tabular}{c||c|c}
\hline
Magnitude & dist(Range($L^*$),Range($L_0$)) & dist$(\mathcal{I}^*,\mathcal{I}_0)$\\
\hline\hline
$\mathcal{N}(0,1/n)$ & $1.98\times10^{-14}$ & 0\\
\hline
$\mathcal{N}(0,1)$ & $1.50\times10^{-14}$ & 0\\
\hline
$\mathcal{N}(0,n)$ & $3.20\times10^{-14}$ & 0\\
\hline
\end{tabular}
\end{center}
\end{table}

\subsubsection{Exact Recovery from Varying Fractions of Corruptions and Observations}
We then test the exact recoverability of our model under varying fractions of corruptions and observations. The data are generated as the above-mentioned experiments, where the data size $n=200$. We repeat the experiments by decreasing the number of observations. Each simulation is run by 10 times, and Figure \ref{figure: tightness of
our bounds} plots the fraction of correct recoveries: white region represents the exact recovery in 10 experiments,
and black region denotes the failures in all of the experiments. It seems that model \eqref{equ: robust MC wrt any basis} succeeds even when the rank of intrinsic matrix is comparable to $O(n)$, which is consistent with our forecasted order $O(n/\log^3 n)$. But with the decreasing number of observations, the working range of model \eqref{equ: robust MC wrt any basis} shrinks.

\begin{figure}
\centering
\subfigure{
\includegraphics[width=0.2\textwidth]{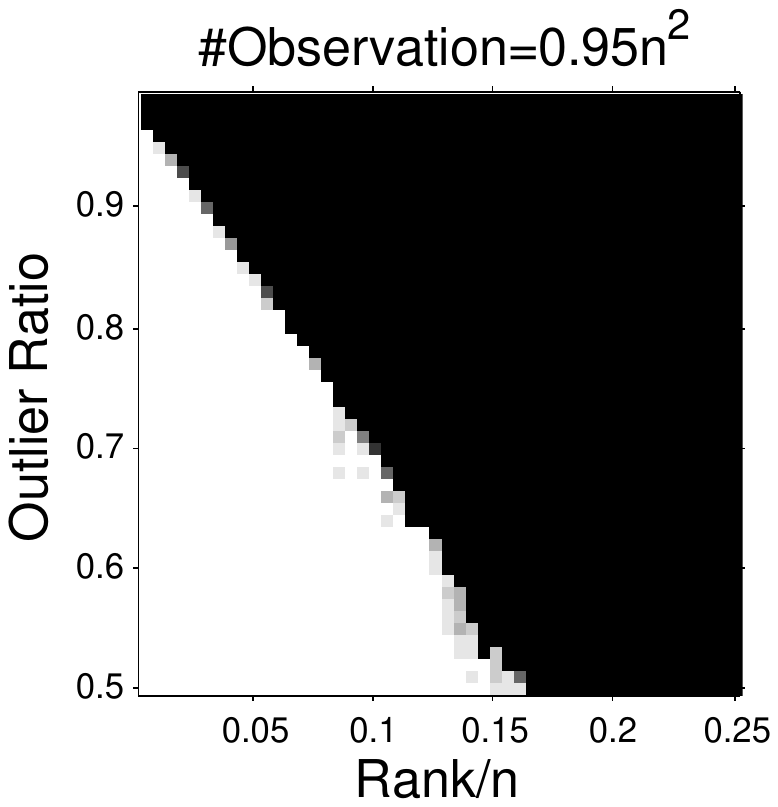}}
\subfigure{
\includegraphics[width=0.2\textwidth]{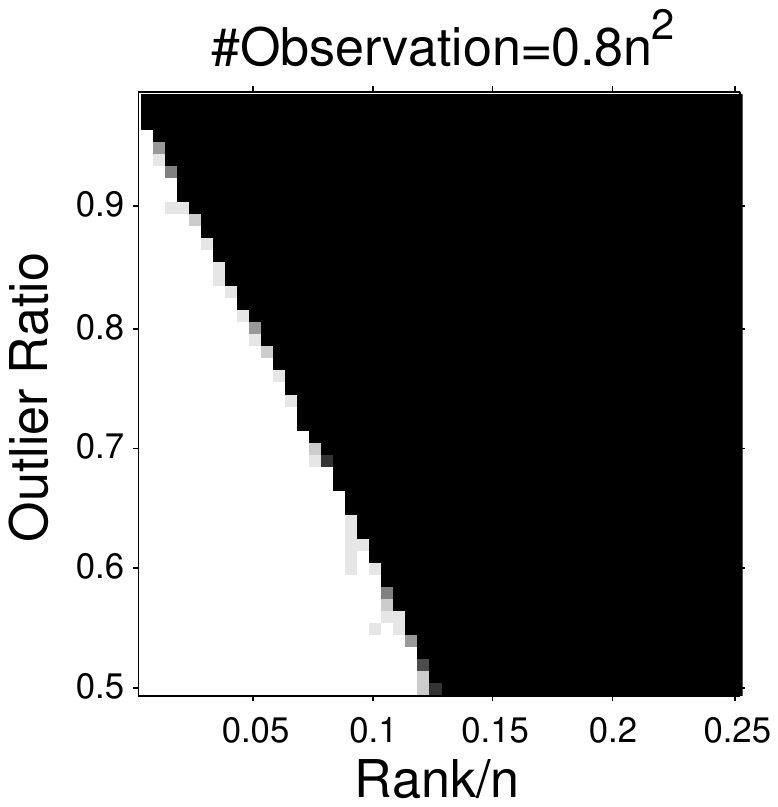}}\\
\subfigure{
\includegraphics[width=0.2\textwidth]{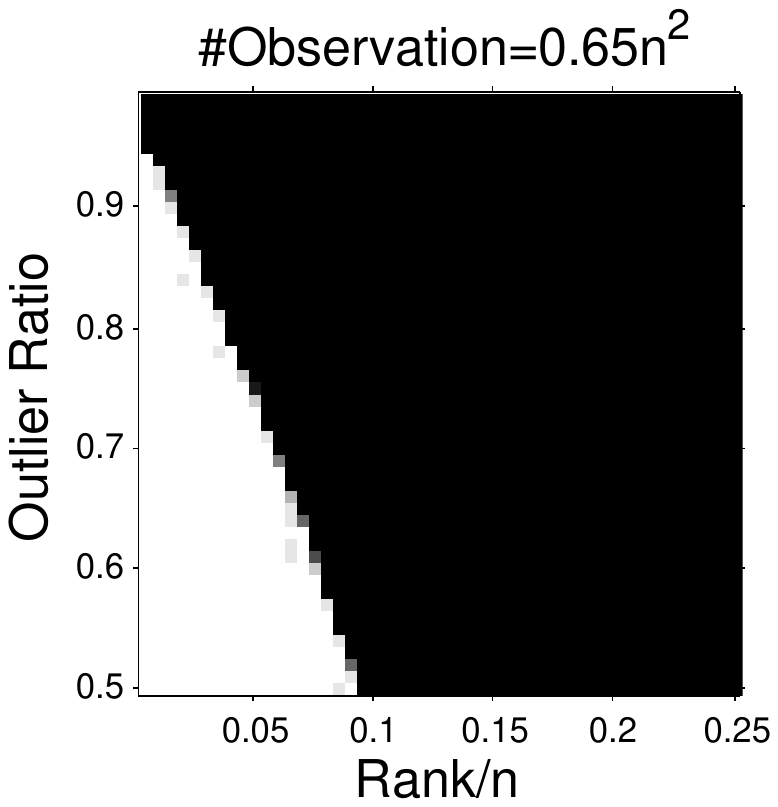}}
\subfigure{
\includegraphics[width=0.2\textwidth]{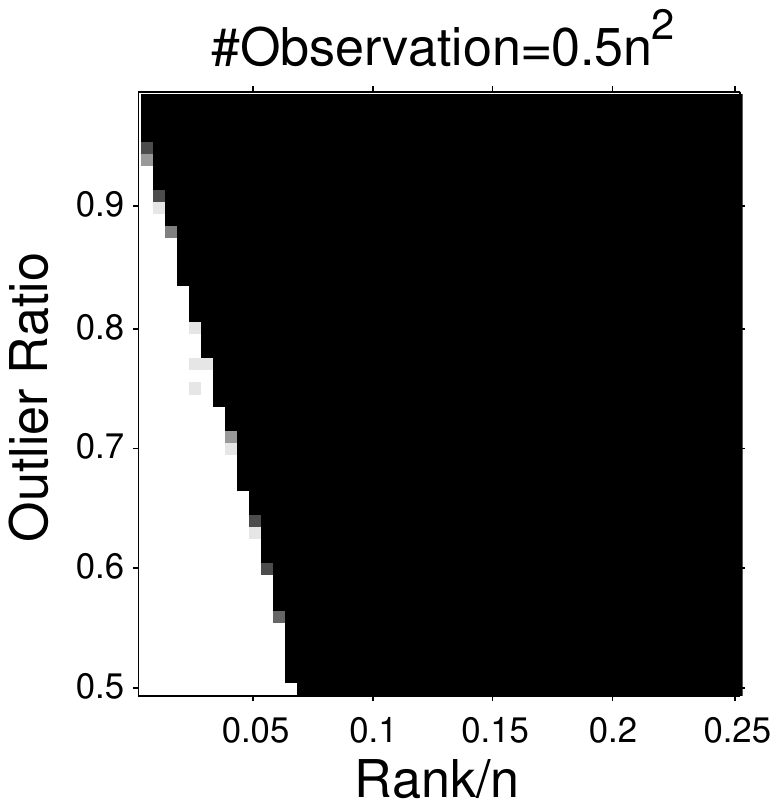}}
\caption{Exact recovery of the extended robust MC on random problems of
varying sizes. The white region represents the exact recovery in 10 experiments,
and black region denotes the failures in all of the experiments.} \label{figure: tightness of
our bounds}
\end{figure}

\subsubsection{Speed Advantage of $\ell_{2,1}$ Filtering Algorithm}
To test the speed advantage of our $\ell_{2,1}$ filtering algorithm, we compare the running time of ADMM and our filtering Algorithm \ref{algl_21 filtering} on the synthetic data. The data are generated as the above-mentioned simulations, where we change one variable among the set $(n, r, p_0, a)$ each time and fix others. Table \ref{table:comparison of speed} lists the CPU times, the distance between Range($L^*$) and Range($L_0$), and the Hamming distance between $\mathcal{I}^*$ and $\mathcal{I}_0$ by the two algorithms. It is easy to see that our $\ell_{2,1}$ filtering approach is significantly faster than ADMM under a comparable precision.
\begin{table*}
\caption{Comparison of the speed between ADMM and our $\ell_{2,1}$ filtering algorithm under varying parameter settings.} \label{table:comparison of speed}
\begin{center}
\begin{tabular}{c||c|c|c|c}
\hline
Parameter $(n, r, p_0, a)$  & Method  & Time (s) & dist(Range($L^*$),Range($L_0$)) & dist$(\mathcal{I}^*,\mathcal{I}_0)$\\
 \hline\hline
\multirow{2}{2.5cm}{(1,000, 1, 0.95, 0.1)} & ADMM & 102.89 & $2.34\times10^{-9}$ & 0\\
& $\ell_{2,1}$ Filtering & \textbf{5.86} & $5.41\times10^{-9}$ & 0\\
\hline
\multirow{2}{2.5cm}{(2,000, 1, 0.95, 0.1)} & ADMM & 587.53 & $6.03\times10^{-9}$ & 0\\
& $\ell_{2,1}$ Filtering & \textbf{35.49} & $9.91\times10^{-9}$ & 0\\
\hline
\multirow{2}{2.5cm}{(1,000, 10, 0.95, 0.1)} & ADMM & 104.12 & $2.71\times10^{-8}$ & 0\\
& $\ell_{2,1}$ Filtering & \textbf{25.75} & $4.06\times10^{-8}$ & 0\\
\hline
\multirow{2}{2.5cm}{(1,000, 1, 0.8, 0.1)} & ADMM & 100.10 & $2.76\times10^{-9}$ & 0\\
& $\ell_{2,1}$ Filtering & \textbf{4.33} & $6.17\times10^{-9}$ & 0\\
\hline
\multirow{2}{2.5cm}{(1,000, 1, 0.95, 0.2)} & ADMM & 92.95 & $4.01\times10^{-9}$ & 0\\
& $\ell_{2,1}$ Filtering & \textbf{5.09} & $8.34\times10^{-9}$ & 0\\
\hline
\end{tabular}
\end{center}
\end{table*}

\subsubsection{Applications to Subspace Clustering with Missing Coefficients}
To apply our model to the subspace clustering tasks with a fraction of missing values, we conduct experiments on the real Hopkins 155 database\footnote{\url{http://www.vision.jhu.edu/data/hopkins155}}. The Hopkins 155 database consists of 155 sequences, each of which contains multiple key points drawn from two or three motion objects. Because the motion trajectory of each rigid body lies in a single subspace, so we are able to cluster the points according to the subspaces they lie in. To make the problem more challenging, we randomly corrupt 5\% columns and remove 5\% observed coefficients. Table \ref{table:hopkins 155} lists the clustering accuracies of our algorithm on the first 5 sequences in comparison with other approaches~\cite{vidal2015sparse}. We can see that our approach always achieves high clustering accuracy, even though we cannot observe all of the data values. In addition, the experiments show that the Robust MC based methods are better than MC based methods. So our model is more robust.


\begin{table*}
\caption{Clustering accuracies of our algorithm on the first 5 sequences in Hopkins 155 database, where there are 5\% missing entries.} \label{table:hopkins 155}
\begin{center}
\begin{tabular}{c||c|c|c|c|c}
\hline
\#Sequence & Motion Number & MC+SSC & MC+LRR & Robust MC+SSC & Robust MC+LRR (Ours)\\
\hline
\hline
\#1 & 2 & 60.48\% & 80.00\% &  82.38\% & \textbf{98.10\%}\\
\hline
\#2 & 3 & 67.43\% & 79.46\% & 73.24\% & \textbf{91.91\%}\\
\hline
\#3 & 2 & 75.16\% & 98.00\% & 83.01\% & \textbf{98.04\%}\\
\hline
\#4 & 2 & 73.09\% & 67.55\% &  86.54\% & \textbf{99.21\%}\\
\hline
\#5 & 2 & 75.69\% & 82.41\% &  75.69\% & \textbf{92.59\%}\\
\hline
\end{tabular}
\end{center}
\end{table*}

\section{Conclusions}
\label{section: conclusion}
In this paper, we investigate the theory, the algorithm, and the applications of our extended robust MC model. In particular, we study the exact recoverability of our model from few observed coefficients w.r.t. general basis, which partially covers the existing results as special cases. With slightly stronger incoherence (ambiguity) conditions, we are able to push the upper bound on the allowed rank from $O(1)$ to $O(n/\log^3 n)$, even when there are around a constant fraction of unobserved coefficients and column corruptions, where $n$ is the sample size. We further suggest a universal choice of the regularization parameter, which is $\lambda=1/\sqrt{\log n}$. This result waives the necessity of tuning regularization parameter, so it significantly extends the working range of robust MC. Moreover, we propose $\ell_{2,1}$ filtering algorithm so as to speed up solving our model numerically, and establish corresponding theoretical guarantees. As an application, we also relate our model to the subspace clustering tasks with missing values so that our theory and algorithm can be immediately applied to the subspace segmentation problem. Our experiments on the synthetic and real data testify to our theories.

\section{Acknowledgements}
Zhouchen Lin is supported by National Basic Research Program (973 Program) of China (grant no. 2015CB352502), National Natural Science Foundation (NSF) of China (grant nos. 61272341 and 61231002), and Microsoft Research Asia Collaborative Research Program. Chao Zhang is supported in part by 973 Program of China (grant nos. 2015CB352303 and 2011CB302400) and in part by NSF of China (grant nos. 61071156 and 61131003).

\bibliography{reference}
\bibliographystyle{IEEEtran}

\begin{appendix}
\label{Appendices}

\subsection{Preliminary Lemmas}
\label{subsection: preliminary}

We present several preliminary lemmas here which are critical for our proofs. For those readers who are interested in the main body of the proofs, please refer to Sections \ref{section: exact recovery of column support} and \ref{section: exact recovery of column space} directly.

\begin{lemma}
\label{lemma: inside Omega}
The optimal solution $(L^*, S^*)$ to the extended robust MC \eqref{equ: robust MC wrt any basis} satisfies $S^*\in\Omega_{obs}$.
\end{lemma}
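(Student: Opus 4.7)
The plan is to argue by contradiction, using the projection $\mathcal{P}_{\Omega_{obs}}$ to strictly improve on any candidate optimum whose column-sparse component has a nontrivial piece outside $\Omega_{obs}$. Specifically, suppose toward a contradiction that $\mathcal{P}_{\Omega_{obs}^\perp} S^* \neq 0$, and define $\widetilde{S} = \mathcal{P}_{\Omega_{obs}} S^*$. Since $\mathcal{R} = \mathcal{P}_{\Omega_{obs}}$ is an idempotent orthogonal projection, the new pair $(L^*,\widetilde{S})$ is still feasible: $\mathcal{R}(L^* + \widetilde{S}) = \mathcal{P}_{\Omega_{obs}} L^* + \mathcal{P}_{\Omega_{obs}} S^* = \mathcal{R}(L^*+S^*) = \mathcal{R}(M)$. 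The nuclear norm term $\|L^*\|_*$ is untouched, so it suffices to show $\|\widetilde{S}\|_{2,1} < \|S^*\|_{2,1}$.

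The key observation, which I would isolate as the main step, is that the compatibility assumption \eqref{equ: commutative operators} forces $\mathcal{P}_{\Omega_{obs}}$ to act column-wise. Indeed, because $\mbox{Span}\{\omega_{ij} : i=1,\ldots,m\} = \mbox{Span}\{e_i e_j^* : i=1,\ldots,m\}$ for each $j$, each basis element $\omega_{ij}$ is supported in column $j$ alone, so the subspace $\Omega_{obs}$ decomposes as an orthogonal direct sum over $j$ of subspaces of the $j$-th column. Consequently, for any matrix $X$, the $j$-th column of $\mathcal{P}_{\Omega_{obs}} X$ depends only on $X_{:j}$, and arises as an orthogonal projection of $X_{:j}$ inside $\mathbb{R}^m$; in particular $\|(\mathcal{P}_{\Omega_{obs}} X)_{:j}\|_2 \le \|X_{:j}\|_2$ with equality iff $X_{:j}$ already lies in the corresponding subspace.

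Applying this to $X = S^*$, summing over $j$ gives $\|\widetilde{S}\|_{2,1} \le \|S^*\|_{2,1}$, and the inequality is strict because by assumption at least one column $S^*_{:j_0}$ has a nontrivial component in $\Omega_{obs}^\perp$, making the projection strictly shorter on that column. Hence $\|L^*\|_* + \lambda \|\widetilde{S}\|_{2,1} < \|L^*\|_* + \lambda \|S^*\|_{2,1}$, contradicting the optimality of $(L^*,S^*)$. Therefore $\mathcal{P}_{\Omega_{obs}^\perp} S^* = 0$, i.e., $S^* \in \Omega_{obs}$.

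The only delicate point is the column-wise behavior of $\mathcal{P}_{\Omega_{obs}}$; without the structural hypothesis \eqref{equ: commutative operators} the projection could mix columns and one would only get the weaker Frobenius inequality, which is not enough to conclude a strict drop in the $\ell_{2,1}$ norm. Everything else is a direct consequence of feasibility being preserved under $\mathcal{P}_{\Omega_{obs}}$.
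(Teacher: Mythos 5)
Your proof is correct and follows essentially the same route as the paper's: replace $S^*$ by $\mathcal{P}_{\Omega_{obs}}S^*$, observe that feasibility is preserved, and derive a contradiction from the strict drop in the $\ell_{2,1}$ term. The only difference is that the paper asserts the strict inequality $\|\mathcal{P}_{\Omega_{obs}}S^*\|_{2,1}<\|S^*\|_{2,1}$ without comment, whereas you justify it via the column-wise action of $\mathcal{P}_{\Omega_{obs}}$ guaranteed by \eqref{equ: commutative operators} --- a worthwhile detail, since a general orthogonal projection in the Frobenius inner product need not contract the $\ell_{2,1}$ norm.
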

\begin{proof}
Suppose that $S^*\not\in\Omega_{obs}$. We have $\|L^*\|_*+\|\mathcal{P}_{\Omega_{obs}}S^*\|_{2,1}<\|L^*\|_*+\|S^*\|_{2,1}$. Also, notice that the pair $(L^*, \mathcal{P}_{\Omega_{obs}}S^*)$ is feasible to problem \eqref{equ: robust MC wrt any basis}. Thus we have a contradiction to the optimality of $(L^*, S^*)$.
\end{proof}

\begin{lemma}[Elimination Lemma on Observed Elements]
\label{theorem: elimination on observed elements}
Suppose that any solution $(L^*,S^*)$ to the extended robust MC \eqref{equ: robust MC wrt any basis} with observation set $\mathcal{K}_{obs}$ exactly recovers the column space of $L_0$ and the column support of $S_0$, i.e., $\mbox{Range}(L^*)=\mbox{Range}(L_0)$ and $\{j: S^*_{:j}\not\in\mbox{Range}(L^*)\}=\mathcal{I}_0$. Then any solution $(L'^*,S'^*)$ to \eqref{equ: robust MC wrt any basis} with observation set $\mathcal{K}'_{obs}$ succeeds as well, where $\mathcal{K}_{obs}\subseteq\mathcal{K}'_{obs}$.
\end{lemma}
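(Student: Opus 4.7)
The plan is to show that every optimum $(L'^*,S'^*)$ of the problem with observation set $\mathcal{K}'_{obs}$ is also an optimum of the problem with observation set $\mathcal{K}_{obs}$; the claimed exact recovery for $(L'^*,S'^*)$ then follows immediately from the hypothesis. The starting point is the elementary observation that since $\mathcal{K}_{obs}\subseteq\mathcal{K}'_{obs}$, the feasibility constraint of the $\mathcal{K}'_{obs}$ problem is strictly stronger: the equations $\langle L+S,\omega_{ij}\rangle=\langle M,\omega_{ij}\rangle$ for $(i,j)\in\mathcal{K}_{obs}$ are merely a subset of those for $(i,j)\in\mathcal{K}'_{obs}$. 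Hence any feasible pair of the $\mathcal{K}'_{obs}$ problem is feasible for the $\mathcal{K}_{obs}$ problem, and in particular the objective of $(L'^*,S'^*)$ is at least the optimal value $p^*$ of the $\mathcal{K}_{obs}$ problem.

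I would then establish equality with $p^*$ by contradiction. Suppose some $(\bar L,\bar S)$ feasible for the $\mathcal{K}_{obs}$ problem had objective strictly below $\|L'^*\|_{*}+\lambda\|S'^*\|_{2,1}$; passing to an optimum of the $\mathcal{K}_{obs}$ problem we may further assume $(\bar L,\bar S)$ itself realizes $p^*$. By the hypothesis, this pair already satisfies correct recovery, i.e., $\mathrm{Range}(\bar L)=\mathrm{Range}(L_0)$ and $\{j:\bar S_{:j}\notin\mathrm{Range}(\bar L)\}=\mathcal{I}_0$. From $(\bar L,\bar S)$ I would build a candidate $(\hat L,\hat S)$ feasible for the $\mathcal{K}'_{obs}$ problem by absorbing the residual on the extra indices, e.g., $\hat L=\bar L$ and $\hat S=\bar S+\mathcal{P}_{\mathcal{K}'_{obs}\setminus\mathcal{K}_{obs}}(M-\bar L-\bar S)$. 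Feasibility on $\mathcal{K}'_{obs}$ is a direct verification using the orthogonality of $\{\omega_{ij}\}$ and the identity $\mathcal{R}_{\mathcal{K}_{obs}}(\bar L+\bar S)=\mathcal{R}_{\mathcal{K}_{obs}}(M)$ inherited from feasibility of $(\bar L,\bar S)$.

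The main obstacle is verifying that this modification does not inflate the objective, since a plain triangle inequality is insufficient. Here one must exploit the structural guarantees afforded by the correct recovery of $(\bar L,\bar S)$: because the outlier support of $\bar S$ is exactly $\mathcal{I}_0$ and the clean columns of $\bar L$ span $\mathrm{Range}(L_0)$, the defect $M-\bar L-\bar S$ can be decomposed into a piece supported on columns in $\mathcal{I}_0$ and a piece lying along $\mathrm{Range}(L_0)$. The first piece is absorbed into $\hat S$ without enlarging its column support beyond $\mathcal{I}_0$, while the second can instead be redistributed into $\hat L$ without increasing the nuclear norm, since shifts inside $\mathrm{Range}(L_0)$ preserve the column space. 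This should yield $\|\hat L\|_{*}+\lambda\|\hat S\|_{2,1}\le\|\bar L\|_{*}+\lambda\|\bar S\|_{2,1}<\|L'^*\|_{*}+\lambda\|S'^*\|_{2,1}$, contradicting the optimality of $(L'^*,S'^*)$ for the $\mathcal{K}'_{obs}$ problem. The contradiction forces the objective value of $(L'^*,S'^*)$ to equal $p^*$, making it an optimum of the $\mathcal{K}_{obs}$ problem, and the hypothesis then delivers $\mathrm{Range}(L'^*)=\mathrm{Range}(L_0)$ and $\{j:S'^*_{:j}\notin\mathrm{Range}(L'^*)\}=\mathcal{I}_0$.
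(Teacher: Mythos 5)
Your reduction to optimality for the $\mathcal{K}_{obs}$ problem is the right target, and the first half of your argument (nested constraints, hence $\mathcal{K}'_{obs}$-feasible pairs are $\mathcal{K}_{obs}$-feasible and the objective of $(L'^*,S'^*)$ is at least $p^*$) is fine; indeed the paper's entire proof is essentially that one observation about nested constraints. But the second half, where you force equality with $p^*$, has a genuine gap at the absorption step. Your claimed inequality $\|\hat L\|_*+\lambda\|\hat S\|_{2,1}\le\|\bar L\|_*+\lambda\|\bar S\|_{2,1}$ does not follow from the two structural facts you invoke. (i) Shifting the clean-column part of the residual into $\hat L$ preserves $\mbox{Range}(\hat L)\subseteq\mbox{Range}(L_0)$, but preserving the column space does not control the nuclear norm: if $\bar L=uv^*$ and the shift is $uw^*$, then $\|\hat L\|_*=\|u\|_2\|v+w\|_2$, which can strictly exceed $\|uv^*\|_*$. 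Column-space preservation bounds rank-type quantities, not $\|\cdot\|_*$. (ii) Absorbing the residual on columns indexed by $\mathcal{I}_0$ into $\hat S$ keeps the column \emph{support} inside $\mathcal{I}_0$, which bounds $\|\hat S\|_{2,0}$, but it changes the $\ell_2$ norms of those columns and can strictly increase $\|\hat S\|_{2,1}$. So the chain ending in $\|\hat L\|_*+\lambda\|\hat S\|_{2,1}<\|L'^*\|_*+\lambda\|S'^*\|_{2,1}$ is unjustified, and with it the contradiction; what you are really trying to prove is $p'^*=p^*$, and nothing in your construction delivers it. There is also a secondary slip: the membership $[M-\bar L-\bar S]_{:j}\in\mbox{Range}(L_0)$ for $j\notin\mathcal{I}_0$ holds for \emph{whole} columns, whereas the piece feasibility requires you to match, $\mathcal{P}_{\mathcal{K}'_{obs}\setminus\mathcal{K}_{obs}}(M-\bar L-\bar S)$, is a partial set of coefficients within each column and need not lie in $\mbox{Range}(L_0)$; repairing this forces you to move whole clean columns into $\hat L$, which is exactly where (i) bites.

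For contrast, the paper does not attempt your route at all: its proof of this lemma is the single sentence that the constraints with $\mathcal{K}'_{obs}$ are stronger than those with $\mathcal{K}_{obs}$, i.e., pure feasible-set inclusion. It is instructive to compare with the paper's Elimination Lemma on Column Support, where an absorption argument of the kind you want \emph{does} go through: there the added matrix $\mathcal{P}_{\mathcal{I}^\perp\cap\mathcal{I}_0}S^*$ occupies columns disjoint from those of $S'^*$, so $\|S'^*+\mathcal{P}_{\mathcal{I}^\perp\cap\mathcal{I}_0}S^*\|_{2,1}\le\|S'^*\|_{2,1}+\|\mathcal{P}_{\mathcal{I}^\perp\cap\mathcal{I}_0}S^*\|_{2,1}$ combines with additivity of $\|\cdot\|_{2,1}$ over disjoint column supports to close the loop. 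Your modification enjoys no such additivity, because the residual you absorb overlaps the columns of $\bar S$ and the row space of $\bar L$, and that is precisely why the objective comparison fails.
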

\begin{proof}
The conclusion holds because the constraints in problem \eqref{equ: robust MC wrt any basis} with observation set $\mathcal{K}_{obs}'$ are stronger than the constraints in problem \eqref{equ: robust MC wrt any basis} with observation set $\mathcal{K}_{obs}$.
\end{proof}

\begin{lemma}[Elimination Lemma on Column Support]
\label{theorem: elimination on column support}
Suppose that any solution $(L^*,S^*)$ to the extended robust MC \eqref{equ: robust MC wrt any basis} with input $\mathcal{R}(M)=\mathcal{R}(L^*)+\mathcal{R}(S^*)$ exactly recovers the column space of $L_0$ and the column support of $S_0$, i.e., $\mbox{Range}(L^*)=\mbox{Range}(L_0)$ and $\{j: S^*_{:j}\not\in\mbox{Range}(L^*)\}=\mathcal{I}_0$. Then any solution $(L'^*,S'^*)$ to \eqref{equ: robust MC wrt any basis} with input $\mathcal{R}(M')=\mathcal{R}(L^*)+\mathcal{R}\mathcal{P}_\mathcal{I}(S^*)$ succeeds as well, where $\mathcal{I}\subseteq\mathcal{I}^*=\mathcal{I}_0$.
\end{lemma}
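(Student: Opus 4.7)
The plan is a two-sided optimality argument in the spirit of the elimination lemmas from Cand\`es et al.'s robust PCA literature: I will bound the modified optimum from above by an explicit feasible candidate built from $(L^*,S^*)$, bound it from below by extending the modified optimum back to the original problem, and then chain the two inequalities into an equality. The upper bound is immediate, since $(L^*,\mathcal{P}_{\mathcal{I}}(S^*))$ is feasible for the modified problem by construction of $M'$, so any optimum $(L'^*,S'^*)$ obeys $\|L'^*\|_*+\lambda\|S'^*\|_{2,1}\le\|L^*\|_*+\lambda\|\mathcal{P}_{\mathcal{I}}(S^*)\|_{2,1}$. For the lower bound, I would observe that $(L'^*,\,S'^*+\mathcal{P}_{\mathcal{I}_0\setminus\mathcal{I}}(S^*))$ is feasible for the original problem because the two corrections telescope in the $\mathcal{R}$-image (using $S^*\in\Omega_{obs}$ from Lemma \ref{lemma: inside Omega} together with the hypothesis that $S^*$ has column support exactly $\mathcal{I}_0$), so optimality of $(L^*,S^*)$ yields $\|L^*\|_*+\lambda\|S^*\|_{2,1}\le\|L'^*\|_*+\lambda\|S'^*+\mathcal{P}_{\mathcal{I}_0\setminus\mathcal{I}}(S^*)\|_{2,1}$.

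Adding the two bounds and applying the $\ell_{2,1}$ triangle inequality together with the disjoint-column identity $\|S^*\|_{2,1}=\|\mathcal{P}_{\mathcal{I}}(S^*)\|_{2,1}+\|\mathcal{P}_{\mathcal{I}_0\setminus\mathcal{I}}(S^*)\|_{2,1}$ forces equality throughout. Two consequences are immediate. First, $(L'^*,\,S'^*+\mathcal{P}_{\mathcal{I}_0\setminus\mathcal{I}}(S^*))$ is itself optimal for the original problem, so the hypothesis delivers $\mbox{Range}(L'^*)=\mbox{Range}(L_0)$ and $\{j:(S'^*+\mathcal{P}_{\mathcal{I}_0\setminus\mathcal{I}}(S^*))_{:j}\notin\mbox{Range}(L'^*)\}=\mathcal{I}_0$. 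Second, equality in the $\ell_{2,1}$ triangle must hold column by column, so $S'^*_{:j}=c_j S^*_{:j}$ with $c_j\ge 0$ for every $j\in\mathcal{I}_0\setminus\mathcal{I}$. For $j\in\mathcal{I}$ and for $j\notin\mathcal{I}_0$, the extended-pair support equality certifies $S'^*_{:j}\notin\mbox{Range}(L_0)$ and $S'^*_{:j}\in\mbox{Range}(L_0)$ respectively, which is exactly the behavior demanded by column-support recovery at $\mathcal{I}$.

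The main obstacle, and where I expect to spend the most effort, is upgrading the alignment $S'^*_{:j}=c_j S^*_{:j}$ on $\mathcal{I}_0\setminus\mathcal{I}$ to $c_j=0$; without this upgrade, the inherited constraint $S'^*_{:j}+S^*_{:j}\notin\mbox{Range}(L_0)$ is consistent with either sign of $c_j$. My plan is to re-run the dual-certificate construction of Sections \ref{subsubsection: certification by golfing scheme} and \ref{subsubsection: certification by least squares} for the modified problem: because the smaller corruption set $\mathcal{I}\subsetneq\mathcal{I}_0$ makes both the golfing scheme and the least-squares certification strictly easier, the same construction yields dual variables enforcing $S'^*_{:j}\in\mbox{Range}(L_0)$ on all of $\mathcal{I}^c$; combined with $S^*_{:j}\notin\mbox{Range}(L_0)$ from the identifiability assumption, this forces $c_j=0$ on $\mathcal{I}_0\setminus\mathcal{I}$ and closes the argument.
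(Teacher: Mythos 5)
Your first two-thirds reproduces the paper's argument exactly: the same feasible candidate $(L^*,\mathcal{P}_{\mathcal{I}}S^*)$ for the modified problem, the same extended pair $(L'^*,\,S'^*+\mathcal{P}_{\mathcal{I}^\perp\cap\mathcal{I}_0}S^*)$ made feasible for the original problem via the telescoping identity $\mathcal{R}(L'^*+S'^*+\mathcal{P}_{\mathcal{I}^\perp\cap\mathcal{I}_0}S^*)=\mathcal{R}(M'+\mathcal{P}_{\mathcal{I}^\perp\cap\mathcal{I}_0}S^*)=\mathcal{R}(M)$, and the same chaining of the two objective bounds through the $\ell_{2,1}$ triangle inequality, which makes the extended pair optimal for the original instance and hence delivers $\mbox{Range}(L'^*)=\mbox{Range}(L_0)$ and $\{j:[S'^*+\mathcal{P}_{\mathcal{I}^\perp\cap\mathcal{I}_0}S^*]_{:j}\notin\mbox{Range}(L_0)\}=\mathcal{I}_0$. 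From there, as you note, one gets $\mathcal{I}\subseteq\{j:S'^*_{:j}\notin\mbox{Range}(L_0)\}$ and the correct behavior off $\mathcal{I}_0$, and you correctly identify that the only remaining issue is the columns $j\in\mathcal{I}_0\setminus\mathcal{I}$.

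The genuine gap is your plan for that remaining step. Re-running the golfing-scheme and least-squares certifications for the modified instance would not work under the lemma's hypotheses: those constructions are probabilistic and rest on $\mathcal{I}_0\sim\mbox{Ber}(a)$, $\Omega_{obs}\sim\mbox{Ber}(p_0)$, and the incoherence/ambiguity conditions on the $(\widetilde{L},\widetilde{S})$ and $(\widehat{L},\widehat{S})$ attached to that instance, none of which is assumed here --- $\mathcal{I}$ is an arbitrary deterministic subset, and a subset of a Bernoulli sample is not a Bernoulli sample. Worse, the move is circular: Lemma \ref{theorem: elimination on column support} is exactly the deterministic monotonicity device the paper uses to transfer the certificate-based Bernoulli guarantee of Theorem \ref{theorem: exact recovery under Bernoulli sampling} to other sampling models, so its proof cannot itself invoke those certificates. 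The paper instead closes the gap in one deterministic line, making your step-4 tightness extraction ($S'^*_{:j}=c_jS^*_{:j}$, $c_j\ge0$) unnecessary as well: if some $k\notin\mathcal{I}$ had $S'^*_{:k}\notin\mbox{Range}(L_0)$, then since $[\mathcal{P}_{\mathcal{I}}S^*]_{:k}=0$ the $k$-th input column is clean, $M'_{:k}=L^*_{:k}\in\mbox{Range}(L_0)$; as $\mbox{Range}(L'^*)=\mbox{Range}(L_0)$ has already been inherited, the feasibility constraint (together with $S'^*\in\Omega_{obs}$ from Lemma \ref{lemma: inside Omega}) identifies $S'^*_{:k}$ with the residual of $M'_{:k}-L'^*_{:k}\in\mbox{Range}(L_0)$, forcing $S'^*_{:k}\in\mbox{Range}(L_0)$ --- a contradiction. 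This simultaneously settles $j\in\mathcal{I}_0\setminus\mathcal{I}$ and $j\notin\mathcal{I}_0$, which is the elementary argument your proposal is missing.
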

\begin{proof}
Since $(L'^*,S'^*)$ is the solution of \eqref{equ: robust MC wrt any basis} with input matrix $\mathcal{P}_{\Omega_{obs}}M'$, we have
\begin{equation*}
\|L'^*\|_*+\lambda\|S'^*\|_{2,1}\le\|L^*\|_*+\lambda\|\mathcal{P}_\mathcal{I}S^*\|_{2,1}.
\end{equation*}
Therefore
\begin{equation*}
\begin{split}
&\ \ \ \ \ \|L'^*\|_*+\lambda\|S'^*+\mathcal{P}_{{\mathcal{I}^\perp}\cap\mathcal{I}_0}S^*\|_{2,1}\\
&\le\|L'^*\|_*+\lambda\|S'^*\|_{2,1}+\lambda\|\mathcal{P}_{{\mathcal{I}^\perp}\cap\mathcal{I}_0}S^*\|_{2,1}\\
&\le\|L^*\|_*+\lambda\|\mathcal{P}_{\mathcal{I}}S^*\|_{2,1}+\lambda\|\mathcal{P}_{{\mathcal{I}^\perp}\cap\mathcal{I}_0}S^*\|_{2,1}\\
&=\|L^*\|_*+\lambda\|S^*\|_{2,1}.
\end{split}
\end{equation*}
Note that
\begin{equation*}
\mathcal{R}(L'^*+S'^*+\mathcal{P}_{{\mathcal{I}^\perp}\cap\mathcal{I}_0}S^*)=\mathcal{R}(M'+\mathcal{P}_{{\mathcal{I}^\perp}\cap\mathcal{I}_0}S^*)=\mathcal{R}(M).
\end{equation*}
Thus $(L'^*,S'^*+\mathcal{P}_{{\mathcal{I}^\perp}\cap\mathcal{I}_0}S^*)$ is optimal to problem with input $\mathcal{P}_{\Omega_{obs}}M$ and by assumption we have
\begin{equation*}
\mbox{Range}(L'^*)=\mbox{Range}(L^*)=\mbox{Range}(L_0),
\end{equation*}
\begin{equation*}
\{j: [S'^*+\mathcal{P}_{{\mathcal{I}^\perp}\cap\mathcal{I}_0}S^*]_{:j}\not\in\mbox{Range}(L_0)\}=\mbox{Supp}(S_0).
\end{equation*}
The second equation implies $\mathcal{I}\subseteq\{j: S'^*_{:j}\not\in\mbox{Range}(L_0)\}$. Suppose that $\mathcal{I}\not=\{j: S'^*_{:j}\not\in\mbox{Range}(L_0)\}$. Then there exists an index $k$ such that $S'^*_{:k}\not\in\mbox{Range}(L_0)$ and $k\not\in\mathcal{I}$, i.e., $M'_{:k}=L^*_{:k}\in\mbox{Range}(L_0)$. Note that $L'^*_{:j}\in\mbox{Range}(L_0)$. Thus $S'^*_{:k}\in\mbox{Range}(L_0)$ and we have a contradiction. Thus $\mathcal{I}=\{j: S'^*_{:j}\not\in\mbox{Range}(L_0)\}=\{j: S'^*_{:j}\not\in\mbox{Range}(L'^*)\}$ and the algorithm succeeds.
\end{proof}

\begin{lemma}[Matrix (Operator) Bernstein Inequality~\cite{Tropp}]
\label{theorem: Operator-Bernstein}
Let $X_i\in\mathbb{R}^{m\times n},\ i=1,...,s$, be independent, zero-mean, matrix-valued random variables. Assume that $V,L\in\mathbb{R}$ are such that $\max\left\{\left\Vert\sum_{i=1}^s\mathbb{E}\left[X_iX_i^*\right]\right\Vert,\left\Vert\sum_{i=1}^s\mathbb{E}\left[X_i^*X_i\right]\right\Vert\right\}\le M$ and $\|X_i\|\le L$. Then
\begin{equation*}
\mathbb{P}\left[\left\Vert\sum_{i=1}^sX_i\right\Vert>t\right]\le (m+n)\exp\left(-\frac{3t^2}{8M}\right)
\end{equation*}
for $t\le M/L$, and
\begin{equation*}
\mathbb{P}\left[\left\Vert\sum_{i=1}^sX_i\right\Vert>t\right]\le (m+n)\exp\left(-\frac{3t}{8L}\right)
\end{equation*}
for $t> M/L$.
\end{lemma}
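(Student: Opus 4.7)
The plan is to follow the standard matrix Laplace transform argument introduced by Ahlswede--Winter and sharpened by Tropp, whose core analytic input is Lieb's concavity theorem. Since the stated inequality controls the operator norm of a rectangular sum, my first step would be to reduce to the Hermitian case via the Hermitian dilation
\[
\mathcal{H}(X) = \begin{pmatrix} 0 & X \\ X^{*} & 0 \end{pmatrix} \in \mathbb{R}^{(m+n)\times(m+n)}.
\]
Then $\|\mathcal{H}(X)\| = \lambda_{\max}(\mathcal{H}(X)) = \|X\|$, $\mathcal{H}(\sum_i X_i) = \sum_i \mathcal{H}(X_i)$, and $\mathcal{H}(X_i)^{2} = \operatorname{diag}(X_iX_i^{*},\, X_i^{*}X_i)$, so the variance parameter $M$ and the uniform bound $L$ carry over verbatim. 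Setting $Y_i = \mathcal{H}(X_i)$ reduces the problem to estimating $\mathbb{P}(\lambda_{\max}(\sum_i Y_i) \ge t)$ for zero-mean Hermitian $Y_i$ with $\|Y_i\| \le L$ and $\|\sum_i \mathbb{E}[Y_i^{2}]\| \le M$.

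Next I would apply the matrix Markov/Chernoff bound $\mathbb{P}(\lambda_{\max}(\sum Y_i) \ge t) \le \inf_{\theta>0} e^{-\theta t}\,\mathbb{E}\,\mathrm{tr}\,\exp(\theta\sum_i Y_i)$, combined with Tropp's master inequality (a direct consequence of Lieb's concavity theorem applied iteratively through conditional expectations):
\[
\mathbb{E}\,\mathrm{tr}\,\exp\!\Big(\theta\sum_i Y_i\Big) \;\le\; \mathrm{tr}\,\exp\!\Big(\sum_i \log \mathbb{E}\,e^{\theta Y_i}\Big).
\]
To bound each matrix mgf, I would use the scalar inequality $e^{\theta x} \le 1 + \theta x + g(\theta)\, x^{2}$ on $[-L,L]$, where $g(\theta) = (e^{\theta L} - 1 - \theta L)/L^{2}$; transferring it to Hermitian matrices with $\|Y_i\|\le L$ and taking expectations yields $\mathbb{E}\,e^{\theta Y_i} \preceq I + g(\theta)\,\mathbb{E}[Y_i^{2}] \preceq \exp(g(\theta)\,\mathbb{E}[Y_i^{2}])$, so that $\sum_i \log \mathbb{E}\,e^{\theta Y_i} \preceq g(\theta)\sum_i \mathbb{E}[Y_i^{2}]$.

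Plugging back, using $\mathrm{tr}\,\exp(A) \le (m+n)\,e^{\lambda_{\max}(A)}$, and invoking the spectral hypothesis gives
\[
\mathbb{P}\!\Big(\lambda_{\max}\big(\textstyle\sum_i Y_i\big) \ge t\Big) \;\le\; (m+n)\,\exp\!\big(-\theta t + g(\theta)\,M\big).
\]
Optimizing in $\theta>0$ yields a Bernstein-type bound of the form $\exp\!\big(-c\,t^{2}/(M + Lt)\big)$; splitting on whether the variance term $M$ or the range term $Lt$ dominates produces exactly the two stated regimes $t \le M/L$ (sub-Gaussian) and $t > M/L$ (sub-exponential), with the constants in the exponent traced back to the choice of $g$ and the optimizing $\theta$. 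The main obstacle is not algebraic manipulation but the use of Lieb's concavity theorem underlying the master inequality: it is what salvages the independence/mgf-factorization step in the non-commutative setting, where $e^{A+B} \ne e^A e^B$ in general. Once that tool is granted, the remainder is a routine scalar Chernoff-style optimization.
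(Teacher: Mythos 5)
The paper does not prove this lemma at all: it is imported verbatim from the matrix concentration literature (the citation to Tropp), so there is no internal proof to compare against. Your sketch is, in substance, the canonical proof of the cited result --- Hermitian dilation to reduce the rectangular case to $\lambda_{\max}$ of a Hermitian sum (which is why the dimension factor is $m+n$), the matrix Laplace transform combined with Tropp's master inequality via Lieb's concavity theorem, the Bernstein-type mgf bound $\mathbb{E}\,e^{\theta Y_i}\preceq\exp\bigl(g(\theta)\,\mathbb{E}[Y_i^2]\bigr)$, and scalar optimization in $\theta$ --- and it is correct in outline. One point worth making explicit, since the lemma as stated has the unusual split form: the optimization yields the unified bound $(m+n)\exp\bigl(-t^2/(2(M+Lt/3))\bigr)$, and the two stated regimes follow by exactly the case split you indicate, with the constants checking out: for $t\le M/L$ one has $M+Lt/3\le 4M/3$, giving exponent $-3t^2/(8M)$, while for $t>M/L$ one has $M+Lt/3< 4Lt/3$, giving exponent $-3t/(8L)$. (This two-regime variant is the form used by Gross and Recht in the quantum-tomography/matrix-completion papers the present work builds on.) The only cosmetic mismatch is in the paper's statement itself, which declares ``$V,L\in\mathbb{R}$'' but then uses $M$ for the variance proxy --- a typo in the source, not a gap in your argument.
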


Lemma \ref{theorem: elimination on column support} shows that the success of algorithm is monotone on $|\mathcal{I}_0|$.
Thus by standard arguments in \cite{Candes}, \cite{candes2006robust}, and \cite{Candes2010power}, any guarantee proved
for the Bernoulli distribution equivalently holds for the uniform
distribution.

\begin{lemma}
\label{lemma: Omega 2 norm}
For any $\mathcal{K}\sim\mbox{Ber}(p)$, with high probability,
\begin{equation*}
\left\Vert\mathcal{P}_{\mathcal{\widetilde{T}}}-p^{-1}\mathcal{P}_{\mathcal{\widetilde{T}}}\mathcal{R}'\mathcal{P}_{\mathcal{\widetilde{T}}}\right\Vert<\varepsilon\quad\mbox{and}\quad \left\Vert\mathcal{P}_{\mathcal{\widehat{V}}}-p^{-1}\mathcal{P}_{\mathcal{\widehat{V}}}\mathcal{R}'\mathcal{P}_{\mathcal{\widehat{V}}}\right\Vert<\varepsilon,
\end{equation*}
provided that $p\ge C_0\varepsilon^{-2}(\mu r\log n_{(1)})/n_{(2)}$ for some numerical constant $C_0>0$, where $\mathcal{R}'(\cdot)=\sum_{ij\in\mathcal{K}}\langle\cdot,\omega_{ij}\rangle\omega_{ij}$.
\end{lemma}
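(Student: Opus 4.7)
The plan is to expand both operators as sums of i.i.d.\ random rank-one operators and apply the matrix Bernstein inequality (Lemma \ref{theorem: Operator-Bernstein}). Writing $\mathcal{R}'=\sum_{ij}\delta_{ij}\langle\cdot,\omega_{ij}\rangle\omega_{ij}$ with $\delta_{ij}\sim\mbox{Ber}(p)$ independent across $(i,j)$, and using the fact that orthonormality of the basis yields the resolution of identity $\mathcal{P}_{\mathcal{\widetilde{T}}}=\sum_{ij}\langle\cdot,\mathcal{P}_{\mathcal{\widetilde{T}}}\omega_{ij}\rangle\mathcal{P}_{\mathcal{\widetilde{T}}}\omega_{ij}$, I would decompose
\begin{equation*}
\mathcal{P}_{\mathcal{\widetilde{T}}}-p^{-1}\mathcal{P}_{\mathcal{\widetilde{T}}}\mathcal{R}'\mathcal{P}_{\mathcal{\widetilde{T}}}=\sum_{ij}\mathcal{X}_{ij},\qquad \mathcal{X}_{ij}=(1-p^{-1}\delta_{ij})\,\langle\cdot,\mathcal{P}_{\mathcal{\widetilde{T}}}\omega_{ij}\rangle\mathcal{P}_{\mathcal{\widetilde{T}}}\omega_{ij}.
\end{equation*}
Each $\mathcal{X}_{ij}$ is a zero-mean, self-adjoint operator on matrix space; to invoke Lemma \ref{theorem: Operator-Bernstein} I would pass to the Hermitian dilation so that the ``dimension factor'' in the tail bound becomes polynomial in $n_{(1)}$, losing only a $\log n_{(1)}$ inside the final sample complexity.

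Next I would bound the uniform operator-norm parameter $L=\max_{ij}\|\mathcal{X}_{ij}\|$. Since $\mathcal{X}_{ij}$ is rank one with Frobenius-norm-squared ``coefficient'' $\|\mathcal{P}_{\mathcal{\widetilde{T}}}\omega_{ij}\|_F^2$, and since $\mathcal{P}_{\mathcal{\widetilde{T}}}=\mathcal{P}_{\mathcal{\widetilde{U}}}+\mathcal{P}_{\mathcal{\widetilde{V}}}-\mathcal{P}_{\mathcal{\widetilde{U}}}\mathcal{P}_{\mathcal{\widetilde{V}}}$, the incoherence conditions \eqref{equ: incoherence 1} and \eqref{equ: another incoherence} yield $\|\mathcal{P}_{\mathcal{\widetilde{T}}}\omega_{ij}\|_F^2\le 2\mu r/n_{(2)}$. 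Hence $L\le 2\mu r/(p\,n_{(2)})$.

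For the variance parameter $M=\|\sum_{ij}\mathbb{E}\mathcal{X}_{ij}^2\|$, a direct computation using $\mathbb{E}(1-p^{-1}\delta_{ij})^2=(1-p)/p$ gives
\begin{equation*}
\sum_{ij}\mathbb{E}\mathcal{X}_{ij}^2\preceq\frac{1-p}{p}\Bigl(\max_{ij}\|\mathcal{P}_{\mathcal{\widetilde{T}}}\omega_{ij}\|_F^2\Bigr)\mathcal{P}_{\mathcal{\widetilde{T}}},
\end{equation*}
so $M\le 2\mu r/(p\,n_{(2)})$ after again invoking the incoherence bound. Feeding $L$ and $M$ into Lemma \ref{theorem: Operator-Bernstein} with $t=\varepsilon$, the Bernstein tail yields a probability bound of the form $n_{(1)}\exp(-3\varepsilon^2 p n_{(2)}/(16\mu r))$, which is at most $n^{-10}$ exactly when $p\ge C_0\varepsilon^{-2}\mu r\log n_{(1)}/n_{(2)}$ for some universal constant $C_0$.

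The second bound is proved along the same lines, but only the row-space incoherence \eqref{equ: incoherence 1} is used. Writing $\omega_{ij}=\sum_{i'}c_{i'ij}e_{i'}e_j^*$ (which is permitted by the column-compatibility assumption \eqref{equ: commutative operators}) and using $\sum_{i'}c_{i'ij}^2=1$, one gets $\|\mathcal{P}_{\mathcal{\widehat{V}}}\omega_{ij}\|_F^2=\|e_j^*\widehat{V}\widehat{V}^*\|_2^2=\|\widehat{V}^*e_j\|_2^2\le\mu r/n$, which is exactly the analogue of the bound used above. The same Bernstein argument then closes the case. The main delicacy is ensuring that the scalar $(1-p^{-1}\delta_{ij})^2$ variance factor and the operator-level factor combine cleanly and that the Hermitian dilation is done carefully so that the dimension factor does not blow up beyond $n_{(1)}$; this is the step where I expect most of the bookkeeping to occur, though no new idea is required beyond the standard one.
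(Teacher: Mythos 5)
Your proposal is correct and matches the paper's own proof essentially step for step: the same decomposition into zero-mean rank-one operators $(1-p^{-1}\delta_{ij})\,\mathcal{P}_{\mathcal{X}}(\omega_{ij})\otimes\mathcal{P}_{\mathcal{X}}(\omega_{ij})$, the same incoherence bounds $L,M\le c\mu r/(p\,n_{(2)})$ via $\|\mathcal{P}_{\mathcal{X}}\omega_{ij}\|_F^2\le c\mu r/n_{(2)}$ and the resolution of identity $\sum_{ij}\mathcal{P}_{\mathcal{X}}(\omega_{ij})\otimes\mathcal{P}_{\mathcal{X}}(\omega_{ij})=\mathcal{P}_{\mathcal{X}}$, and the same matrix Bernstein conclusion. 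The only cosmetic deviation is your mention of a Hermitian dilation, which is unnecessary since the summands are already self-adjoint; the paper simply applies Lemma \ref{theorem: Operator-Bernstein} directly with dimension factor $2mn\le 2n_{(1)}^2$, which costs only a constant inside the logarithm, exactly as you anticipated.
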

\begin{proof}
The proof is in Appendix \ref{Sec: Proofs of Lemma}.
\end{proof}

\begin{corollary}[\cite{Candes}]
Assume that $\mathcal{K}_{obs}\sim\mbox{Ber}(p_0)$. Then with an
overwhelming probability,
$\|\mathcal{P}_{\Omega_{obs}^\perp}\mathcal{P}_{\mathcal{\widetilde{T}}}\|^2\le \varepsilon+1-p_0$,
provided that $p_0\ge C_0\varepsilon^{-2}(\mu r\log n)/n$ for some
numerical constant $C_0>0$.
\end{corollary}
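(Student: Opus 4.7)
The plan is to obtain this corollary as an immediate consequence of Lemma \ref{lemma: Omega 2 norm}, instantiated with $\mathcal{K}=\mathcal{K}_{obs}$, $p=p_0$, and hence $\mathcal{R}'=\mathcal{P}_{\Omega_{obs}}$. The first step is to re-express the quantity of interest as a sandwiched self-adjoint operator. Using the standard identity $\|A\|^2=\|A^*A\|$ together with the self-adjointness of $\mathcal{P}_{\widetilde{\mathcal{T}}}$ and $\mathcal{P}_{\Omega_{obs}^\perp}$, I would write $\|\mathcal{P}_{\Omega_{obs}^\perp}\mathcal{P}_{\widetilde{\mathcal{T}}}\|^2=\|\mathcal{P}_{\widetilde{\mathcal{T}}}\mathcal{P}_{\Omega_{obs}^\perp}\mathcal{P}_{\widetilde{\mathcal{T}}}\|$, which converts the estimate into one about an operator living on $\widetilde{\mathcal{T}}$.

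Next I would use $\mathcal{P}_{\Omega_{obs}^\perp}=I-\mathcal{P}_{\Omega_{obs}}$ and insert a factor of $p_0$ to match the normalization in the Lemma, writing
\[
\mathcal{P}_{\widetilde{\mathcal{T}}}\mathcal{P}_{\Omega_{obs}^\perp}\mathcal{P}_{\widetilde{\mathcal{T}}}
=(1-p_0)\mathcal{P}_{\widetilde{\mathcal{T}}}+p_0\bigl(\mathcal{P}_{\widetilde{\mathcal{T}}}-p_0^{-1}\mathcal{P}_{\widetilde{\mathcal{T}}}\mathcal{P}_{\Omega_{obs}}\mathcal{P}_{\widetilde{\mathcal{T}}}\bigr).
\]
The triangle inequality together with $\|\mathcal{P}_{\widetilde{\mathcal{T}}}\|=1$ then gives
\[
\|\mathcal{P}_{\Omega_{obs}^\perp}\mathcal{P}_{\widetilde{\mathcal{T}}}\|^2\le (1-p_0)+p_0\,\bigl\|\mathcal{P}_{\widetilde{\mathcal{T}}}-p_0^{-1}\mathcal{P}_{\widetilde{\mathcal{T}}}\mathcal{P}_{\Omega_{obs}}\mathcal{P}_{\widetilde{\mathcal{T}}}\bigr\|.
\]
Under the sampling rate hypothesis $p_0\ge C_0\varepsilon^{-2}\mu r\log n/n$, Lemma \ref{lemma: Omega 2 norm} bounds the second norm by $\varepsilon$ with overwhelming probability. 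Using $p_0\le 1$ to absorb the factor of $p_0$ in front of $\varepsilon$ yields the stated bound $(1-p_0)+\varepsilon$.

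There is essentially no obstacle here: the entire argument is the $\|A\|^2=\|A^*A\|$ identity, a decomposition of the orthogonal projection, and one invocation of the preceding lemma; no fresh concentration inequality or dual certificate is required. The only minor bookkeeping point is that the lemma is stated with $n_{(1)}$ and $n_{(2)}$, while the corollary uses a single $n$; since $n_{(2)}\le n\le n_{(1)}$, the hypothesis in the corollary implies the hypothesis in the lemma up to the universal constant $C_0$, so the implication is immediate.
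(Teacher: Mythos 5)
Your proof is correct and is essentially the paper's own: the paper gives no proof of this corollary, attributing it directly to \cite{Candes}, where the corresponding corollary is proved by exactly your computation --- the identity $\|\mathcal{P}_{\Omega_{obs}^\perp}\mathcal{P}_{\widetilde{\mathcal{T}}}\|^2=\|\mathcal{P}_{\widetilde{\mathcal{T}}}\mathcal{P}_{\Omega_{obs}^\perp}\mathcal{P}_{\widetilde{\mathcal{T}}}\|$, the decomposition $\mathcal{P}_{\widetilde{\mathcal{T}}}\mathcal{P}_{\Omega_{obs}^\perp}\mathcal{P}_{\widetilde{\mathcal{T}}}=(1-p_0)\mathcal{P}_{\widetilde{\mathcal{T}}}+p_0\bigl(\mathcal{P}_{\widetilde{\mathcal{T}}}-p_0^{-1}\mathcal{P}_{\widetilde{\mathcal{T}}}\mathcal{P}_{\Omega_{obs}}\mathcal{P}_{\widetilde{\mathcal{T}}}\bigr)$, the triangle inequality, and one invocation of Lemma \ref{lemma: Omega 2 norm}. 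One small caution on your closing bookkeeping remark: since $\log n\le\log n_{(1)}$ and $1/n\le 1/n_{(2)}$, the corollary's hypothesis $p_0\ge C_0\varepsilon^{-2}\mu r\log n/n$ is in fact \emph{weaker} than the lemma's $p_0\ge C_0\varepsilon^{-2}\mu r\log n_{(1)}/n_{(2)}$ for genuinely rectangular matrices, so the implication runs the opposite way from what you wrote; this looseness is inherited from the paper's own statement, which implicitly treats $n$ as playing the role of both $n_{(1)}$ and $n_{(2)}$, and under that (square or $m=\Theta(n)$) reading your derivation is immediate.
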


\begin{lemma}
\label{lemma: Omega infty infty norm}
Suppose that $Z\in\mathcal{\widetilde{T}}$ and $\mathcal{K}\sim\mbox{Ber}(p)$. Let $\mathcal{R'}(\cdot)=\sum_{ij\in\mathcal{K}}\langle\cdot, \omega_{ij}\rangle\omega_{ij}$. Then with high probability
\begin{equation*}
\max_{ab}\left|\langle Z-p^{-1}\mathcal{P}_{\mathcal{\widetilde{T}}}\mathcal{R'}Z,\omega_{ab}\rangle\right|<\varepsilon\max_{ab}\left|\langle Z,\omega_{ab}\rangle\right|,
\end{equation*}
provided that $p\ge C_0\varepsilon^{-2}(\mu r\log n_{(1)})/n_{(2)}$ for some numerical constant $C_0>0$.
\end{lemma}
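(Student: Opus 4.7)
My plan is to reduce this to a scalar Bernstein inequality, treating the set $\mathcal{K}$ as Bernoulli indicators $\{\delta_{ij}\}$ with $\mathbb{P}(\delta_{ij}=1)=p$, and afterwards union-bounding over the $O(mn)$ choices of $(a,b)$. The starting observation is that because $Z\in\widetilde{\mathcal{T}}$, we can rewrite the inner product in question as a sum of independent scalars. Specifically,
\begin{equation*}
\langle Z,\omega_{ab}\rangle=\langle Z,\mathcal{P}_{\widetilde{\mathcal{T}}}\omega_{ab}\rangle=\sum_{ij}\langle Z,\omega_{ij}\rangle\,\langle\mathcal{P}_{\widetilde{\mathcal{T}}}\omega_{ij},\omega_{ab}\rangle,
\end{equation*}
while $p^{-1}\langle\mathcal{P}_{\widetilde{\mathcal{T}}}\mathcal{R}'Z,\omega_{ab}\rangle=p^{-1}\sum_{ij}\delta_{ij}\langle Z,\omega_{ij}\rangle\langle\mathcal{P}_{\widetilde{\mathcal{T}}}\omega_{ij},\omega_{ab}\rangle$. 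Thus, for fixed $(a,b)$, the quantity we wish to control is $\sum_{ij}X_{ij}$ where $X_{ij}=(1-p^{-1}\delta_{ij})\langle Z,\omega_{ij}\rangle\langle\mathcal{P}_{\widetilde{\mathcal{T}}}\omega_{ij},\omega_{ab}\rangle$ is a family of independent, zero-mean scalar random variables.

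Next I would bound the two Bernstein parameters using the incoherence conditions \eqref{equ: incoherence 1}--\eqref{equ: another incoherence}, which (via $\mathcal{P}_{\widetilde{\mathcal{T}}}=\mathcal{P}_{\widetilde{\mathcal{U}}}+\mathcal{P}_{\widetilde{\mathcal{V}}}-\mathcal{P}_{\widetilde{\mathcal{U}}}\mathcal{P}_{\widetilde{\mathcal{V}}}$) imply $\|\mathcal{P}_{\widetilde{\mathcal{T}}}\omega_{ij}\|_F^2\le 2\mu r/n_{(2)}$ for every $(i,j)$. Setting $Z_\infty:=\max_{cd}|\langle Z,\omega_{cd}\rangle|$ and using $|\langle\mathcal{P}_{\widetilde{\mathcal{T}}}\omega_{ij},\omega_{ab}\rangle|\le\|\mathcal{P}_{\widetilde{\mathcal{T}}}\omega_{ij}\|_F\|\mathcal{P}_{\widetilde{\mathcal{T}}}\omega_{ab}\|_F\le 2\mu r/n_{(2)}$, I obtain the uniform bound $|X_{ij}|\le L:=2p^{-1}Z_\infty\mu r/n_{(2)}$. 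For the variance, $\operatorname{Var}(p^{-1}\delta_{ij})=(1-p)/p$, so
\begin{equation*}
\sum_{ij}\mathbb{E}[X_{ij}^2]\le p^{-1}Z_\infty^2\sum_{ij}\langle\mathcal{P}_{\widetilde{\mathcal{T}}}\omega_{ij},\omega_{ab}\rangle^2=p^{-1}Z_\infty^2\|\mathcal{P}_{\widetilde{\mathcal{T}}}\omega_{ab}\|_F^2\le \sigma^2:=\frac{2\mu r\,Z_\infty^2}{p\,n_{(2)}},
\end{equation*}
where I used Parseval with respect to the orthonormal basis $\{\omega_{ij}\}$.

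Applying the scalar Bernstein inequality with $t=\varepsilon Z_\infty$ then yields, for each fixed $(a,b)$,
\begin{equation*}
\mathbb{P}\!\left(\Bigl|\sum_{ij}X_{ij}\Bigr|>\varepsilon Z_\infty\right)\le 2\exp\!\left(-\frac{\varepsilon^2 Z_\infty^2/2}{\sigma^2+L\varepsilon Z_\infty/3}\right)\le 2\exp\!\left(-c\,\varepsilon^2\,\frac{p\,n_{(2)}}{\mu r}\right),
\end{equation*}
where the last bound uses the hypothesis $p\ge C_0\varepsilon^{-2}\mu r\log n_{(1)}/n_{(2)}$ to ensure the variance term dominates. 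A union bound over the $mn\le n_{(1)}^2$ choices of $(a,b)$ then gives failure probability $\le 2n_{(1)}^2\exp(-c\,\varepsilon^2 p n_{(2)}/(\mu r))$, which is $\le n_{(1)}^{-10}$ once $C_0$ is chosen sufficiently large.

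\paragraph{Main obstacle.}
The only delicate point, I expect, is the bookkeeping for the incoherence on $\widetilde{\mathcal{T}}$, which depends on the random column support $\mathcal{I}_0$. However, since the Bernoulli randomness of $\mathcal{K}$ is independent of $\mathcal{I}_0$, the whole argument should be carried out conditionally on $\mathcal{I}_0$ (so $\widetilde{\mathcal{T}}$ and its incoherence parameter $\mu$ are treated as fixed), consistent with the decoupling strategy advertised in Section~\ref{subsubsection: dual conditions 1}. Everything else is a direct scalar Bernstein calculation.
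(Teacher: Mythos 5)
Your proposal is correct and follows essentially the same route as the paper's proof in Appendix~\ref{Sec: Proofs of Lemma 12}: the identical scalar summands $X_{ij}=(1-p^{-1}\delta_{ij})\langle Z,\omega_{ij}\rangle\langle\mathcal{P}_{\widetilde{\mathcal{T}}}\omega_{ij},\omega_{ab}\rangle$, the same incoherence-based bounds $L$ and $\sigma^2$ of order $\mu r/(p\,n_{(2)})$ via Cauchy--Schwarz and Parseval, and the same scalar Bernstein application in the variance-dominated regime. Your explicit union bound over $(a,b)$ and the remark about conditioning on $\mathcal{I}_0$ merely make rigorous two steps the paper leaves implicit.
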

\begin{proof}
The proof is in Appendix \ref{Sec: Proofs of Lemma 12}.
\end{proof}

\begin{lemma}
\label{lemma: Omega 2 infty norm}
Suppose that $Z$ is a fixed matrix and $\mathcal{K}\sim\mbox{Ber}(p)$. Let $\mathcal{R'}(\cdot)=\sum_{ij\in\mathcal{K}}\langle\cdot, \omega_{ij}\rangle\omega_{ij}$. Then with high probability
\begin{equation*}
\|Z-p^{-1}\mathcal{R'}Z\|<C_0'\sqrt{\frac{n_{(1)}\log n_{(1)}}{p}}\max_{ij}\left|\langle Z,\omega_{ij}\rangle\right|,
\end{equation*}
provided that $p\ge C_0'(\mu\log n_{(1)})/n_{(1)}$ for some small numerical constant $C_0'>0$.
\end{lemma}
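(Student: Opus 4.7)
I will apply the matrix Bernstein inequality (Lemma~\ref{theorem: Operator-Bernstein}) to the expansion of $Z-p^{-1}\mathcal{R}'Z$ in the orthonormal basis $\{\omega_{ij}\}$. Writing $\delta_{ij}=\mathbf{1}\{(i,j)\in\mathcal{K}\}$, so that the $\delta_{ij}$ are i.i.d.\ $\mathrm{Ber}(p)$, the target decomposes as
\[
Z-p^{-1}\mathcal{R}'Z \;=\; \sum_{ij} X_{ij}, \qquad X_{ij}:=\bigl(1-p^{-1}\delta_{ij}\bigr)\langle Z,\omega_{ij}\rangle\,\omega_{ij},
\]
a sum of independent zero-mean matrix-valued random variables.

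\textbf{Single-term and variance bounds.} The uniform bound $\|X_{ij}\|\le L:=M/p$ follows from $\|\omega_{ij}\|\le\|\omega_{ij}\|_F=1$ and $|1-p^{-1}\delta_{ij}|\le 1/p$, where $M:=\max_{ij}|\langle Z,\omega_{ij}\rangle|$. For the variance, $\mathrm{Var}(1-p^{-1}\delta_{ij})=(1-p)/p\le 1/p$ yields
\[
\Bigl\|\sum_{ij}\mathbb{E}[X_{ij}X_{ij}^*]\Bigr\|\;\le\;\frac{M^2}{p}\,\Bigl\|\sum_{ij}\omega_{ij}\omega_{ij}^*\Bigr\|,
\]
and symmetrically for the conjugate version. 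Here I will invoke the structural hypothesis~\eqref{equ: commutative operators}: for each $j$, $\{\omega_{ij}\}_{i=1}^m$ is an orthonormal basis of $\mathrm{Span}\{e_ie_j^*:i=1,\dots,m\}$, so $\omega_{ij}=v_{ij}e_j^*$ for some orthonormal frame $\{v_{ij}\}_i$ of $\mathbb{R}^m$. Then $\omega_{ij}\omega_{ij}^*=v_{ij}v_{ij}^*$ and $\omega_{ij}^*\omega_{ij}=e_je_j^*$, so
\[
\sum_{ij}\omega_{ij}\omega_{ij}^*=n\,I_m,\qquad \sum_{ij}\omega_{ij}^*\omega_{ij}=m\,I_n,
\]
and the matrix variance proxy is bounded by $V:=n_{(1)}M^2/p$.

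\textbf{Applying Bernstein.} Choose $t=C_0'\sqrt{n_{(1)}\log n_{(1)}/p}\cdot M$. The side condition $t\le V/L = n_{(1)}M$ reduces to $p\ge C_0'^{\,2}\log n_{(1)}/n_{(1)}$, which is ensured by the assumption $p\ge C_0'\mu\log n_{(1)}/n_{(1)}$ after taking $C_0'$ large enough and using $\mu\ge 1$. Lemma~\ref{theorem: Operator-Bernstein} then delivers
\[
\mathbb{P}\bigl[\|Z-p^{-1}\mathcal{R}'Z\|>t\bigr]\le (m+n)\exp\!\Bigl(-\tfrac{3t^2}{8V}\Bigr)=(m+n)\,n_{(1)}^{-3C_0'^{\,2}/8},
\]
which is at most $n_{(1)}^{-10}$ for $C_0'$ sufficiently large, giving the claim with overwhelming probability.

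\textbf{Main obstacle.} The only non-routine ingredient is evaluating $\sum_{ij}\omega_{ij}\omega_{ij}^*$ and $\sum_{ij}\omega_{ij}^*\omega_{ij}$ for an abstract basis; this is where hypothesis~\eqref{equ: commutative operators} plays its essential role, since without the column-preserving factorization $\omega_{ij}=v_{ij}e_j^*$ these sums need not collapse to scaled identities and the linear factor $n_{(1)}$ in $V$ could be replaced by something as large as $mn$, destroying the bound.
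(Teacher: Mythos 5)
Your proposal is correct and takes essentially the same route as the paper's proof: the identical decomposition $Z-p^{-1}\mathcal{R}'Z=\sum_{ij}X_{ij}$ with $X_{ij}=(1-p^{-1}\delta_{ij})\langle Z,\omega_{ij}\rangle\omega_{ij}$, the same single-term bound $L=p^{-1}\max_{ij}|\langle Z,\omega_{ij}\rangle|$, the same variance bound via $\bigl\Vert\sum_{ij}\omega_{ij}\omega_{ij}^*\bigr\Vert=n$ and $\bigl\Vert\sum_{ij}\omega_{ij}^*\omega_{ij}\bigr\Vert=m$, and the same application of matrix Bernstein with $t=C_0'\sqrt{n_{(1)}\log n_{(1)}/p}\,\max_{ij}|\langle Z,\omega_{ij}\rangle|$. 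If anything, you improve on the paper by explicitly justifying the step it merely asserts, namely that hypothesis \eqref{equ: commutative operators} forces the column-preserving factorization $\omega_{ij}=v_{ij}e_j^*$, from which $\sum_{ij}\omega_{ij}\omega_{ij}^*=nI_m$ and $\sum_{ij}\omega_{ij}^*\omega_{ij}=mI_n$ follow.
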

\begin{proof}
The proof is in Appendix \ref{Sec: Proofs of Lemma 13}.
\end{proof}

\begin{lemma}
\label{lemma: relation 2 norm}
Let $\mathcal{R'}$ be the projection operator onto space $\Omega=\mbox{Span}\{\omega_{ij},\ i,j\in\mathcal{K}\}$ with any $\mathcal{K}$, the space $\mathcal{I}=\mbox{Span}\{\omega_{ij},\ j\in\mathcal{J}\}$, and $\Psi=\mathcal{I}\cap\Omega$. Let $\mathcal{J}\sim\mbox{Ber}(a)$. Then with high probability
\begin{equation*}
\begin{split}
\left\Vert a^{-1}\mathcal{P}_\mathcal{\widehat{V}}\mathcal{R'}\mathcal{P}_\mathcal{I}\mathcal{R'}\mathcal{P}_\mathcal{\widehat{V}}-\mathcal{P}_\mathcal{\widehat{V}}\mathcal{R'}\mathcal{P}_\mathcal{\widehat{V}}\right\Vert=\left\Vert a^{-1}\mathcal{P}_\mathcal{\widehat{V}}\mathcal{P}_\Psi\mathcal{P}_\mathcal{\widehat{V}}-\mathcal{P}_\mathcal{\widehat{V}}\mathcal{R'}\mathcal{P}_\mathcal{\widehat{V}}\right\Vert<\varepsilon,
\end{split}
\end{equation*}
provided that $a\ge C_0\varepsilon^{-2}(\mu r\log n_{(1)})/n$ for some numerical constant $C_0>0$.
\end{lemma}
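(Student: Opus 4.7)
The plan is to apply the Matrix Bernstein inequality (Lemma \ref{theorem: Operator-Bernstein}) after expressing the deviation operator as a sum of independent, mean-zero random terms indexed by columns. First I would establish the identity $\mathcal{R'}\mathcal{P}_\mathcal{I}\mathcal{R'} = \mathcal{P}_\Psi$ (which is the easy first equality in the lemma): both $\mathcal{R'} = \mathcal{P}_\Omega$ and $\mathcal{P}_\mathcal{I}$ are orthogonal projections onto subspaces spanned by subsets of the orthonormal basis $\{\omega_{ij}\}$, so they commute, and their composition is the projection onto $\Omega \cap \mathcal{I} = \Psi$.

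Next, letting $\mathcal{I}_j = \mathrm{Span}\{\omega_{ij} : i = 1,\ldots,m\}$, we have the disjoint decomposition $\mathcal{P}_\Omega = \sum_{j=1}^n \mathcal{P}_{\mathcal{I}_j \cap \Omega}$, and the Bernoulli model gives $\mathcal{P}_\Psi = \sum_{j=1}^n \delta_j \mathcal{P}_{\mathcal{I}_j \cap \Omega}$ with $\delta_j \overset{\text{iid}}{\sim}\mbox{Ber}(a)$. Therefore
\begin{equation*}
a^{-1}\mathcal{P}_{\widehat{\mathcal{V}}}\mathcal{P}_\Psi\mathcal{P}_{\widehat{\mathcal{V}}} - \mathcal{P}_{\widehat{\mathcal{V}}}\mathcal{R'}\mathcal{P}_{\widehat{\mathcal{V}}} = \sum_{j=1}^n X_j,\quad X_j = (a^{-1}\delta_j - 1)\, \mathcal{P}_{\widehat{\mathcal{V}}}\mathcal{P}_{\mathcal{I}_j \cap \Omega}\mathcal{P}_{\widehat{\mathcal{V}}},
\end{equation*}
where the $X_j$ are independent, self-adjoint, and mean-zero operators on $\mathbb{R}^{m\times n}$.

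The two bounds Matrix Bernstein needs then follow from the incoherence condition \eqref{equ: incoherence 1} on $\widehat{\mathcal{V}}$. Since $\|\mathcal{P}_{\widehat{\mathcal{V}}}\omega_{ij}\|_F^2 \le \mu r/n$, we obtain $\|\mathcal{P}_{\widehat{\mathcal{V}}}\mathcal{P}_{\mathcal{I}_j \cap \Omega}\mathcal{P}_{\widehat{\mathcal{V}}}\| \le \|\mathcal{P}_{\widehat{\mathcal{V}}}\mathcal{P}_{\mathcal{I}_j}\|^2 \le \mu r/n$, giving the uniform bound $\|X_j\| \le L := \mu r/(an)$. For the variance, using the operator inequality $A^2 \preceq \|A\|\, A$ for PSD $A$,
\begin{equation*}
\sum_{j} \mathbb{E}[X_j^2] \;\preceq\; \frac{1-a}{a}\cdot\frac{\mu r}{n}\,\sum_j \mathcal{P}_{\widehat{\mathcal{V}}}\mathcal{P}_{\mathcal{I}_j \cap \Omega}\mathcal{P}_{\widehat{\mathcal{V}}} \;=\; \frac{1-a}{a}\cdot\frac{\mu r}{n}\,\mathcal{P}_{\widehat{\mathcal{V}}}\mathcal{R'}\mathcal{P}_{\widehat{\mathcal{V}}},
\end{equation*}
whose operator norm is at most $M := \mu r/(an)$. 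Plugging $L, M$ into Lemma \ref{theorem: Operator-Bernstein} with $t = \varepsilon$ (and the dimension factor absorbed as $O(\log n_{(1)})$, since operators on $\mathbb{R}^{m\times n}$ correspond to $mn\times mn$ matrices), the probability of failure is $\le \text{poly}(n_{(1)})\exp(-c\varepsilon^2 an/(\mu r))$, which is $\le n_{(1)}^{-10}$ once $a \ge C_0 \varepsilon^{-2}(\mu r \log n_{(1)})/n$.

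The most delicate step is the variance bound: a crude estimate $\mathbb{E}[X_j^2] \preceq a^{-1}\|\cdot\|^2\, I$ would lose a factor of $n$ and destroy the dependence on $\mu r/n$. The essential trick is to use $A^2 \preceq \|A\|A$ so that the sum of squares telescopes back to the operator $\mathcal{P}_{\widehat{\mathcal{V}}}\mathcal{R'}\mathcal{P}_{\widehat{\mathcal{V}}}$, whose norm is bounded by $1$; this is the same mechanism that drives Lemma \ref{lemma: Omega 2 norm}, but adapted to column-wise rather than entry-wise Bernoulli sampling.
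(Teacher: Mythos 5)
Your proposal is correct and takes essentially the same route as the paper's proof: the identical column-indexed decomposition into independent zero-mean self-adjoint terms $X_j=(a^{-1}\delta_j-1)\,\mathcal{P}_{\widehat{\mathcal{V}}}\mathcal{P}_{\mathcal{I}_j\cap\Omega}\mathcal{P}_{\widehat{\mathcal{V}}}$ (the paper writes them as $(a^{-1}\delta_j-1)\sum_i\kappa_{ij}\,\mathcal{P}_{\widehat{\mathcal{V}}}\omega_{ij}\otimes\mathcal{P}_{\widehat{\mathcal{V}}}\omega_{ij}$), the same bounds $L=M=\mu r/(an)$, and the same matrix Bernstein conclusion, your projector-sandwich and $A^2\preceq\|A\|A$ computations being the operator form of the paper's test-matrix estimates with $C_2=\pm C_1$ and the $\ell_{2,\infty}$ incoherence bound. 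The only point worth noting is that your step $\|\mathcal{P}_{\widehat{\mathcal{V}}}\mathcal{P}_{\mathcal{I}_j}\|^2\le\mu r/n$ is not automatic from \eqref{equ: incoherence 1} for an arbitrary orthonormal basis but relies on the structural condition \eqref{equ: commutative operators} (the $\omega_{ij}$, $i=1,\dots,m$, span the $j$-th column space, so the projections $\mathcal{P}_{\widehat{\mathcal{V}}}\omega_{ij}$ are mutually orthogonal with equal norms), which is exactly the fact the paper's proof exploits through its unitary matrices $G_j$.
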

\begin{proof}
The proof is in Appendix \ref{Sec: Proofs of Lemma 14}.
\end{proof}

\begin{corollary}
\label{corollary: sum 2 norm}
Assume that $\Psi=\mathcal{I}\cap\Omega$. Then for any $\mathcal{I}\sim\mbox{Ber}(a)$ and $\Omega\sim\mbox{Ber}(p)$, with high probability
\begin{equation*}
\|(pa)^{-1}\mathcal{P}_\mathcal{\widehat{V}}\mathcal{P}_\Psi\mathcal{P}_\mathcal{\widehat{V}}-\mathcal{P}_\mathcal{\widehat{V}}\|<(p^{-1}+1)\varepsilon,
\end{equation*}
provided that $a,p\ge C_0\varepsilon^{-2}(\mu r\log n_{(1)})/n$ for some numerical constant $C_0>0$.
\end{corollary}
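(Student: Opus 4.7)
The plan is to derive the corollary as an immediate consequence of the preceding Lemma \ref{lemma: relation 2 norm} combined with Lemma \ref{lemma: Omega 2 norm}, using a two-term triangle inequality split. Notice that Lemma \ref{lemma: relation 2 norm} controls $a^{-1}\mathcal{P}_\mathcal{\widehat{V}}\mathcal{P}_\Psi\mathcal{P}_\mathcal{\widehat{V}}$ against the projected version $\mathcal{P}_\mathcal{\widehat{V}}\mathcal{R}'\mathcal{P}_\mathcal{\widehat{V}}$ where $\mathcal{R}'$ now plays the role of the projector onto $\Omega$, while Lemma \ref{lemma: Omega 2 norm} controls how close $p^{-1}\mathcal{P}_\mathcal{\widehat{V}}\mathcal{R}'\mathcal{P}_\mathcal{\widehat{V}}$ is to $\mathcal{P}_\mathcal{\widehat{V}}$ when $\Omega\sim\mathrm{Ber}(p)$. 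Chaining these two estimates gives exactly the claimed bound.

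Concretely, first I would write
\begin{equation*}
(pa)^{-1}\mathcal{P}_\mathcal{\widehat{V}}\mathcal{P}_\Psi\mathcal{P}_\mathcal{\widehat{V}}-\mathcal{P}_\mathcal{\widehat{V}}
= p^{-1}\bigl(a^{-1}\mathcal{P}_\mathcal{\widehat{V}}\mathcal{P}_\Psi\mathcal{P}_\mathcal{\widehat{V}}-\mathcal{P}_\mathcal{\widehat{V}}\mathcal{R}'\mathcal{P}_\mathcal{\widehat{V}}\bigr)
+\bigl(p^{-1}\mathcal{P}_\mathcal{\widehat{V}}\mathcal{R}'\mathcal{P}_\mathcal{\widehat{V}}-\mathcal{P}_\mathcal{\widehat{V}}\bigr),
\end{equation*}
and apply the operator-norm triangle inequality. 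For the first summand I would invoke Lemma \ref{lemma: relation 2 norm}, which under the hypothesis $a\ge C_0\varepsilon^{-2}(\mu r\log n_{(1)})/n$ (applied conditionally on $\Omega$) yields $\|a^{-1}\mathcal{P}_\mathcal{\widehat{V}}\mathcal{P}_\Psi\mathcal{P}_\mathcal{\widehat{V}}-\mathcal{P}_\mathcal{\widehat{V}}\mathcal{R}'\mathcal{P}_\mathcal{\widehat{V}}\|<\varepsilon$ with high probability, contributing $p^{-1}\varepsilon$. For the second summand, Lemma \ref{lemma: Omega 2 norm} together with the bound $p\ge C_0\varepsilon^{-2}(\mu r\log n_{(1)})/n$ gives $\|\mathcal{P}_\mathcal{\widehat{V}}-p^{-1}\mathcal{P}_\mathcal{\widehat{V}}\mathcal{R}'\mathcal{P}_\mathcal{\widehat{V}}\|<\varepsilon$ with high probability. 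Summing the two contributions yields the stated bound $(p^{-1}+1)\varepsilon$.

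The only subtlety, and the step that warrants most care, is that the randomness of $\mathcal{I}$ and $\Omega$ must be combined coherently: Lemma \ref{lemma: relation 2 norm} treats $\mathcal{R}'$ (equivalently $\Omega$) as a deterministic projector, so the bound on the first summand should be established by conditioning on $\Omega$, observing that the constants in Lemma \ref{lemma: relation 2 norm} are uniform in the choice of $\Omega$, and then taking a union bound with the high-probability event provided by Lemma \ref{lemma: Omega 2 norm}. Since both events hold with probability at least $1-\Theta(n^{-10})$, a single union bound preserves the overwhelming-probability conclusion. No further concentration work is required: the corollary is essentially a bookkeeping consequence of the two lemmas, and the form $(p^{-1}+1)\varepsilon$ directly reflects which of the two deviations is being rescaled by $p^{-1}$.
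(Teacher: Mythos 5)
Your proposal is correct and takes essentially the same route as the paper's own proof: both chain Lemma \ref{lemma: relation 2 norm} (with $\Omega$ treated as fixed) and Lemma \ref{lemma: Omega 2 norm} through the identical triangle-inequality split, the paper merely writing the estimate scaled by $p$ first (obtaining $(p+1)\varepsilon$ and then dividing) rather than inserting the $p^{-1}$ factor at the outset. Your additional remark about conditioning on $\Omega$ and taking a union bound over the two high-probability events is a harmless, slightly more careful piece of bookkeeping that the paper leaves implicit.
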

\begin{proof}
By Lemma \ref{lemma: Omega 2 norm} and Lemma \ref{lemma: relation 2 norm}, we have
\begin{equation*}
\|\mathcal{P}_{\mathcal{\widehat{V}}}\mathcal{P}_{\Omega}\mathcal{P}_{\mathcal{\widehat{V}}}-p\mathcal{P}_{\mathcal{\widehat{V}}}\|<p\varepsilon,
\end{equation*}
and
\begin{equation*}
\|a^{-1}\mathcal{P}_\mathcal{\widehat{V}}\mathcal{P}_\Gamma\mathcal{P}_\mathcal{\widehat{V}}-\mathcal{P}_\mathcal{\widehat{V}}\mathcal{P}_\Omega\mathcal{P}_\mathcal{\widehat{V}}\|<\varepsilon.
\end{equation*}
So by triangle inequality, we have
\begin{equation*}
\begin{split}
&\ \ \ \ \|a^{-1}\mathcal{P}_\mathcal{\widehat{V}}\mathcal{P}_\Gamma\mathcal{P}_\mathcal{\widehat{V}}-p\mathcal{P}_\mathcal{\widehat{V}}\|\\&\le\|\mathcal{P}_{\mathcal{\widehat{V}}}\mathcal{P}_{\Omega}\mathcal{P}_{\mathcal{\widehat{V}}}-p\mathcal{P}_{\mathcal{\widehat{V}}}\|+\|a^{-1}\mathcal{P}_\mathcal{\widehat{V}}\mathcal{P}_\Gamma\mathcal{P}_\mathcal{\widehat{V}}-\mathcal{P}_\mathcal{\widehat{V}}\mathcal{P}_\Omega\mathcal{P}_\mathcal{\widehat{V}}\|\\
&<(p+1)\varepsilon.
\end{split}
\end{equation*}
That is
\begin{equation*}
\|(pa)^{-1}\mathcal{P}_\mathcal{\widehat{V}}\mathcal{P}_\Gamma\mathcal{P}_\mathcal{\widehat{V}}-\mathcal{P}_\mathcal{\widehat{V}}\|<(p^{-1}+1)\varepsilon.
\end{equation*}
\end{proof}

\begin{corollary}
\label{corollary: gamma v bound}
Let $\Pi=\mathcal{I}_0\cap\Omega_{obs}$, where $\mathcal{I}_0\sim\mbox{Ber}(p_1)$. Then with an
overwhelming probability
$\|\mathcal{P}_\Pi\mathcal{P}_{\mathcal{\widehat{V}}}\|^2\le(1-p_1)\varepsilon+p_1$,
provided that $1-p_1\ge C_0\varepsilon^{-2}(\mu r\log n_{(1)})/n$ for some
numerical constant $C_0>0$.
\end{corollary}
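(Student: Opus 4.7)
The plan is to bound $\|\mathcal{P}_\Pi \mathcal{P}_{\widehat{\mathcal{V}}}\|^2$ by first using the subspace inclusion $\Pi = \mathcal{I}_0 \cap \Omega_{obs} \subseteq \mathcal{I}_0$ to peel off the $\Omega_{obs}$ dependence, then applying Lemma~\ref{lemma: relation 2 norm} to the column complement $\mathcal{I}_0^\perp \sim \mbox{Ber}(1-p_1)$, and finally invoking the triangle inequality. This is in direct analogy with how Corollary~\ref{corollary: sum 2 norm} is proved, but the present situation is simpler because $\Omega_{obs}$ can be dropped at the outset.

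First I would observe that, viewed as subspaces, $\Pi \subseteq \mathcal{I}_0$, so $\mathcal{P}_\Pi \mathcal{P}_{\mathcal{I}_0} = \mathcal{P}_\Pi$ and, since $\|\mathcal{P}_\Pi\|\le 1$,
\begin{equation*}
\|\mathcal{P}_\Pi \mathcal{P}_{\widehat{\mathcal{V}}}\|^2 = \|\mathcal{P}_\Pi \mathcal{P}_{\mathcal{I}_0}\mathcal{P}_{\widehat{\mathcal{V}}}\|^2 \le \|\mathcal{P}_{\mathcal{I}_0} \mathcal{P}_{\widehat{\mathcal{V}}}\|^2 = \|\mathcal{P}_{\widehat{\mathcal{V}}} \mathcal{P}_{\mathcal{I}_0} \mathcal{P}_{\widehat{\mathcal{V}}}\|,
\end{equation*}
using self-adjointness of $\mathcal{P}_{\mathcal{I}_0}$ in the last step. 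Writing $\mathcal{P}_{\mathcal{I}_0} = I - \mathcal{P}_{\mathcal{I}_0^\perp}$ transforms the right-hand side into $\|\mathcal{P}_{\widehat{\mathcal{V}}} - \mathcal{P}_{\widehat{\mathcal{V}}}\mathcal{P}_{\mathcal{I}_0^\perp}\mathcal{P}_{\widehat{\mathcal{V}}}\|$, so it suffices to control this centered quantity.

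Next I would apply Lemma~\ref{lemma: relation 2 norm} in the degenerate case where $\mathcal{K}$ is the full index set, so that $\Omega$ is the whole matrix space and $\mathcal{R}' = I$, taking $\mathcal{J} = \mathcal{I}_0^\perp \sim \mbox{Ber}(1-p_1)$. In that case $\Psi = \mathcal{I}\cap\Omega = \mathcal{I}_0^\perp$, and the lemma yields, with overwhelming probability,
\begin{equation*}
\bigl\|(1-p_1)^{-1}\mathcal{P}_{\widehat{\mathcal{V}}}\mathcal{P}_{\mathcal{I}_0^\perp}\mathcal{P}_{\widehat{\mathcal{V}}} - \mathcal{P}_{\widehat{\mathcal{V}}}\bigr\| < \varepsilon,
\end{equation*}
i.e., $\|\mathcal{P}_{\widehat{\mathcal{V}}}\mathcal{P}_{\mathcal{I}_0^\perp}\mathcal{P}_{\widehat{\mathcal{V}}} - (1-p_1)\mathcal{P}_{\widehat{\mathcal{V}}}\| \le (1-p_1)\varepsilon$, under exactly the hypothesis $1-p_1 \ge C_0\varepsilon^{-2}(\mu r \log n_{(1)})/n$ assumed in the corollary.

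Finally, writing $\mathcal{P}_{\widehat{\mathcal{V}}} - \mathcal{P}_{\widehat{\mathcal{V}}}\mathcal{P}_{\mathcal{I}_0^\perp}\mathcal{P}_{\widehat{\mathcal{V}}} = p_1 \mathcal{P}_{\widehat{\mathcal{V}}} + \bigl((1-p_1)\mathcal{P}_{\widehat{\mathcal{V}}} - \mathcal{P}_{\widehat{\mathcal{V}}}\mathcal{P}_{\mathcal{I}_0^\perp}\mathcal{P}_{\widehat{\mathcal{V}}}\bigr)$ and applying the triangle inequality gives $\|\mathcal{P}_{\widehat{\mathcal{V}}} - \mathcal{P}_{\widehat{\mathcal{V}}}\mathcal{P}_{\mathcal{I}_0^\perp}\mathcal{P}_{\widehat{\mathcal{V}}}\| \le p_1 + (1-p_1)\varepsilon$, which chained with the first reduction produces the claimed bound $\|\mathcal{P}_\Pi \mathcal{P}_{\widehat{\mathcal{V}}}\|^2 \le (1-p_1)\varepsilon + p_1$. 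The only delicate point is legitimizing Lemma~\ref{lemma: relation 2 norm} in the boundary case $\mathcal{R}'=I$; but the proof of that lemma is a matrix Bernstein argument on the rank-one summands $\delta_j \mathcal{P}_{\widehat{\mathcal{V}}}(\,\cdot\,)e_je_j^*$ indexed by columns, and this calculation does not rely on the set $\mathcal{K}$ being proper, so no genuine obstacle arises and the rest is routine.
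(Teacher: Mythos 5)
Your proof is correct, and it reaches the paper's exact bound $(1-p_1)\varepsilon+p_1$ by a genuinely leaner route than the paper's. The paper applies Lemma~\ref{lemma: relation 2 norm} with $\mathcal{K}=\mathcal{K}_{obs}$ (so $\Psi=\Gamma=\mathcal{I}_0^\perp\cap\Omega_{obs}$), obtaining $\|(1-p_1)^{-1}\mathcal{P}_{\mathcal{\widehat{V}}}\mathcal{P}_\Gamma\mathcal{P}_{\mathcal{\widehat{V}}}-\mathcal{P}_{\mathcal{\widehat{V}}}\mathcal{P}_{\Omega_{obs}}\mathcal{P}_{\mathcal{\widehat{V}}}\|<\varepsilon$, then uses the decomposition $\mathcal{P}_\Gamma=\mathcal{P}_{\Omega_{obs}}-\mathcal{P}_\Pi$ (valid because, by \eqref{equ: commutative operators}, each $\omega_{ij}$ is supported in column $j$, so $\mathcal{P}_{\mathcal{I}_0}$ and $\mathcal{P}_{\Omega_{obs}}$ commute) to convert this into $\|\mathcal{P}_{\mathcal{\widehat{V}}}\mathcal{P}_\Pi\mathcal{P}_{\mathcal{\widehat{V}}}-p_1\mathcal{P}_{\mathcal{\widehat{V}}}\mathcal{P}_{\Omega_{obs}}\mathcal{P}_{\mathcal{\widehat{V}}}\|<(1-p_1)\varepsilon$, and closes with the triangle inequality together with $\|\mathcal{P}_{\mathcal{\widehat{V}}}\mathcal{P}_{\Omega_{obs}}\mathcal{P}_{\mathcal{\widehat{V}}}\|\le1$. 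You instead discard $\Omega_{obs}$ at the outset via the monotonicity step $\|\mathcal{P}_\Pi\mathcal{P}_{\mathcal{\widehat{V}}}\|\le\|\mathcal{P}_{\mathcal{I}_0}\mathcal{P}_{\mathcal{\widehat{V}}}\|$ --- which needs only the subspace inclusion $\Pi\subseteq\mathcal{I}_0$ (so that $\mathcal{P}_\Pi\mathcal{P}_{\mathcal{I}_0^\perp}=0$), not the commutativity identity --- and then invoke the same lemma in the degenerate instance $\mathcal{R}'=I$. That instantiation is legitimate: the lemma is stated for \emph{any} $\mathcal{K}$, and its Bernstein argument treats the indicators $\kappa_{ij}$ as fixed (the randomness is only in the column variables $\delta_j$), so taking all $\kappa_{ij}=1$ costs nothing; your only slip is descriptive, since the summands are $(a^{-1}\delta_j-1)\sum_i\kappa_{ij}\,\mathcal{P}_{\mathcal{\widehat{V}}}\omega_{ij}\otimes\mathcal{P}_{\mathcal{\widehat{V}}}\omega_{ij}$ rather than the rank-one form you quote, which is harmless. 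Since the paper ultimately bounds $\|\mathcal{P}_{\mathcal{\widehat{V}}}\mathcal{P}_{\Omega_{obs}}\mathcal{P}_{\mathcal{\widehat{V}}}\|$ crudely by $1$, retaining $\Omega_{obs}$ buys it nothing here, and your version is both shorter and less dependent on the basis-compatibility structure; the paper's form would only pay off if one sharpened that last bound to $p_0+\varepsilon$ via Lemma~\ref{lemma: Omega 2 norm}, yielding the slightly stronger $(1-p_1)\varepsilon+p_1(p_0+\varepsilon)$, which neither proof pursues.
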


\begin{proof}
Let $\Gamma=\mathcal{I}_0^\perp\cap\Omega_{obs}$. Note that $\mathcal{I}_0^\perp\sim\mbox{Ber}(1-p_1)$. By Lemma \ref{lemma: relation 2 norm}, we have
$\|(1-p_1)^{-1}\mathcal{P}_\mathcal{\widehat{V}}\mathcal{P}_\Gamma\mathcal{P}_\mathcal{\widehat{V}}-\mathcal{P}_\mathcal{\widehat{V}}\mathcal{P}_{\Omega_{obs}}\mathcal{P}_\mathcal{\widehat{V}}\|<\varepsilon$,
or equivalently
\begin{equation*}
\begin{split}
&\ \ \ \
\|(1-p_1)^{-1}\mathcal{P}_\mathcal{\widehat{V}}\mathcal{P}_\Gamma\mathcal{P}_\mathcal{\widehat{V}}-\mathcal{P}_\mathcal{\widehat{V}}\mathcal{P}_{\Omega_{obs}}\mathcal{P}_\mathcal{\widehat{V}}\|\\
&=(1-p_1)^{-1}\|\mathcal{P}_\mathcal{\widehat{V}}\mathcal{P}_\Gamma\mathcal{P}_\mathcal{\widehat{V}}-(1-p_1)\mathcal{P}_\mathcal{\widehat{V}}\mathcal{P}_{\Omega_{obs}}\mathcal{P}_\mathcal{\widehat{V}}\|\\
&=(1-p_1)^{-1}\|\mathcal{P}_\mathcal{\widehat{V}}\mathcal{P}_{\Omega_{obs}}\mathcal{P}_\mathcal{\widehat{V}}-\mathcal{P}_\mathcal{\widehat{V}}\mathcal{P}_{(\mathcal{I}_0^\perp\cap\Omega_{obs})}\mathcal{P}_\mathcal{\widehat{V}}-p_1\mathcal{P}_\mathcal{\widehat{V}}\mathcal{P}_{\Omega_{obs}}\mathcal{P}_\mathcal{\widehat{V}}\|\\
&=(1-p_1)^{-1}\|\mathcal{P}_\mathcal{\widehat{V}}\mathcal{P}_{(\mathcal{I}_0\cap\Omega_{obs})}\mathcal{P}_\mathcal{\widehat{V}}-p_1\mathcal{P}_\mathcal{\widehat{V}}\mathcal{P}_{\Omega_{obs}}\mathcal{P}_\mathcal{\widehat{V}}\|\\
&=(1-p_1)^{-1}\|\mathcal{P}_\mathcal{\widehat{V}}\mathcal{P}_\Pi\mathcal{P}_\mathcal{\widehat{V}}-p_1\mathcal{P}_\mathcal{\widehat{V}}\mathcal{P}_{\Omega_{obs}}\mathcal{P}_\mathcal{\widehat{V}}\|\\
&<\varepsilon.
\end{split}
\end{equation*}
Therefore, by the triangle inequality
\begin{equation*}
\begin{split}
&\ \ \ \ \|\mathcal{P}_\Pi\mathcal{P}_{\mathcal{\widehat{V}}}\|^2=\|\mathcal{P}_{\mathcal{\widehat{V}}}\mathcal{P}_\Pi\mathcal{P}_{\mathcal{\widehat{V}}}\|\le\|\mathcal{P}_\mathcal{\widehat{V}}\mathcal{P}_\Pi\mathcal{P}_\mathcal{\widehat{V}}-p_1\mathcal{P}_\mathcal{\widehat{V}}\mathcal{P}_{\Omega_{obs}}\mathcal{P}_\mathcal{\widehat{V}}\|+p_1\|\mathcal{P}_\mathcal{\widehat{V}}\mathcal{P}_{\Omega_{obs}}\mathcal{P}_\mathcal{\widehat{V}}\|\le(1-p_1)\varepsilon+p_1.
\end{split}
\end{equation*}
\end{proof}



\subsection{Proofs of Theorem \ref{theorem: complete bases of low-rank part}}
To prove Theorem \ref{theorem: complete bases of low-rank part}, the following matrix Chernoff Bound is invoked in our proof:
\begin{theorem}[Matrix Chernoff Bound~\cite{gittens2011tail}]
\label{theorem: matrix chernoff bound}
Consider a finite sequence $\{X_k\}\in\mathbb{R}^{d\times d}$ of independent, random, Hermitian matrices. Assume that
\begin{equation*}
0\le\lambda_{min}(X_k)\le\lambda_{max}(X_k)\le L.
\end{equation*}
Define $Y=\sum_k X_k$, and $\mu_r$ as the $r$th largest eigenvalue of the expectation $\mathbb{E}Y$, i.e., $\mu_r=\lambda_r(\mathbb{E}Y)$. Then
\begin{equation*}
\begin{split}
\mathbb{P}\left\{\lambda_r(Y)>(1-\epsilon)\mu_r\right\}&\ge 1-r\left[\frac{e^{-\epsilon}}{(1-\epsilon)^{1-\epsilon}}\right]^{\frac{\mu_r}{L}}\\&\ge 1-r e^{-\frac{\mu_r\epsilon^2}{2L}},
\end{split}
\end{equation*}
for $\epsilon\in[0,1)$.
\end{theorem}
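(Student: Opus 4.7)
The plan is to prove the stated bound in two stages: first reduce the tail bound on the $r$-th largest eigenvalue of the $d\times d$ sum $Y$ to a standard minimum-eigenvalue matrix Chernoff bound on an $r\times r$ compressed sum, and then prove that minimum-eigenvalue bound by the classical Laplace transform argument that relies on Lieb's concavity theorem. The compression step is what replaces the ambient dimension $d$ by $r$ in the prefactor of the final bound; it is elementary but essential, since the Laplace transform argument on $Y$ itself naturally produces a factor of $d$, not $r$.

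For the compression, let $A=\mathbb{E} Y$ and let $V\subset\mathbb{R}^d$ be any $r$-dimensional subspace spanned by eigenvectors of $A$ associated with its top $r$ eigenvalues $\lambda_1(A)\ge\cdots\ge\lambda_r(A)=\mu_r$; write $P$ for the orthogonal projection onto $V$ and define the compressed random operators $\widetilde{Y}:=PYP$ and $\widetilde{X}_k:=PX_kP$, each viewed as a Hermitian operator on the $r$-dimensional space $V$. Three elementary facts follow. (i) For $x\in V$ one has $Px=x$, so $x^*Yx=x^*\widetilde{Y}x$; Courant-Fischer then yields $\lambda_r(Y)=\max_{\dim W=r}\lambda_{\min}(P_W Y P_W|_W)\ge \lambda_{\min}(\widetilde{Y})$, hence $\{\lambda_r(Y)\le t\}\subseteq\{\lambda_{\min}(\widetilde{Y})\le t\}$. (ii) Because $V$ is $A$-invariant, the expectation $\mathbb{E}\widetilde{Y}=A|_V$ has smallest eigenvalue exactly $\mu_r$. (iii) Since $0\preceq X_k\preceq LI_d$ on $\mathbb{R}^d$, compression preserves both bounds: $0\preceq\widetilde{X}_k\preceq LI_V$. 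It therefore suffices to prove that for an $r\times r$ sum $\widetilde{Y}=\sum_k\widetilde{X}_k$ of independent positive semidefinite $L$-bounded Hermitian matrices one has $\mathbb{P}\{\lambda_{\min}(\widetilde{Y})\le (1-\epsilon)\mu_r\}\le r\bigl[\tfrac{e^{-\epsilon}}{(1-\epsilon)^{1-\epsilon}}\bigr]^{\mu_r/L}$.

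For this minimum-eigenvalue bound I would run the standard Laplace transform method. For $\theta<0$, Markov's inequality combined with $e^{\theta\lambda_{\min}(\widetilde{Y})}=\lambda_{\max}(e^{\theta\widetilde{Y}})\le\mbox{tr}(e^{\theta\widetilde{Y}})$ gives $\mathbb{P}\{\lambda_{\min}(\widetilde{Y})\le t\}\le e^{-\theta t}\,\mathbb{E}\mbox{tr}(e^{\theta\widetilde{Y}})$. Lieb's concavity theorem (via Tropp's subadditivity-of-cumulants lemma) yields $\mathbb{E}\mbox{tr}(e^{\theta\widetilde{Y}})\le\mbox{tr}\exp\bigl(\sum_k\log\mathbb{E}e^{\theta\widetilde{X}_k}\bigr)$. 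The spectral bound $0\preceq\widetilde{X}_k\preceq LI_V$ together with operator convexity of $x\mapsto e^{\theta x}$ on $[0,L]$ for $\theta<0$ gives the chord bound $e^{\theta\widetilde{X}_k}\preceq I_V+\tfrac{e^{\theta L}-1}{L}\widetilde{X}_k$; taking expectations and then applying $\log(I+M)\preceq M$ produces $\log\mathbb{E}e^{\theta\widetilde{X}_k}\preceq\tfrac{e^{\theta L}-1}{L}\mathbb{E}\widetilde{X}_k$. Because the scalar $(e^{\theta L}-1)/L$ is negative, the largest eigenvalue of $\tfrac{e^{\theta L}-1}{L}A|_V$ equals $\tfrac{e^{\theta L}-1}{L}\mu_r$, and the inequality $\mbox{tr}\exp(B)\le r\,e^{\lambda_{\max}(B)}$ then delivers $\mathbb{P}\{\lambda_{\min}(\widetilde{Y})\le t\}\le r\exp\bigl(-\theta t+\tfrac{e^{\theta L}-1}{L}\mu_r\bigr)$. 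Choosing $t=(1-\epsilon)\mu_r$ and optimizing at $\theta=L^{-1}\log(1-\epsilon)$ reproduces the first displayed bound, after which the elementary scalar inequality $(1-\epsilon)\log(1-\epsilon)+\epsilon\ge\epsilon^2/2$ on $[0,1)$ gives the second. The hard part is the invocation of Lieb's concavity theorem, which is the deep matrix-analytic ingredient powering all matrix Chernoff-type concentration; once it is available, the compression reduction and the scalar calculus are routine and jointly explain both the $r$ prefactor and the Chernoff-type rate in the stated bound.
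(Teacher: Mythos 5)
The paper does not actually prove this statement: it is imported verbatim from Gittens and Tropp~\cite{gittens2011tail}, so there is no internal proof to compare against. Your proposal is a correct and essentially complete reconstruction of the cited source's argument. The compression of $Y$ onto the top-$r$ eigenspace of $\mathbb{E}Y$ — which is exactly what replaces the ambient prefactor $d$ by $r$ and ensures $\lambda_{\min}(\mathbb{E}\widetilde{Y})=\mu_r$ via $A$-invariance of $V$ — followed by the matrix Laplace-transform method with Lieb's theorem (Tropp's subadditivity lemma), the chord bound, $\log(I+M)\preceq M$, the trace bound $\mbox{tr}\exp(B)\le r\,e^{\lambda_{\max}(B)}$, and the optimal choice $\theta=L^{-1}\log(1-\epsilon)$, is precisely how Gittens--Tropp obtain the first displayed inequality; your closing scalar estimate $(1-\epsilon)\log(1-\epsilon)+\epsilon\ge\epsilon^2/2$ on $[0,1)$ correctly yields the second. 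One small attribution error worth fixing: the chord bound $e^{\theta\widetilde{X}_k}\preceq I_V+\tfrac{e^{\theta L}-1}{L}\widetilde{X}_k$ does not follow from ``operator convexity'' of the exponential — the exponential is not operator convex — but from ordinary scalar convexity of $x\mapsto e^{\theta x}$ on $[0,L]$ combined with the standard transfer rule: a scalar inequality $f(x)\le g(x)$ valid on the spectrum, applied to a single Hermitian matrix, gives $f(\widetilde{X}_k)\preceq g(\widetilde{X}_k)$, and here the majorant is affine in $\widetilde{X}_k$. With that justification corrected, every step checks out, and the sign observation that $(e^{\theta L}-1)/L<0$ turns $\lambda_{\max}$ of the scaled mean into $\tfrac{e^{\theta L}-1}{L}\mu_r$ is exactly the point where the compressed minimum eigenvalue $\mu_r$ enters the exponent.
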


\comment{
Now let
\begin{equation}
\label{equ: f(B)}
f(L)=\frac{\|L\|_{2,\infty}^2}{\sigma_r(L)^2}.
\end{equation}
As preliminary, we also need the following bounds on $f(L)$:
\begin{lemma}
\label{lemma: f(B)}
For $f(L)$ defined as Eqn. \eqref{equ: f(B)}, we have
\begin{equation}
\label{equ: bound for f(L)}
\frac{r}{n}\le f(L)\le\frac{\sigma_1^2(L)}{\sigma_r^2(L)}\max_{ij}\|\mathcal{P}_{\mathcal{V}_L}e_ie_j^*\|_F^2,
\end{equation}
where $r$ and $\mathcal{V}_L$ are the rank and the right singular space, respectively.
\end{lemma}
\begin{proof}
It can be seen that
\begin{equation*}
f(L)=\frac{\|L\|_{2,\infty}^2}{\sigma_r(L)^2}\ge\frac{\|L\|_F^2/n}{\|L\|_F^2/r}=\frac{r}{n}.
\end{equation*}

For the upper bound, we have
\begin{equation*}
\begin{split}
f(L)&=\frac{\|L\|_{2,\infty}^2}{\sigma_r(L)^2}=\frac{\|U_L\Sigma_LV_L^Te_{i_0}e_1^*\|_F^2}{\sigma_r(L)^2}\\&\le\frac{\|U_L\Sigma_L\|^2\|V_L^Te_{i_0}e_1^*\|_F^2}{\sigma_r(L)^2}\\&\le\frac{\sigma_1^2(L)}{\sigma_r^2(L)}\max_{ij}\|\mathcal{P}_{\mathcal{V}_L}e_ie_j^*\|_F^2,
\end{split}
\end{equation*}
where the $i_0$th column of $L$ has the maximal $\ell_2$ norm among all columns.
\end{proof}
}

\begin{lemma}
\label{lemma: orth rank}
Let $X=U\Sigma V^T$ be the skinny SVD of matrix $X$.
For any set of coordinates $\Omega$ and any matrix $X\in\mathbb{R}^{m\times n}$, we have
$\mbox{rank}(X_{\Omega:})=\mbox{rank}(U_{\Omega:})$
and
$\mbox{rank}(X_{:\Omega})=\mbox{rank}(V_{\Omega:})$.
\end{lemma}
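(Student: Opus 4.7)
The plan is to reduce both claims to the elementary fact that multiplying a matrix by a full-rank factor (on the appropriate side) preserves its rank, and then to exploit the structure of the skinny SVD $X=U\Sigma V^T$, where $U\in\mathbb{R}^{m\times r}$ and $V\in\mathbb{R}^{n\times r}$ have orthonormal columns and $\Sigma\in\mathbb{R}^{r\times r}$ is diagonal with strictly positive entries, hence invertible.

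First I would observe the key factorizations obtained by restricting rows or columns:
\begin{equation*}
X_{\Omega:}=U_{\Omega:}\,\Sigma V^{T},\qquad X_{:\Omega}=U\Sigma\,(V_{\Omega:})^{T}.
\end{equation*}
These follow immediately from the definition of the skinny SVD, since selecting rows of $X$ amounts to selecting the corresponding rows of $U$ (while $\Sigma V^{T}$ is untouched), and symmetrically for columns.

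Next I would invoke the auxiliary fact: if $B\in\mathbb{R}^{r\times n}$ has rank $r$, then for any $A\in\mathbb{R}^{p\times r}$ one has $\mathrm{rank}(AB)=\mathrm{rank}(A)$. The standard justification is a two-sided sandwich: $\mathrm{rank}(AB)\le\mathrm{rank}(A)$ always holds, and since $B$ has full row rank it admits a right inverse $B^{\dag}$ with $BB^{\dag}=I_r$, whence $A=(AB)B^{\dag}$ yields $\mathrm{rank}(A)\le\mathrm{rank}(AB)$. A symmetric statement handles full-column-rank left factors.

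Applying this to the first identity: $\Sigma V^{T}\in\mathbb{R}^{r\times n}$ has rank $r$ (as $\Sigma$ is invertible and $V$ has orthonormal columns), so $\mathrm{rank}(X_{\Omega:})=\mathrm{rank}(U_{\Omega:}\,\Sigma V^{T})=\mathrm{rank}(U_{\Omega:})$. Applying the mirror version to the second identity: $U\Sigma\in\mathbb{R}^{m\times r}$ has rank $r$, so $\mathrm{rank}(X_{:\Omega})=\mathrm{rank}(U\Sigma\,(V_{\Omega:})^{T})=\mathrm{rank}((V_{\Omega:})^{T})=\mathrm{rank}(V_{\Omega:})$. This closes both claims. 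There is no real obstacle here; the only thing to be careful about is invoking the full-rank-factor cancellation on the correct side, which the skinny SVD structure makes transparent.
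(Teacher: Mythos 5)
Your proof is correct and follows essentially the same route as the paper's: both rest on the factorizations $X_{\Omega:}=U_{\Omega:}\Sigma V^{T}$ and $X_{:\Omega}=U\Sigma (V_{\Omega:})^{T}$ and then cancel the full-rank factor. The paper simply writes out your abstract cancellation lemma explicitly, multiplying by $V\Sigma^{-1}$ (resp.\ $\Sigma^{-1}U^{T}$), which is precisely the right (resp.\ left) inverse $B^{\dag}$ your argument invokes.
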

\begin{proof}
On one hand,
\begin{equation*}
X_{\Omega:}=I_{\Omega:}X=I_{\Omega:}U\Sigma V^T=U_{\Omega:}\Sigma V^T.
\end{equation*}
So $\mbox{rank}(X_{\Omega:})\le \mbox{rank}(U_{\Omega:})$. On the other hand, we have
\begin{equation*}
X_{\Omega:}V\Sigma^{-1}=U_{\Omega:}.
\end{equation*}
Thus $\mbox{rank}(U_{\Omega:})\le \mbox{rank}(X_{\Omega:})$. So $\mbox{rank}(X_{\Omega:})=\mbox{rank}(U_{\Omega:})$.

The second part of the argument can be proved similarly. Indeed, $X_{:\Omega}=U\Sigma V^TI_{:\Omega}=U\Sigma[V^T]_{:\Omega}$ and $\Sigma^{-1}U^TX_{:\Omega}=[V^T]_{:\Omega}$. So $\mbox{rank}(X_{:\Omega})=\mbox{rank}([V^T]_{:\Omega})=\mbox{rank}(V_{\Omega:})$, as desired.
\end{proof}

Now we are ready to prove Theorem \ref{theorem: complete bases of low-rank part}.
\begin{proof}
We investigate the smallest sampling parameter $d$ such that the sampled columns from $L_0=\mathcal{P}_{\mathcal{I}_0^\perp}M$ exactly span $\mbox{Range}(L_0)$ with an overwhelming probability.

Denote by $L_0=U\Sigma V^T$ the skinny SVD of $L_0$. Let $X=\sum_i\delta_i[V^T]_{:i}e_i^*$ be the random sampling of columns from matrix $V^T$, where $\delta_i\sim\mbox{Ber}(d/n)$. Define a positive semi-definite matrix
\begin{equation*}
Y=XX^*=\sum_{i=1}^n\delta_i[V^T]_{:i}[V^T]_{:i}^*.
\end{equation*}
Obviously, $\sigma_r(X)^2=\lambda_r(Y)$. To invoke the matrix Chernoff bound, we need to estimate $L$ and $\mu_r$ in Theorem \ref{theorem: matrix chernoff bound}. Specifically, since
\begin{equation*}
\begin{split}
\mathbb{E}Y&=\sum_{i=1}^n\mathbb{E}\delta_i[V^T]_{:i}[V^T]_{:i}^*\\&=\frac{d}{n}\sum_{i=1}^n[V^T]_{:i}[V^T]_{:i}^*\\&=\frac{d}{n}[V^T][V^T]^*,
\end{split}
\end{equation*}
we have $\mu_r=\lambda_r(\mathbb{E}Y)=d/n$. Furthermore, we also have
\begin{equation*}
\begin{split}
\lambda_{max}(\delta_i[V^T]_{:i}[V^T]_{:i}^*)=\|\delta_i[V^T]_{:i}\|_2^2\le\max_i\|V_{i:}\|_{2,\infty}^2=\frac{\mu r}{n}\triangleq L.
\end{split}
\end{equation*}
By matrix Chernoff bound,
\begin{equation*}
\begin{split}
\mathbb{P}\left\{\sigma_r(X)>0\right\}&\ge 1-r e^{-\frac{\mu_r}{2L}}\\
&=1-re^{-d/(2\mu r)}\\
&\ge 1-\delta,
\end{split}
\end{equation*}
we obtain
\begin{equation*}
d\ge2\mu r\log\left(\frac{r}{\delta}\right).
\end{equation*}
Note that $\sigma_r(X)>0$ implies that $\mbox{rank}([V^T]_{:l})=\mbox{rank}([L_0]_{:l})=r$, where the equality holds due to Lemma \ref{lemma: orth rank}. Also, $\mbox{Range}([L_0]_{:l})\subseteq\mbox{Range}(L_0)$. Thus $\mbox{Range}([L_0]_{:l})=\mbox{Range}(L_0)$.
\end{proof}

\subsection{Proofs of Claim \ref{theorem: relations}}
To prove Theorem \ref{theorem: relations}, the following proposition is crucial throughout our proof.
\begin{proposition}
\label{prop:solution to relaxed noiseless LRR}
The solution to the optimization problem:
\begin{equation}
\label{equ:relaxed noiseless LRR} \min_Z \|Z\|_*, \ \ \mbox{s.t.}
\ \ L=LZ,
\end{equation}
is unique and given by $Z^*=V_LV_L^T$, where $U_L\Sigma_L V_L^T$
is the skinny SVD of $L$.
\end{proposition}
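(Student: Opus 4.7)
The plan is to deduce both optimality and uniqueness of $Z^* = V_L V_L^T$ from two clean facts: any feasible $Z$ satisfies $V_L^T Z = V_L^T$, and the multiplicative inequality $\|AB\|_* \le \|A\|_* \|B\|$ applied to $A=Z$, $B=V_L$. Feasibility of $Z^*$ is immediate since $LZ^* = U_L\Sigma_L V_L^T V_L V_L^T = L$. For any feasible $Z$, the identity $U_L\Sigma_L V_L^T Z = U_L\Sigma_L V_L^T$ combined with the full column rank of $U_L\Sigma_L$ gives $V_L^T Z = V_L^T$; multiplying on the right by $V_L$ yields $V_L^T Z V_L = I_r$.

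I will then let $V_L^c$ be an orthonormal basis of the orthogonal complement of $\mathrm{Range}(V_L)$, and set $X = (V_L^c)^T Z V_L$, so that $Z V_L = V_L V_L^T Z V_L + V_L^c (V_L^c)^T Z V_L = V_L + V_L^c X$. A direct computation gives $(ZV_L)^T(ZV_L) = I_r + X^T X$, so $\sigma_i(ZV_L) = \sqrt{1+\sigma_i(X)^2}$ and therefore $\|ZV_L\|_* = \sum_{i=1}^r \sqrt{1 + \sigma_i(X)^2} \ge r$, with equality iff $X = 0$. Since $\|V_L\| = 1$, the standard inequality $\|ZV_L\|_* \le \|Z\|_*\|V_L\|$ yields $\|Z\|_* \ge \|ZV_L\|_* \ge r = \|Z^*\|_*$, proving optimality.

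For uniqueness, suppose $\|Z\|_* = r$. Then both inequalities above must be equalities. The equality $\|ZV_L\|_* = r$ forces $X=0$, so $ZV_L = V_L$. The equality $\|Z\|_* = \|ZV_L\|_*$, together with the bounds $\sigma_i(ZV_L) \le \sigma_i(Z)$ for $i \le r$, forces $\sigma_i(Z) = 0$ for $i > r$, so $Z$ has rank at most $r$; combined with $\mathrm{rank}(Z) \ge \mathrm{rank}(ZV_L) = r$, the rank of $Z$ equals exactly $r$. Hence the column space of $Z$ is $r$-dimensional and contains the column space of $ZV_L = V_L$, so it equals $\mathrm{Range}(V_L)$. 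Every column of $ZV_L^c$ therefore lies in $\mathrm{Range}(V_L)$; however, from $V_L^T Z = V_L^T$ we also get $V_L^T Z V_L^c = V_L^T V_L^c = 0$, placing those columns in $\mathrm{Range}(V_L)^\perp$. Thus $ZV_L^c = 0$, and combining with $ZV_L = V_L$ gives $Z = Z(V_L V_L^T + V_L^c (V_L^c)^T) = V_L V_L^T = Z^*$.

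The main obstacle will be the uniqueness step: the first-order subgradient inequality at $Z^*$ only controls the component of the perturbation in $\mathcal{T}^\perp$ and does not preclude ``horizontal'' perturbations of the form $P_{V_L^\perp} H P_{V_L}$, whose row space sits inside $V_L$. The rank-counting argument in the last step is exactly what removes this residual ambiguity, and factoring the whole proof through the contraction inequality $\|ZV_L\|_* \le \|Z\|_*$ seems to me the cleanest way to close both the optimality and uniqueness claims simultaneously.
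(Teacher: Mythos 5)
Your proof is correct, and the comparison here is unusual: the paper never actually proves Proposition \ref{prop:solution to relaxed noiseless LRR} --- the proof environment printed beneath it contains the proof of Claim \ref{theorem: relations} (the mutual expressibility of solutions), and the proposition itself is imported without argument from the LRR literature. So your blind attempt supplies a self-contained derivation of a fact the paper only cites. Every step checks out: feasibility of $Z^*=V_LV_L^T$ is immediate; $LZ=L$ together with the full column rank of $U_L\Sigma_L$ gives $V_L^TZ=V_L^T$ and hence $V_L^TZV_L=I_r$; the Gram computation $(ZV_L)^T(ZV_L)=I_r+X^TX$ with $X=(V_L^c)^TZV_L$ correctly yields $\|ZV_L\|_*=\sum_{i=1}^r\sqrt{1+\sigma_i(X)^2}\ge r$ with equality iff $X=0$; and $\|ZV_L\|_*\le\|Z\|_*\|V_L\|=\|Z\|_*$ is the standard contraction. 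In the equality case, your use of $\sigma_i(ZV_L)\le\sigma_i(Z)$ to force $\sigma_i(Z)=0$ for $i>r$, combined with $\mathrm{rank}(Z)\ge\mathrm{rank}(ZV_L)=r$, pins $\mathrm{Range}(Z)=\mathrm{Range}(V_L)$, after which $V_L^TZV_L^c=V_L^TV_L^c=0$ kills $ZV_L^c$ and gives $Z=V_LV_L^T$. For contrast, the standard argument in the LRR literature changes basis to $[V_L,\ V_L^c]$, writes any feasible $Z$ in block form with top block row $[\,I_r\ \ 0\,]$, and invokes a block nuclear-norm inequality with its equality conditions; your route through the single inequality $\|ZV_L\|_*\le\|Z\|_*$ plus rank counting buys the same conclusion with only elementary singular-value inequalities, and it settles optimality and uniqueness in one stroke rather than tracking equality cases through block-matrix bookkeeping. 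Your closing remark is also apt: a first-order subgradient argument at $Z^*$ alone would leave the ``horizontal'' perturbations $\mathcal{P}_{V_L^\perp}HP_{V_L}$ uncontrolled, and your rank-counting step is precisely what eliminates them.
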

\begin{proof}
We only prove the former part of the theorem. The proofs for the latter part of the theorem are similar. Suppose that $(L^*,S^*)$ is a solution to problem \eqref{equ: original robust MC wrt any basis 1}, while $(L^*(L^*)^\dag,L^*,S^*)$ is not optimal to problem \eqref{equ: MD-R-LRR}. So there exists an optimal solution to \eqref{equ: MD-R-LRR}, termed $(Z_*,L_*,S_*)$, which is strictly better than $(L^*(L^*)^\dag,L^*,S^*)$. Namely,
\begin{equation*}
\|Z_*\|_*+\lambda\|S_*\|_{2,1}<\|L^*(L^*)^\dag\|_*+\lambda\|S^*\|_{2,1},
\end{equation*}
\begin{equation*}
L_*=L_*Z_*,\quad \mathcal{R}(M)=\mathcal{R}(L_*+S_*).
\end{equation*}
Fixing $L$ and $S$ as $L_*$ and $S_*$ in \eqref{equ: MD-R-LRR}, respectively, and by Proposition \ref{prop:solution to relaxed noiseless LRR}, we have
\begin{equation*}
\begin{split}
\|Z_*\|_*+\lambda\|S_*\|_{2,1}&=\|V_{L_*}V_{L_*}^T\|_*+\lambda\|S_*\|_{2,1}\\&=\mbox{rank}(L_*)+\lambda\|S_*\|_{2,1}.
\end{split}
\end{equation*}
Furthermore, by the property of Moore-Penrose pseudo-inverse,
\begin{equation*}
\|L^*(L^*)^\dag\|_*+\lambda\|S^*\|_{2,1}=\mbox{rank}(L^*)+\lambda\|S^*\|_{2,1}.
\end{equation*}
Thus
\begin{equation*}
\begin{split}
\mbox{rank}(L_*)+\lambda\|S_*\|_{2,1}<\mbox{rank}(L^*)+\lambda\|S^*\|_{2,1},\\ \ \ 
L_*=L_*Z_*,\quad \mathcal{R}(M)=\mathcal{R}(L_*+S_*),
\end{split}
\end{equation*}
which is contradictory to the optimality of $(L^*,S^*)$ to problem \eqref{equ: original robust MC wrt any basis 1}. So $(L^*(L^*)^\dag,L^*,S^*)$ is optimal to problem \eqref{equ: MD-R-LRR}.
\end{proof}

\subsection{Proofs of Lemma \ref{lemma: Omega 2 norm}}
\label{Sec: Proofs of Lemma}
Now we are prepared to prove Lemma \ref{lemma: Omega 2 norm}.
\begin{proof}
For any matrix $X$, we have
\begin{equation*}
\mathcal{P}_{\mathcal{X}}X=\sum_{ij}\langle \mathcal{P}_{\mathcal{X}}X,\omega_{ij}\rangle \omega_{ij},
\end{equation*}
where $\mathcal{X}$ is $\mathcal{\widehat{V}}$ or $\mathcal{\widetilde{T}}$.
Thus $\mathcal{R}'\mathcal{P}_{\mathcal{X}}X=\sum_{ij}\kappa_{ij}\langle \mathcal{P}_{\mathcal{X}}X,\omega_{ij}\rangle \omega_{ij}$, where $\kappa_{ij}$s are i.i.d. Bernoulli variables with parameter $p$. Then
\begin{equation*}
\begin{split}
\mathcal{P}_{\mathcal{X}}\mathcal{R}'\mathcal{P}_{\mathcal{X}}X=\sum_{ij}\kappa_{ij}\langle \mathcal{P}_{\mathcal{X}}X,\omega_{ij}\rangle \mathcal{P}_{\mathcal{X}}(\omega_{ij})=\sum_{ij}\kappa_{ij}\langle X,\mathcal{P}_{\mathcal{X}}(\omega_{ij})\rangle \mathcal{P}_{\mathcal{X}}(\omega_{ij}).
\end{split}
\end{equation*}
Namely, $\mathcal{P}_{\mathcal{X}}\mathcal{R}'\mathcal{P}_{\mathcal{X}}=\sum_{ij}\kappa_{ij}\mathcal{P}_{\mathcal{X}}(\omega_{ij})\otimes\mathcal{P}_{\mathcal{X}}(\omega_{ij})$. Similarly, $\mathcal{P}_{\mathcal{X}}=\sum_{ij}\mathcal{P}_{\mathcal{X}}(\omega_{ij})\otimes\mathcal{P}_{\mathcal{X}}(\omega_{ij})$. So we obtain
\begin{equation*}
\begin{split}
\left\Vert p^{-1}\mathcal{P}_{\mathcal{X}}\mathcal{R}'\mathcal{P}_{\mathcal{X}}-\mathcal{P}_{\mathcal{X}}\right\Vert&=\left\Vert\sum_{ij}(p^{-1}\kappa_{ij}-1)\mathcal{P}_{\mathcal{X}}(\omega_{ij})\otimes\mathcal{P}_{\mathcal{X}}(\omega_{ij})\right\Vert\\&\triangleq\left\Vert\sum_{ij}X_{ij}\right\Vert,
\end{split}
\end{equation*}
where $X_{ij}=(p^{-1}\kappa_{ij}-1)\mathcal{P}_{\mathcal{X}}(\omega_{ij})\otimes\mathcal{P}_{\mathcal{X}}(\omega_{ij})$ is a zero-mean random variable.

To use Lemma \ref{theorem: Operator-Bernstein}, we need to work out $M$ and $L$ therein. Note that
\begin{equation*}
\begin{split}
\|X_{ij}\|&=\|(p^{-1}\kappa_{ij}-1)\mathcal{P}_{\mathcal{X}}(\omega_{ij})\otimes\mathcal{P}_{\mathcal{X}}(\omega_{ij})\|\\
&\le|p^{-1}\kappa_{ij}-1\||\mathcal{P}_{\mathcal{X}}(\omega_{ij})\otimes\mathcal{P}_{\mathcal{X}}(\omega_{ij})\|\\
&\le\max\{p^{-1}-1,1\}\|\mathcal{P}_{\mathcal{X}}(\omega_{ij})\|_F^2\\
&\le \frac{c\mu r}{n_{(2)}p}\\
&\triangleq L.
\end{split}
\end{equation*}
Furthermore,
\begin{equation*}
\begin{split}
\left\Vert\sum_{ij}\mathbb{E}\left[X_{ij}X_{ij}^*\right]\right\Vert&=\left\Vert\sum_{ij}\mathbb{E}\left[X_{ij}^*X_{ij}\right]\right\Vert\\
&=\left\Vert\sum_{ij}\hspace{-0.1cm}\mathbb{E}\hspace{-0.1cm}\left(\frac{\kappa_{ij}-p}{p}\right)^2\hspace{-0.15cm}\left[\mathcal{P}_{\mathcal{X}}(\omega_{ij})\hspace{-0.05cm}\otimes\hspace{-0.05cm}\mathcal{P}_{\mathcal{X}}(\omega_{ij})\right]\hspace{-0.05cm}\left[\mathcal{P}_{\mathcal{X}}(\omega_{ij})\hspace{-0.05cm}\otimes\hspace{-0.05cm}\mathcal{P}_{\mathcal{X}}(\omega_{ij})\right]\right\Vert\\
&=(p^{-1}-1)\left\Vert\sum_{ij}\left\Vert\mathcal{P}_{\mathcal{X}}(\omega_{ij})\right\Vert_F^2\mathcal{P}_{\mathcal{X}}(\omega_{ij})\otimes\mathcal{P}_{\mathcal{X}}(\omega_{ij})\right\Vert\\
&\le \frac{c\mu r}{n_{(2)}p}\left\Vert\sum_{ij}\mathcal{P}_{\mathcal{X}}(\omega_{ij})\otimes\mathcal{P}_{\mathcal{X}}(\omega_{ij})\right\Vert\\
&=\frac{c\mu r}{n_{(2)}p}\left\Vert\mathcal{P}_{\mathcal{X}}\right\Vert\\
&=\frac{c\mu r}{n_{(2)}p}\triangleq M.
\end{split}
\end{equation*}
Since $M/c=1>\epsilon$, by Lemma \ref{theorem: Operator-Bernstein}, we have
\begin{equation*}
\begin{split}
&\ \ \ \ \mathbb{P}\{\|p^{-1}\mathcal{P}_{\mathcal{X}}\mathcal{R}'\mathcal{P}_{\mathcal{X}}-\mathcal{P}_{\mathcal{X}}\|<\epsilon\}\\&\le 2mn\exp\left(-\frac{3\epsilon^2}{8M}\right)\\
&=2mn\exp\left(-\frac{3\epsilon^2n_{(2)}p}{8c\mu r}\right)\\
&\triangleq2mn\exp\left(-\frac{C\epsilon^2n_{(2)}p}{\mu r}\right)\\
&\le2mn\exp\left(-CC_0\log n_{(1)}\right)\\
&=2n^{-CC_0+2},
\end{split}
\end{equation*}
where the second inequality holds once we have $p\ge C_0\epsilon^{-2}(\mu r\log n_{(1)})/n_{(2)}$. So the proof is completed.
\end{proof}

\subsection{Proofs of Lemma \ref{lemma: Omega infty infty norm}}
\label{Sec: Proofs of Lemma 12}
We proceed to prove Lemma \ref{lemma: Omega infty infty norm}.
\begin{proof}
From the definition of operator $R'$, we know that
\begin{equation*}
\mathcal{R'}(Z)=\sum_{ij\in\mathcal{K}}\langle Z,\omega_{ij}\rangle\omega_{ij}=\sum_{ij}\delta_{ij}\langle Z,\omega_{ij}\rangle\omega_{ij},
\end{equation*}
where $\delta_{ij}$s are i.i.d. Bernoulli variables with parameter $p$. Notice that $Z\in\mathcal{\widetilde{T}}$, so we have
\begin{equation*}
Z-p^{-1}\mathcal{P}_{\mathcal{\widetilde{T}}}\mathcal{R'}Z=\sum_{ij}(1-p^{-1}\delta_{ij})\langle Z,\omega_{ij}\rangle\mathcal{P}_\mathcal{\widetilde{T}}\omega_{ij},
\end{equation*}
and
\begin{equation*}
\langle Z-p^{-1}\mathcal{P}_{\mathcal{\widetilde{T}}}\mathcal{R'}Z,\omega_{ab}\rangle=\sum_{ij}(1-p^{-1}\delta_{ij})\langle Z,\omega_{ij}\rangle\langle\mathcal{P}_\mathcal{\widetilde{T}}\omega_{ij},\omega_{ab}\rangle.
\end{equation*}
We now want to invoke the scalar Bernstein inequality. Let $X_{ij}=(1-p^{-1}\delta_{ij})\langle Z,\omega_{ij}\rangle\langle\mathcal{P}_\mathcal{\widetilde{T}}\omega_{ij},\omega_{ab}\rangle$ with zero mean.
\begin{equation*}
\begin{split}
|X_{ij}|&=|(1-p^{-1}\delta_{ij})\langle Z,\omega_{ij}\rangle\langle\mathcal{P}_\mathcal{\widetilde{T}}\omega_{ij},\omega_{ab}\rangle|\\
&\le|(1-p^{-1}\delta_{ij})|\max_{ij}\left|\langle Z,\omega_{ij}\rangle\right|\left\Vert\mathcal{P}_\mathcal{\widetilde{T}}\omega_{ij}\right\Vert_F\left\Vert\mathcal{P}_\mathcal{\widetilde{T}}\omega_{ab}\right\Vert_F\\
&\le\frac{2\mu r}{n_{(2)}p}\max_{ab}\left|\langle Z,\omega_{ab}\rangle\right|\\
&\triangleq L.
\end{split}
\end{equation*}
Furthermore,
\begin{equation*}
\begin{split}
\sum_{ij}\mathbb{E}X_{ij}^2&=\sum_{ij}\mathbb{E}(1-p^{-1}\delta_{ij})^2\langle Z,\omega_{ij}\rangle^2\langle\mathcal{P}_\mathcal{\widetilde{T}}\omega_{ij},\omega_{ab}\rangle^2\\
&=(p^{-1}-1)\sum_{ij}\langle Z,\omega_{ij}\rangle^2\langle\mathcal{P}_\mathcal{\widetilde{T}}\omega_{ij},\omega_{ab}\rangle^2\\
&=(p^{-1}-1)\max_{ij}\langle Z,\omega_{ij}\rangle^2\sum_{ij}\langle\omega_{ij},\mathcal{P}_\mathcal{\widetilde{T}}\omega_{ab}\rangle^2\\
&=(p^{-1}-1)\max_{ij}\langle Z,\omega_{ij}\rangle^2\left\Vert\mathcal{P}_\mathcal{\widetilde{T}}\omega_{ab}\right\Vert_F^2\\
&\le\frac{2\mu r}{n_{(2)}p}\max_{ab}\langle Z,\omega_{ab}\rangle^2\\
&\triangleq M.
\end{split}
\end{equation*}
Since $M/L=\max_{ab}\left|\langle Z,\omega_{ab}\rangle\right|>\epsilon\max_{ab}\left|\langle Z,\omega_{ab}\rangle\right|$, by scalar Bernstein inequality, we obtain
\begin{equation*}
\begin{split}
\mathbb{P}\left\{\max_{ab}\left|\langle Z-p^{-1}\mathcal{P}_{\mathcal{\widetilde{T}}}\mathcal{R'}Z,\omega_{ab}\rangle\right|<\varepsilon\max_{ab}\left|\langle Z,\omega_{ab}\rangle\right|\right\}&\le2\exp\left\{\frac{-3\epsilon^2\max_{ab}\langle Z,\omega_{ab}\rangle^2}{8M}\right\}\\
&=2\exp\left\{\frac{-3\epsilon^2n_{(2)}p}{16\mu r}\right\}\\
&\le n_{(1)}^{-10},
\end{split}
\end{equation*}
provided that $p\ge C_0\epsilon^{-2}\mu r\log n_{(1)}/n_{(2)}$ for some numerical constant $C_0$.
\end{proof}

\subsection{Proofs of Lemma \ref{lemma: Omega 2 infty norm}}
\label{Sec: Proofs of Lemma 13}
We are prepared to prove Lemma \ref{lemma: Omega 2 infty norm}.
\begin{proof}
From the definition of operator $\mathcal{R}'$, we know that
\begin{equation*}
\mathcal{R'}(Z)=\sum_{ij\in\mathcal{K}}\langle Z,\omega_{ij}\rangle\omega_{ij}=\sum_{ij}\delta_{ij}\langle Z,\omega_{ij}\rangle\omega_{ij},
\end{equation*}
where $\delta_{ij}$s are i.i.d. Bernoulli variables with parameter $p$. So
\begin{equation*}
Z-p^{-1}\mathcal{R'}Z=\sum_{ij}(1-p^{-1}\delta_{ij})\langle Z,\omega_{ij}\rangle\omega_{ij}.
\end{equation*}
Let $X_{ij}=(1-p^{-1}\delta_{ij})\langle Z,\omega_{ij}\rangle\omega_{ij}$. To use the matrix Bernstein inequality, we need to bound $X_{ij}$ and its variance. To this end, note that
\begin{equation*}
\begin{split}
\|X_{ij}\|&=|1-p^{-1}\delta_{ij}|\ |\langle Z,\omega_{ij}\rangle|\ \|\omega_{ij}\|\\
&\le p^{-1}\|\omega_{ij}\|_F\max_{ij}|\langle Z,\omega_{ij}\rangle|\\
&=p^{-1}\max_{ij}|\langle Z,\omega_{ij}\rangle|\\
&\triangleq L.
\end{split}
\end{equation*}
Furthermore,
\begin{equation*}
\begin{split}
\left\Vert\sum_{ij}\mathbb{E}X_{ij}X_{ij}^*\right\Vert&=\left\Vert\sum_{ij}\mathbb{E}(1-p^{-1}\delta_{ij})^2\langle Z,\omega_{ij}\rangle^2\omega_{ij}\omega_{ij}^*\right\Vert\\
&\le p^{-1}\max_{ij}\langle Z,\omega_{ij}\rangle^2\left\Vert\sum_{ij}\omega_{ij}\omega_{ij}^*\right\Vert\\
&=p^{-1}\max_{ij}\langle Z,\omega_{ij}\rangle^2\left\Vert nI_m\right\Vert\\
&=\frac{n}{p}\max_{ij}\langle Z,\omega_{ij}\rangle^2.
\end{split}
\end{equation*}
Similarly,
\begin{equation*}
\left\Vert\sum_{ij}\mathbb{E}X_{ij}^*X_{ij}\right\Vert\le\frac{m}{p}\max_{ij}\langle Z,\omega_{ij}\rangle^2.
\end{equation*}
We now let $M=n_{(1)}\max_{ij}\langle Z,\omega_{ij}\rangle^2/p$ and set $t$ as $C_0'\sqrt{p^{-1}n_{(1)}\log n_{(1)}}\max_{ij}|\langle Z,\omega_{ij}\rangle|$. Since $M/L=n_{(1)}\max_{ij}|\langle Z,\omega_{ij}\rangle|>t$, by the matrix Bernstein inequality, we obtain
\begin{equation*}
\begin{split}
\mathbb{P}\left\{\|Z-p^{-1}\mathcal{R'}Z\|<C_0'\sqrt{\frac{n_{(1)}\log n_{(1)}}{p}}\max_{ij}\left|\langle Z,\omega_{ij}\rangle\right|\right\}&=\mathbb{P}\left\{\|Z-p^{-1}\mathcal{R'}Z\|<t\right\}\\
&=(m+n)\exp\left\{\frac{-3t^2}{8M}\right\}\\
&\le n_{(1)}^{-10}.
\end{split}
\end{equation*}
\end{proof}

\subsection{Proofs of Lemma \ref{lemma: relation 2 norm}}
\label{Sec: Proofs of Lemma 14}
We proceed to prove Lemma \ref{lemma: relation 2 norm}.
\begin{proof}
For any fixed matrix $Z$, it can be seen that
\begin{equation*}
\mathcal{R'}\mathcal{P}_\mathcal{\widehat{V}}Z=\sum_{ij\in\Omega_{obs}}\langle\mathcal{P}_\mathcal{\widehat{V}}Z,\omega_{ij}\rangle\omega_{ij}=\sum_{ij}\kappa_{ij}\langle Z,\mathcal{P}_\mathcal{\widehat{V}}\omega_{ij}\rangle\omega_{ij}.
\end{equation*}
Note that the operators $\mathcal{R'}$ and $\mathcal{P}_\mathcal{I}$ are commutative according to \eqref{equ: commutative operators}, thus we have
\begin{equation*}
\mathcal{P}_\mathcal{\widehat{V}}\mathcal{R'}\mathcal{P}_\mathcal{I}\mathcal{R'}\mathcal{P}_\mathcal{\widehat{V}}Z=\sum_{j}\delta_j\sum_{i}\kappa_{ij}\langle Z,\mathcal{P}_\mathcal{\widehat{V}}\omega_{ij}\rangle\mathcal{P}_\mathcal{\widehat{V}}\omega_{ij}.
\end{equation*}
Similarly, $\mathcal{P}_\mathcal{\widehat{V}}\mathcal{R'}\mathcal{P}_\mathcal{\widehat{V}}Z=\sum_{j}\sum_{i}\kappa_{ij}\langle Z,\mathcal{P}_\mathcal{\widehat{V}}\omega_{ij}\rangle\mathcal{P}_\mathcal{\widehat{V}}\omega_{ij}$, and so
\begin{equation*}
\begin{split}
(a^{-1}\mathcal{P}_\mathcal{\widehat{V}}\mathcal{R'}\mathcal{P}_\mathcal{I}\mathcal{R'}\mathcal{P}_\mathcal{\widehat{V}}-\mathcal{P}_\mathcal{\widehat{V}}\mathcal{R'}\mathcal{P}_\mathcal{\widehat{V}})Z=\sum_{j}(a^{-1}\delta_j-1)\sum_{i}\kappa_{ij}\langle Z,\mathcal{P}_\mathcal{\widehat{V}}\omega_{ij}\rangle\mathcal{P}_\mathcal{\widehat{V}}\omega_{ij}.
\end{split}
\end{equation*}
Namely,
\begin{equation*}
\begin{split}
a^{-1}\mathcal{P}_\mathcal{\widehat{V}}\mathcal{R'}\mathcal{P}_\mathcal{I}\mathcal{R'}\mathcal{P}_\mathcal{\widehat{V}}-\mathcal{P}_\mathcal{\widehat{V}}\mathcal{R'}\mathcal{P}_\mathcal{\widehat{V}}=\sum_{j}(a^{-1}\delta_j-1)\sum_{i}\kappa_{ij}\mathcal{P}_\mathcal{\widehat{V}}\omega_{ij}\otimes\mathcal{P}_\mathcal{\widehat{V}}\omega_{ij}.
\end{split}
\end{equation*}

We now plan to use concentration inequality. Let $X_j\triangleq(a^{-1}\delta_j-1)\sum_{i}\kappa_{ij}\mathcal{P}_\mathcal{\widehat{V}}\omega_{ij}\otimes\mathcal{P}_\mathcal{\widehat{V}}\omega_{ij}$. Notice that $X_j$ is zero-mean and self-adjoint. Denote the set $g=\{\|C_1\|_F\le1,C_2=\pm C_1\}$. Then we have
\begin{equation*}
\begin{split}
\|X_j\|&=\sup_g\langle C_1, X_j(C_2)\rangle=\sup_g\left|a^{-1}(\delta_j-a)\right|\left|\sum_i\kappa_{ij}\langle C_1, \mathcal{P}_\mathcal{\widehat{V}}(\omega_{ij})\rangle\langle C_2, \mathcal{P}_\mathcal{\widehat{V}}(\omega_{ij})\rangle\right|\triangleq|a^{-1}(\delta_j-a)|\sup_g|f(\delta_j)|.
\end{split}
\end{equation*}
According to \eqref{equ: commutative operators},
\begin{equation*}
\begin{split}
\|\mathcal{P}_\mathcal{\widehat{V}}C_1\|_{2,\infty}^2&=\max_j\sum_i\langle C_1,\omega_{ij}\widehat{V}\widehat{V}^*\rangle^2\\
&=\max_j\sum_i\langle G_j^*e_i^*C_1,e_j^*\widehat{V}\widehat{V}^*\rangle^2\\
&\le\max_j\sum_i\|e_i^*C_1\|_2^2\|e_j^*\widehat{V}\widehat{V}^*\|_2^2\\
&=\max_j\|C_1\|_F^2\|e_j^*\widehat{V}\widehat{V}^*\|_2^2\\
&\le\frac{\mu r}{n},
\end{split}
\end{equation*}
where $G_j$ is a unitary matrix. So we have
\begin{equation*}
\begin{split}
|f(\delta_j)|&\le \sum_i|\langle C_1,\mathcal{P}_\mathcal{\widehat{V}}(\omega_{ij})\rangle|\ |\langle C_2,\mathcal{P}_\mathcal{\widehat{V}}(\omega_{ij})\rangle|\\
&=\sum_i\langle C_1,\mathcal{P}_\mathcal{\widehat{V}}(\omega_{ij})\rangle^2\\
&\le \sum_i\langle \mathcal{P}_\mathcal{\widehat{V}}C_1, \omega_{ij}\rangle^2\\
&\le \|\mathcal{P}_\mathcal{\widehat{V}}C_1\|_{2,\infty}^2\\&\le\frac{\mu r}{n},
\end{split}
\end{equation*}
where the first identity holds since $C_2=\pm C_1$. Thus $\|X_j\|\le\mu ra^{-1}n^{-1}\triangleq L$. We now bound $\sum_j\left\Vert\mathbb{E}_{\delta_j}X_j^2\right\Vert$. Observe that
\begin{equation*}
\begin{split}
\|\mathbb{E}_{\delta_j}X_j^2\|\le\mathbb{E}_{\delta_j}\|X_j^2\|=\mathbb{E}_{\delta_j}\|X_j\|^2=\mathbb{E}_{\delta_j}a^{-2}(\delta_j-a)^2\sup_gf(\delta_j)^2,
\end{split}
\end{equation*}
where the last identity holds because $C_1,\ C_2$ and $\delta_j$ are separable.
Furthermore,
\begin{equation*}
\begin{split}
\sup_gf(\delta_j)^2&=\sup_g\left(\sum_i\kappa_{ij}\langle C_1, \mathcal{P}_\mathcal{\widehat{V}}(\omega_{ij})\rangle\langle C_2, \mathcal{P}_\mathcal{\widehat{V}}(\omega_{ij})\rangle\right)^2\\
&\le \sup_g\left(\sum_i|\langle C_1, \mathcal{P}_\mathcal{\widehat{V}}(\omega_{ij})\rangle\langle C_2, \mathcal{P}_\mathcal{\widehat{V}}(\omega_{ij})\rangle|\right)^2\\
&= \left(\sum_i\langle C_1, \mathcal{P}_\mathcal{\widehat{V}}(\omega_{ij})\rangle^2\right)^2\\
&\le \left(\sum_i\langle \mathcal{P}_\mathcal{\widehat{V}}C_1, \omega_{ij}\rangle^2\right)\left(\sum_i\langle \mathcal{P}_\mathcal{\widehat{V}}C_1, \omega_{ij}\rangle^2\right)\\
&\le \|\mathcal{P}_\mathcal{\widehat{V}}C_1\|_{2,\infty}^2\sum_i\langle \mathcal{P}_\mathcal{\widehat{V}}C_1, \omega_{ij}\rangle^2\\
&\le \frac{\mu r}{n}\sum_i\langle \mathcal{P}_\mathcal{\widehat{V}}C_1, \omega_{ij}\rangle^2.
\end{split}
\end{equation*}
Therefore,
\begin{equation*}
\begin{split}
\sum_j\left\Vert\mathbb{E}_{\delta_j}X_j^2\right\Vert&\le\mathbb{E}_{\delta_j}a^{-2}(\delta_j-a)^2\frac{\mu r}{n}\sum_{ij}\langle \mathcal{P}_\mathcal{\widehat{V}}C_1, \omega_{ij}\rangle^2\\
&=\frac{\mu r(1-a)}{na}\|\mathcal{P}_\mathcal{\widehat{V}}C_1\|_F^2\\
&\le\frac{\mu r}{na}\\
&\triangleq M.
\end{split}
\end{equation*}
Since $M/L=1>\epsilon$, by the matrix Bernstein inequality,
\begin{equation*}
\begin{split}
\mathbb{P}\left\{\left\Vert a^{-1}\mathcal{P}_\mathcal{\widehat{V}}\mathcal{R'}\mathcal{P}_\mathcal{I}\mathcal{R'}\mathcal{P}_\mathcal{\widehat{V}}-\mathcal{P}_\mathcal{\widehat{V}}\mathcal{R'}\mathcal{P}_\mathcal{\widehat{V}}\right\Vert<\varepsilon\right\}&=\mathbb{P}\left\{\left\Vert \sum_j X_j\right\Vert<\varepsilon\right\}\\
&\le(m+n)\exp\left\{\frac{-3\epsilon^2}{8M}\right\}\\
&=(m+n)\exp\left\{\frac{-3\epsilon^2na}{8\mu r}\right\}\\
&\le n_{(1)}^{-10},
\end{split}
\end{equation*}
provided that $a\ge C_0\varepsilon^{-2}(\mu r\log n_{(1)})/n$ for some numerical constant $C_0>0$.
\end{proof}

\end{appendix}

\begin{IEEEbiographynophoto}{Hongyang Zhang}
received the Master's degree in computer science from Peking University, Beijing, China in 2015. He is now a Ph.D. student in Machine Learning Department, Carnegie Mellon University, Pittsburgh, USA. His research interests include machine learning, statistics, and numerical optimization.
\end{IEEEbiographynophoto}

\begin{IEEEbiographynophoto}{Zhouchen Lin}
(M'00-SM'08) received the Ph.D. degree in applied mathematics from Peking University in 2000. He is currently a Professor with the Key Laboratory of Machine Perception, School of Electronics Engineering and Computer Science, Peking University. He is also a Chair Professor with Northeast Normal University. He was a Guest Professor with Shanghai Jiao Tong University, Beijing Jiaotong University, and Southeast University. He was also a Guest Researcher with the Institute of Computing Technology, Chinese Academy of Sciences. His research interests include computer vision, image processing, machine learning, pattern recognition, and numerical optimization. He was an area chair of CVPR 2014, ICCV 2015, NIPS 2015, AAAI 2016, IJCAI 2016, and CVPR 2016. He is an Associate Editor of the IEEE Transactions on Pattern Analysis and Machine Intelligence and the International Journal of Computer Vision.
\end{IEEEbiographynophoto}

\begin{IEEEbiographynophoto}{Chao Zhang}
(M'06) received the Ph.D. degree in electrical engineering from Beijing Jiaotong University, Beijing, China, in 1995. He was a Post-Doctoral Research Fellow with the National Laboratory on Machine Perception, Peking
University, Beijing, from 1995 to 1997. He has been an Associate Professor with the Key Laboratory of Machine Perception, School of Electronics Engineering and Computer Science, Peking University, since 1997. His current research interests include image processing, statistical pattern recognition, and visual recognition.
\end{IEEEbiographynophoto}

\end{document}